\tikzset{>= stealth'}
\tikzstyle{vertex}=[circle, draw,fill=gray!20, inner sep=0pt, minimum size=16pt]
\tikzstyle{square}=[rectangle, draw,fill=gray!20, inner sep=0pt, minimum size=16pt]
\tikzstyle{svertex}=[circle, draw,fill=gray!20, inner sep=0pt, minimum size=12pt]
\tikzstyle{ssquare}=[rectangle, draw,fill=gray!20, inner sep=0pt, minimum size=12pt]
\tikzstyle{tvertex}=[circle, draw,fill=gray!20, inner sep=0pt, minimum size=8pt]
\tikzstyle{tsquare}=[rectangle, draw,fill=gray!20, inner sep=0pt, minimum size=6pt]
\newtheorem{theorem}{Theorem}[section]
\newtheorem{lemma}[theorem]{Lemma}
\newtheorem{corollary}[theorem]{Corollary}
\newtheorem{definition}[theorem]{Definition}
\newcommand{\OPT}{\operatorname{OPT}}
\newcommand{\Scal}{\mathcal{S}}
\newcommand{\xvec}{\pmb{x}}
\newcommand{\yvec}{\pmb{y}}
\newcommand{\tw}{\operatorname{tw}}
\newcommand{\link}{\operatorname{link}}
\newcommand{\paths}{\mathcal{P}}
\newcommand{\vol}{\operatorname{vol}}
\newcommand{\xvol}{\operatorname{vol}_x}
\newcommand{\yvol}{\operatorname{vol}_y}
\newcommand{\rmax}{1/12}
\newcommand{\markov}{48}
\newcommand{\sepLP}{\operatorname{sep-LP}}
\newcommand{\sepOPT}{\lambda}
\newcommand{\flowLP}{\operatorname{flow-LP}}
\newcommand{\gvec}{\pmb{g}}
\newcommand{\fvec}{\pmb{f}}
\newcommand{\feas}{\mathcal{F}}
\newcommand{\xval}{\operatorname{LP-cost}}
\newcommand{\yval}{\operatorname{wt}}
\newcommand{\Hhat}{\hat{H}}
\newcommand{\width}{\operatorname{width}}
\newcommand{\half}{\frac{1}{2}}
\newcommand{\dvec}{\pmb{d}}
\newcommand{\algpart}{\textsc{Partition}}
\newcommand{\Xconst}{\gamma}
\newcommand{\Hcal}{\mathcal{H}}
\newcommand{\algclassic}{\textsc{Decomposition}}
\newenvironment{proofof}[1][x]{\noindent{\em Proof of #1.}}{\qed\newline}
\DeclareMathOperator{\poly}{poly}
\newcommand\E{\mathbb{E}}
\newcommand{\defcal}[1]{\expandafter\newcommand\csname c#1\endcsname{{\mathcal{#1}}}}
\newcommand{\defbb}[1]{\expandafter\newcommand\csname b#1\endcsname{{\mathbb{#1}}}}
\newcounter{calBbCounter}
    \edef\letter{\Alph{calBbCounter}}
\title{LP-Based Robust Algorithms for Noisy Minor-Free and Bounded Treewidth Graphs}
\author{Nikhil Bansal\thanks{Department of Mathematics and Computer Science, Eindhoven University of Technology, Netherlands.
Email:
\href{mailto:n.bansal@tue.nl}{n.bansal@tue.nl}.
Supported by NWO Vidi grant 639.022.211 and an ERC consolidator grant 617951.}
\and
Daniel Reichman\thanks{Electrical Engineering and Computer Science,
University of California at Berkeley,
Berkeley, CA 94709.
Email:
\href{mailto:daniel.reichman@gmail.com}{daniel.reichman@gmail.com}.}
\and
Seeun William Umboh\thanks{Department of Mathematics and Computer Science, Eindhoven University of Technology, Netherlands.
Email:
\href{mailto:seeun.umboh@gmail.com}{seeun.umboh@gmail.com}.
Supported by ERC consolidator grant 617951.}
}
\date{}
\begin{document}

\maketitle

\begin{abstract}
We give a general approach for solving optimization problems on noisy minor free graphs, where a $\delta$-fraction of edges and vertices are adversarially corrupted. The noisy setting was first considered by Magen and Moharrami and they gave a $(1 + \delta)$-estimation algorithm for the independent set problem. Later, Chan and Har-Peled designed a local search algorithm that finds a $(1 + O(\delta))$-approximate independent set. However, nothing was known regarding other problems in the noisy setting. Our main contribution is a general LP-based framework that yields a $(1 + O(\delta \log m \log \log m))$-approximation algorithm for noisy MAX-$k$-CSPs on $m$ clauses.

\end{abstract}

\pagenumbering{arabic}

\section{Introduction}
Several hard optimization problems often become substantially easier on special classes of graphs
such as planar graphs and bounded treewidth graphs. For example, while the maximum independent set problem is notoriously hard on general graphs \cite{hastad}, it admits an efficient approximation scheme on planar graphs \cite{LT,baker} and can be solved exactly in polynomial time on bounded treewidth graphs \cite{Bod}. Similarly, while MAX-$k$-SAT is APX-hard in general \cite{haastad2001some}, planar instances admit an efficient approximation scheme \cite{MotwaniKhanna} and bounded treewidth instances can be solved exactly in polynomial time \cite{freuder1990complexity,marx2007can}. 
In general, there has been extensive work done on designing better algorithms for special graph classes and several general techniques have been developed for this purpose. For problems on bounded treewidth graphs, several techniques based on dynamic programming and deep results from algorithmic graph minor theory and logic have been developed \cite{Fomin-book, Bod,DH08,Courcelle,eppstein2000diameter}. For problems on planar graphs, many surprising approximation guarantees
can be obtained based on decomposition approaches. One of the first decomposition approaches is based on the planar-separator theorem \cite{LT}. Later, Baker \cite{baker} developed a more versatile technique based on decomposition into $O(1)$-outerplanar graphs (which have bounded treewidth). For example, Khanna and Motwani \cite{MotwaniKhanna} used Baker's technique to obtain efficient approximation schemes for a wide variety of Constraint Satisfaction Problems (CSPs) such as MAX-SAT.

\medskip
\noindent {\bf Noisy graph models.} In this paper, we consider a natural question that was first studied by Magen and Moharrami \cite{MM}: What happens to these special graph classes when they are perturbed adversarially? For the maximum independent set problem, \cite{MM} considers the setting where an input graph $G$ on $n$ vertices is obtained from some (hidden) underlying planar graph $G_0$ by adding $\delta n$ arbitrary edges (these are called \emph{noisy} edges) for some small number $\delta >0$ and ask: how well can one approximate the maximum independent set (MIS) problem on $G$? More generally, one can consider the same question for other optimization problems that are easy on these special graph classes.

In this work, we consider MAX-$k$-SAT (for constant $k$) in the noisy setting. To introduce our noise model and to relate to the noisy graph model of \cite{MM}, we remind the reader of the definition of a factor graph: given a $k$-SAT formula $\Phi$, the \emph{factor graph} of $\Phi$ is a bipartite graph $H=(A,B)$ where $A$ contains a vertex for every variable appearing in $\Phi$, $B$ contains a vertex for every clause appearing in $\Phi$, and a clause-vertex $\phi$ is connected to a variable-vertex $x$ if and only if $x$ belongs to the clause $\phi$. In the noisy setting, the input formula $\Phi$ is given by an adversary who takes a planar $k$-SAT formula $\Phi_0$ (i.e.~$\Phi_0$'s factor graph is planar) with $n$ variables and $m$ clauses, and adds $\delta m$ clauses (each clauses contains exactly $k$ literals) to $\Phi_0$ (resulting in $\delta m$ vertices and $k \delta m = O(\delta m)$ edges being added to the factor graph of $\Phi_0$). We choose to focus on this noise model as it does not change the arity of the original formula $\Phi_0$. Our results carry over without much difficulty to more general noise models where one adds both vertices and edges to the factor graph of $\Phi_0$, so long as the total number of edges and vertices added is $\delta m$.

\medskip

\noindent {\bf Previous Work.}
Several previous works have considered approximation algorithms for MIS in the noisy setting. Magen and Moharrami presented an elegant argument showing that $\alpha(G)$ can be approximated to within a $(1+\epsilon)$ factor, for $\epsilon = \Omega(\delta)$, using $O(1/\epsilon)$ levels of the Sherali-Adams (SA) Hierarchy. Interestingly, this only yields an efficient {\em estimation} algorithm for $\alpha(G)$ and does not give any way to actually find the corresponding independent set.\footnote{Self-reducibility techniques such as those used for finding independent set in perfect graphs do not seem to work here as the estimation algorithm only gives an approximate answer.} In particular, the SA approach only uses the existence of $G_0$  to argue that $SA(G)\approx \alpha(G)$, without actually detecting the noisy edges. Later, Chan and Har-Peled \cite{ChanHP} developed a PTAS for planar independent set based on local search which can be used to obtain a $(1+\epsilon)$-approximation for MIS in noisy planar graphs. We are not aware of any previous work that studied approximation algorithms for noisy CSPs. 

\medskip
\noindent {\bf Limitations of Current Approaches.}
For MAX-$k$-SAT, classic approaches for planar instances based on computing decompositions of the factor graph break down completely in the presence of noise. In particular, the known algorithms for finding planar separators or decomposition into $O(1)$-outerplanar graphs inherently use the planar structure in various ways, and are easily tricked even with very few noisy edges.
In fact, a key motivation of \cite{MM} for studying the noisy setting was to design more ``robust" algorithms that are not specifically tailor-made for particular graph classes.
Note that in our noise model the adversary is even allowed to add a bounded
degree expander on some subset of $O(\delta n)$ vertices, completely destroying the planar structure and increasing the treewidth to $\Omega(n)$. 
Another natural approach might be to recover some planar graph $\tilde{G}$ from $G$ without removing too many edges, and then apply the algorithms for planar graphs to $\tilde{G}$.
However the best known guarantees for this Minimum Planarization problem are too weak for our purpose. In particular, even for bounded degree graphs these algorithms \cite{CMS,ChekuriS13} only achieve a $poly(n,\OPT)$
approximation, where $\OPT$ is the number of noisy edges, and thus only work in our setting when the noise parameter $\delta = n^{-\Omega(1)}$.

Finally, one may attempt to apply the local search algorithm of \cite{ChanHP}. However, the analysis of the local search algorithm of \cite{ChanHP} crucially relies on the existence of a decomposition based on the planar separator theorem known as $r$-divisions \cite{Frederickson87} and does not seem to apply to noisy instances of MAX-$k$-SAT.

\subsection{Our Contributions}
In this work, we give a general approach for solving problems on noisy planar graphs. Using this approach, we give the first algorithm that is able to handle noisy versions of planar MAX-$k$-SAT.

\begin{theorem}\label{thm:MAX-CUT}
  Let $\Phi_0$ be a planar $k$-SAT formula over $n$ variables and $m$ clauses, where $k$ is a constant independent of $n$ and $m$. Let $\Phi$ be a $k$-SAT formula obtained by adding $\delta m$ clauses to $\Phi_0$, for some $\delta > 0$. Then there is an algorithm such that given any $\epsilon > 0$, finds a $(1 + O(\epsilon + \delta \log m \log \log m))$-approximate assignment in time $m^{O(\log \log m)^2/\epsilon}$.
\end{theorem}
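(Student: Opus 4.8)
The plan is to reduce the theorem to a decomposition statement: it suffices to find a ``core'' set $\coll$ of clauses such that (a) the factor graph of the sub-formula $\Phi|_\coll$ has treewidth $O((\log\log m)^2/\epsilon)$, and (b) the number of discarded clauses $|\Phi\setminus\coll|$ is at most $O(\epsilon + \delta\log m\log\log m)\cdot m$. Given such a core, one solves MAX-$k$-SAT on $\Phi|_\coll$ optimally by dynamic programming over a tree decomposition (equivalently, by $O(\tw)$ levels of the Sherali--Adams hierarchy), in time $m^{O(\tw)}=m^{O((\log\log m)^2/\epsilon)}$, and outputs that assignment (extended arbitrarily on the remaining variables). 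Since the output satisfies every core clause that $\OPT$ does, it satisfies at least $\OPT - O((\epsilon+\delta\log m\log\log m)m)$ clauses; as a uniformly random assignment already satisfies a $(1-2^{-k})$-fraction of the clauses we have $\OPT=\Omega(m)$, so dividing gives the claimed $(1+O(\epsilon+\delta\log m\log\log m))$-approximation.

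\emph{Existence of a good core.} First I would check that such a core exists, using the hidden planar formula. Deleting the $\delta m$ noisy clauses leaves $\Phi_0$, whose factor graph is planar. Running BFS on this factor graph and, for a parameter $\rho=\Theta(1/\epsilon)$ and an offset $j$, deleting every clause-vertex in a BFS layer congruent to $j \bmod \rho$, breaks it into ``strips'' of fewer than $\rho$ consecutive layers; contracting everything before a strip to a single vertex shows each strip is a planar graph of BFS-depth $O(\rho)$, hence of treewidth $O(\rho)=O(1/\epsilon)$. Averaging over the $\Theta(\rho)$ offsets, some $j$ deletes at most $O(\epsilon m)$ clauses, so there is a core missing only $O((\delta+\epsilon)m)$ clauses and of treewidth $O(1/\epsilon)$. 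This is not yet an algorithm, because $\Phi_0$ is hidden, and (as the introduction notes) recovering planarity via minimum planarization is too lossy.

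\emph{Finding a core from the noisy graph.} The heart of the argument is to find such a core without knowing $\Phi_0$. I would set up an LP relaxation of $\Phi$ that couples the natural CSP assignment variables with flow/metric variables encoding a fractional decomposition of the factor graph --- morally a region-growing/balanced-separator LP in which the ``separator cost'' is the number of cut clauses, and which additionally carries a fractional clause-deletion budget (a Sherali--Adams lift to level $O((\log\log m)^2/\epsilon)$ layered on top, so that the rounded pieces can later be solved). The key property is \emph{robustness}: the integral decomposition from the previous paragraph is LP-feasible with a layering cost of $O(\epsilon m)$ and a deletion cost of $O(\delta m)$, and these can be charged to \emph{separate} terms of the LP objective, so the LP optimum is this cheap \emph{without the LP ever having to identify the noise}.

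\emph{Rounding, and the main obstacle.} Finally, round the fractional decomposition. The global layering term is rounded by a single random threshold/offset: by linearity its expected contribution to the cut equals the fractional layering cost $O(\epsilon m)$, with no logarithmic overhead --- this is the ``exact Baker'' part, and it is why the $\epsilon$-term stays uninflated. The deletion term is rounded by recursive region growing around the fractional deletion mass in the LP metric, carving the factor graph into clusters of polylogarithmic diameter; region growing costs an $O(\log m)$ factor, and the extra $O(\log\log m)$ factor --- together with the blow-up of the cluster treewidth to $O((\log\log m)^2/\epsilon)$ --- comes from the padded-decomposition and recursion-depth parameters available for planar (more generally, minor-free) metrics. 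The surviving clusters are low-diameter pieces of an essentially planar graph, hence have treewidth $O((\log\log m)^2/\epsilon)$, and the total number of cut clauses is $O(\epsilon m + \delta m\log m\log\log m)$, as needed. I expect the combination of the previous paragraph and this one to be the main obstacle: one must design a single LP that is simultaneously \emph{robust} (cheap purely because the hidden $\Phi_0$ exists, with the $\delta$- and $\epsilon$-costs decoupled) and \emph{roundable} into a genuine bounded-treewidth decomposition, and the rounding must absorb completely arbitrary noise into cut clauses at only polylogarithmic cost while keeping the cluster treewidth polylogarithmic --- reconciling the rigid planar structure of $\Phi_0$ with unrestricted noise is where the real work lies.
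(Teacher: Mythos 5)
Your proposal and the paper share some ingredients (LP relaxation, region growing, a $\log m\log\log m$ loss) but take fundamentally different routes, and your route has a gap that I don't think can be repaired in the form you describe.

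The paper proves Theorem~\ref{thm:MAX-CUT} as a short corollary of a self-contained combinatorial result, Theorem~\ref{thm:bti} (Bounded Treewidth Interdiction): given a graph $G$ and a target $w$, if \emph{some} edge set $F$ reduces the treewidth of $G$ to $w$, one can find in polynomial time an $F'$ with $|F'|=O(\log n\log\log n)|F|$ and $\tw(G-F')=O(w\log w)$. Crucially, this statement makes no reference to planarity at all; planarity enters only to certify, via Baker on the hidden $\Phi_0$, that a cheap $F$ \emph{exists}. Choosing $w=\Theta(\log m\log\log m/\epsilon)$ makes the Baker part of $F$ cost only $O(\epsilon m/(\log m\log\log m))$, so after the $O(\log m\log\log m)$ blow-up the $\epsilon$-term stays $O(\epsilon m)$ while the $\delta$-term becomes $O(\delta m\log m\log\log m)$ --- this is how the paper decouples $\epsilon$ from $\delta$, rather than through two separate LP terms with two separate roundings. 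The residual treewidth is $O(\log m(\log\log m)^2/\epsilon)$ and the $2^{O(\tw)}\poly(m)$ dynamic program gives exactly the stated $m^{O((\log\log m)^2/\epsilon)}$ time. Your alternative accounting (aiming for treewidth $O((\log\log m)^2/\epsilon)$ and paying $m^{O(\tw)}$) arrives at the same running time, but the paper never achieves --- and never needs --- a $\log m$-free treewidth bound.

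The gap in your proposal is in the rounding step. You assert that ``the surviving clusters are low-diameter pieces of an essentially planar graph, hence have treewidth $O((\log\log m)^2/\epsilon)$'' and you invoke padded decompositions for planar/minor-free metrics. But after adding $\delta m$ adversarial clauses the factor graph is not planar, not minor-free, and can contain bounded-degree expanders; a low-diameter piece of the noisy graph can have treewidth $\Omega(m)$, and the shortest-path metric induced by the LP deletion variables has no reason to be a planar metric (the LP does not know which edges are noisy, and even on an optimal LP solution the noisy edges need not be assigned length $1$). So the planar-metric padded-decomposition machinery you lean on simply does not apply. The paper sidesteps this entirely: it never argues that any cluster is planar. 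Instead the LP of Figure~\ref{fig:lpexp} enforces that every vertex set $S$ admit a small \emph{fractional balanced vertex separator} once edges are fractionally deleted, and the rounding (region growing with Seymour's recursion, Section~\ref{sec:algpart}) builds a tree decomposition directly from these separators; the $O(w\log w)$ bag size comes from the spreading-metric separator analysis, not from planarity of clusters. Relatedly, you fold the CSP assignment variables and a Sherali--Adams lift into the same LP as the decomposition variables; the paper keeps the two cleanly separated (decompose first, then solve exactly on bounded treewidth), which is what makes the certificate-of-existence argument work without the LP ever ``seeing'' planarity. Finally, even granting the LP you describe, the paper needs nontrivial machinery (Round-or-Separate, a duality-based reformulation onto the $x$-variables, and a nested Ellipsoid argument, Section~\ref{sec:LP}) to handle its doubly-exponential-size LP; your proposal treats solvability as routine, which is a second, independent gap.
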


\noindent{\em Remark:}
Since MAX-$k$-SAT can be approximated within a constant factor, Theorem~\ref{thm:MAX-CUT} is of interest when $\delta=O(1/(\log m \log \log m))$. Theorem~\ref{thm:MAX-CUT} can also be extended to arbitrary (binary) $k$-CSPs.

For the maximum independent set problem, our approach gives an LP-based approximation algorithm with much better running time and whose approximation ratio has a better dependence on $\delta$. \begin{theorem}
  \label{thm:MIS}
  Let $G$ be an $n$-vertex graph obtained by adding $\delta n$ arbitrary edges to some planar graph $G_0$, for some $\delta>0$.
Then there is an algorithm such that given any $\epsilon > 0$, finds an independent set of size within a $(1 + O(\epsilon + \delta))$ factor of $\alpha(G)$, and runs in time $n^{O(1/\epsilon^4)}$.
\end{theorem}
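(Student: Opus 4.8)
The plan is to follow the same LP-based framework that underlies Theorem~\ref{thm:MAX-CUT}, but to exploit the special structure of independent set to obtain a cleaner recursion. Let $G = G_0 + F$ where $G_0$ is planar and $|F| \le \delta n$. First I would write down the natural LP relaxation for MIS on $G$ (the fractional relaxation with $x_v \ge 0$ and $x_u + x_v \le 1$ for every edge $uv$), whose optimum $\OPT_f$ is at least $\alpha(G)$; the goal is to round a near-optimal LP solution into an integral independent set losing only a $(1+O(\eps+\delta))$ factor. The key observation is that the set $V(F)$ of endpoints of noisy edges has size at most $2\delta n$, so if we simply delete $V(F)$ we are left with a planar graph $G_0' \subseteq G_0$, on which MIS admits a PTAS; the loss from deletion can be charged against the LP, provided the LP does not place too much mass on $V(F)$. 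So the first substantive step is a preprocessing/reweighting argument showing we may assume the LP mass on $V(F)$ is $O(\delta)\cdot \OPT_f$ — otherwise a direct accounting already gives what we want, or we recurse on a smaller instance.

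The heart of the argument is a recursive decomposition in the spirit of Baker's technique, made robust to noise. I would take a BFS-layering of the planar part $G_0$ and, for a random shift $j \in \{0,\dots,1/\eps-1\}$, remove every layer congruent to $j \bmod 1/\eps$; in expectation this removes an $\eps$-fraction of the LP value, and it decomposes $G_0$ into $(1/\eps)$-outerplanar pieces, each of treewidth $O(1/\eps)$. The noisy edges $F$ are then distributed among these pieces or between them. For the pieces where the induced noisy subgraph is ``light'' we can afford to delete its vertices and solve the remaining bounded-treewidth instance exactly by dynamic programming; for pieces that absorb a disproportionate share of $F$, we recurse, using the fact that each recursive call sees a strictly smaller budget of noisy edges. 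Summing the losses over the $O(\log n)$ (or fewer) levels of recursion, together with the $\eps$ loss per level and the $O(\delta)$ loss from handling $F$, gives the claimed $(1+O(\eps+\delta))$ approximation. The running time is dominated by the DP on treewidth-$O(1/\eps)$ graphs together with the enumeration over shifts and recursive branchings, which can be bounded by $n^{O(1/\eps^4)}$ with the parameters chosen as in the general framework.

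The main obstacle is controlling the interaction between the noisy edges and the layered decomposition: a single noisy edge can join two far-apart layers, so one cannot argue ``locally'' that each $(1/\eps)$-outerplanar slab plus its incident noisy edges still has small treewidth. The resolution is to bound the \emph{total} number of noisy edges globally rather than per-slab, and to set up the recursion so that the noisy-edge budget is partitioned additively across the recursive subproblems; this is exactly where independent set is easier than general CSPs, since deleting a vertex (and hence an endpoint of several noisy edges) only costs its LP value, which is what lets the $\delta \log m$ factor from Theorem~\ref{thm:MAX-CUT} collapse to $\delta$. The other point requiring care is ensuring the LP solution restricted to each recursive subinstance remains feasible and near-optimal for that subinstance's own LP — this follows because the MIS LP is defined edge-by-edge, so any restriction of a feasible solution is feasible, and the random layer deletion loses only an $\eps$-fraction in expectation by linearity.
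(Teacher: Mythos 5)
Your proposal has a fundamental gap: it assumes the algorithm has access to the latent decomposition $G = G_0 + F$. You propose to ``take a BFS-layering of the planar part $G_0$'' and to partition the ``noisy-edge budget'' additively across recursive subproblems, and you discuss whether the LP places much mass on $V(F)$ — but the algorithm is given only $G$, not $G_0$ or $F$. Identifying the noisy edges (or a good planarization) is exactly the hard part; the paper's introduction explicitly warns that classic decomposition approaches (including Baker-style outerplanar layering) ``break down completely in the presence of noise'' because they ``inherently use the planar structure'' and ``are easily tricked even with very few noisy edges,'' and that minimum-planarization algorithms are far too weak at this noise level. Your proposal never confronts this obstruction, and every step that touches $G_0$ or $F$ is not implementable from the input alone.

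The paper's proof takes a genuinely different and essentially non-recursive route. It invokes Lemma~\ref{lemma:rough-decompo} only as an \emph{existence} statement — there exists a vertex set $X$ with $|X| = O(\gamma n)$ whose removal leaves components of size $\le 1/\gamma^2$ and at most $\delta n$ ``bad'' inter-component edges — and then writes a configuration IP/LP \eqref{lp:partition} with a variable $z_S$ for every vertex subset $S$ of size at most $s=1/\gamma^2$. This LP has $n^{O(s)}$ variables (which is where the $n^{O(1/\epsilon^4)}$ running time actually comes from, a point your write-up also leaves unexplained), can be solved with no knowledge of $G_0$ or $F$, and the randomized rounding of Theorem~\ref{lem:noisy-LT}/Corollary~\ref{cor:bis} produces a set $X'$ of size $O((\epsilon+\delta)n)$ such that $G[V\setminus X']$ has components of size at most $1/\epsilon^4$. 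One then solves each tiny component by brute force, and the final accounting uses 4-colorability of $G_0$ to get $\alpha(G) \ge n/4 - \delta n = \Omega(n)$, so removing $|X'| = O((\epsilon+\delta)n)$ vertices costs only a $(1 + O(\epsilon+\delta))$ factor. Your heuristic remark that ``independent set is easier than CSPs because deleting a vertex only costs its LP value'' is in the right spirit, but the actual mechanism that collapses the $\delta\log m\log\log m$ factor to $\delta$ is the switch from treewidth interdiction (Theorem~\ref{thm:bti}) to bounded-\emph{size} interdiction via this configuration LP, not a refinement of the Baker recursion.
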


\paragraph{General framework}

The starting point for our results is the following simple observation: Most algorithms for planar graphs use the planar structure only to find a certain structured decomposition of the graph, and once this is found, apply some simple or brute-force algorithm.

For example, the planar separator theorems are used to argue that given any $\epsilon >0$, a planar graph can be decomposed into disjoint components of size $O(1/\epsilon^2)$  by removing some subset $X$ of at most $\epsilon n$ vertices.
Similarly in Baker's decomposition, an $\epsilon$ fraction of edges $F$ (or vertices) can be removed from a planar graph to decompose it into $O(1/\epsilon)$-treewidth graphs.

Now consider a noisy planar graph $G$. We claim that such nice decompositions also exist for $G$ (although it is unclear how to find them). For example, consider the decomposition of the underlying planar graph $G_0$ into bounded size pieces
and for every noisy edge in $G\setminus G_0$ put one of its endpoint in $X$. This subset $X$ has size $|X| \leq (\epsilon + \delta)n$ and removing $X$ splits $G$ into components of size $O(1/\epsilon^2)$.
Similarly, let $F$ be the edges removed in Baker's decomposition of $G_0$, plus the noisy edges in $G\setminus G_0$. Clearly, $|F| \leq (\epsilon + \delta) n$ and removing $F$ decomposes $G$ into $O(1/\epsilon)$-treewidth graphs.

So this leads to the natural question of whether we can directly find such good decompositions,
without relying on the topological or other specific structure of the graph.
Our main contribution is to show that this can indeed be done using general LP-based techniques.
Specifically, we consider the following problems.

\begin{enumerate}
\item {\em Bounded Size Interdiction (weaker form):} Suppose $G$ can be decomposed into components of size at most $1/\epsilon^2$ by removing some (small) subset $X$ of vertices. Find such a subset $X$.
\item {\em Bounded Treewidth Interdiction:} Suppose $G$ can be decomposed into graphs of treewidth at most $w$, by removing some subset $F$ of edges. Find such a subset $F$.
\end{enumerate}

\paragraph{Our Results.}
We show the following results for these general interdiction problems. Our main technical result is the following theorem.

\begin{theorem}
\label{thm:bti} Given a graph $G$ and an integer $w>0$,
let $F$ be some subset of edges such that removing them reduces the treewidth of $G$ to $w$.
Then there is an algorithm that runs in time $n^{O(1)}$ and finds a subset of edges $F'$ such that $|F'| = O(\log n \log \log n) |F|$ and removing $F'$ from $G$ reduces the treewidth to $O(w \log w)$.
\end{theorem}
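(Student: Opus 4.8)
\medskip
\noindent\emph{Proof proposal.}
The plan is to construct, by a recursive LP-rounding procedure, a tree decomposition of $G - F'$ of width $O(w\log w)$, where the deleted edge set $F'$ accumulates the ``edge parts'' of the separators used along the way. The backbone is a recursive balanced-separator decomposition in the style of Feige--Hajiaghayi--Lee, but with \emph{mixed} separators that are allowed both to delete edges (these go into $F'$) and to place vertices into the current bag. A recursive call operates on a subgraph $H = G[U]$ together with a ``boundary'' $\partial H \subseteq U$, consisting of the separator vertices chosen by ancestors that still touch $H$. It finds a pair $(X, Y)$ with $X \subseteq U$ and $Y \subseteq E(H)$ such that every connected component of $H - X - Y$ meets at most $\frac23$ of $\partial H$ \emph{and} contains at most $\frac23$ of the vertices of $U$; it then outputs the bag $\partial H \cup X$, adds $Y$ to $F'$, and recurses on the (two groups of) components, passing to each child its share of $\partial H$ together with $X$ as the new boundary. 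The recursion tree labelled by these bags is the output tree decomposition: the double-balance condition forces depth $O(\log n)$ (so the recursion terminates), and a routine check shows that the bags of any fixed vertex form a subtree and that every surviving edge is covered by some bag.

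Each call is driven by an LP for the minimum \emph{mixed $\frac23$-balanced separator} of $H$ (with respect to both the $\partial H$-weighting and the uniform weighting), in which deleting an edge and using a vertex each cost one unit; we solve it and round it by region growing (Leighton--Rao / Garg--Vazirani--Yannakakis, in the edge-split form that also handles vertex capacities). The promise makes this LP cheap: since $\tw(G - F) \le w$, a width-$w$ tree decomposition of $G-F$ has a single bag $B$, $|B| \le 2(w+1)$, that is a $\frac12$-balanced separator of $(G-F)[U]$ for both weightings (take the union of the two relevant bags); then $H - B - (F \cap E(H[U\setminus B]))$ equals $(G-F)[U] - B$ and hence is balanced, so $(X, Y) = (B,\, F\cap E(H))$ is feasible with $O(w)$ vertices and $k_H := |F \cap E(H)|$ edges. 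Thus the LP optimum is $O(w) + k_H$, and region growing returns a mixed separator whose vertex part has size $O(w \log t)$ and whose edge part has size $O(k_H \log t)$, with $t = |\partial H|$.

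Assembling the pieces: the boundary stays small because $X$ is $\frac23$-balanced for $\partial H$ with $|X| = O(w\log t)$, so a child's boundary is at most $\frac23 |\partial H| + O(w\log t)$, which stabilizes at $|\partial H| = O(w \log w)$ (whence $t = O(w\log w)$ and $\log t = O(\log w)$, self-consistently). Hence every bag $\partial H \cup X$ has size $O(w \log w)$ and $\tw(G - F') = O(w \log w)$. For $|F'|$: at each recursion level the edge sets $F \cap E(H)$ over the subproblems $H$ overlap in only $O(1)$ subproblems, so $\sum_H k_H = O(|F|)$ per level; the naive estimate then gives $|F'| = \sum_H |Y_H| = O\bigl(|F| \sum_{i \le \log n} \log(n (2/3)^i)\bigr) = O(|F| \log^2 n)$, and sharpening the recursion via a spreading-metric LP and charging à la Even--Naor--Rao--Schieber (so that the region-growing overhead telescopes across scales) improves this to $O(|F|\log n \log\log n)$. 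The running time is dominated by solving $n^{O(1)}$ polynomial-size LPs.

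The step I expect to be the main obstacle is extracting \emph{both} bounds from a single rounding: the LP optimum only controls the \emph{sum} of the vertex budget and the edge budget, whereas the treewidth guarantee needs the vertex part to be independent of $k_H$ while the cost guarantee needs the edge part independent of $w$. Getting this apparently requires a two-metric (or layered) region-growing argument coupled with the Feige--Hajiaghayi--Lee boundary-control invariant; a secondary delicate point is the charging that produces a $\log\log n$ rather than a second $\log n$ in the bound on $|F'|$.
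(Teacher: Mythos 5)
Your high-level scheme -- recursive construction of a tree decomposition from \emph{mixed} balanced separators (vertices into bags, edges into $F'$), LP-guided region growing, a boundary-control invariant that stabilizes bag sizes at $O(w\log w)$, and Seymour's trick for the $\log\log n$ -- is the right skeleton and matches the paper's. But there are two gaps, one of which you flag yourself.

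\textbf{Local LPs break the telescoping.} You propose solving a fresh mixed-separator LP on each subgraph $H$, using the promise to bound its optimum by $O(w) + k_H$. The paper instead writes a \emph{single} exponential-size LP (LP \eqref{lp:exp}) in which the edge variables $x_{uv}$ are shared across all sets $S$ (while the vertex variables $y^S_v$ are per-$S$), and solves it only implicitly via Round-or-Separate together with a duality reformulation (Section~\ref{sec:LP}) -- this is the bulk of the technical work in the paper. The reason a global, fixed $\xvec$ is not optional: the improvement from $\log^2 n$ to $\log n\log\log n$ comes from charging $\ln\bigl(\xvol(H_j)/\xvol(s_j,r_j)\bigr)$ per region and telescoping $\sum_j \ln\bigl(\xvol(H_j)/\xvol(s_j,r_j)\bigr) \le \ln\bigl(\xvol(G)/\xvol(H_{\text{leaf}})\bigr) = O(\log n)$ using $\xvol(s_j,r_j) \ge \xvol(H_{j+1})$ (Lemma~\ref{lem:F-cost}). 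That inequality compares the parent's region volume to the child's subgraph volume, and it only holds if both are measured with the \emph{same} $\xvec$. If you re-solve the LP at each level, the child's $x$-values bear no relation to the parent's, the telescoping collapses, and your analysis is stuck at the naive $O(\log^2 n)$ you state as the fallback. So the ``sharpening \ldots improves this to $O(\log n\log\log n)$'' step is not available in the per-subproblem-LP setting.

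\textbf{The joint-rounding obstacle is real and must be resolved differently.} You correctly identify that a unified objective $|X|+|Y|$ cannot give you vertex cost independent of $k_H$ and edge cost independent of $w$. The paper's fix is structural, not a post-hoc splitting: the LP has $\sum_v y^S_v \le w$ as a \emph{hard constraint} and the objective is $\sum_e x_e$ only. Then Lemma~\ref{lem:region} finds a single radius $r$ that is simultaneously $x$-good (edge cut $\le O(\log\log n)\ln(\xvol(s,\rmax)/\xvol(s,r))\,\xvol(s,r)$, Seymour) and $y$-good (vertex cut $\le O(\log w)\,\yvol(s,r)$, which totals $O(w\log w)$ across regions because the $y$-weight inside a subproblem is at most $w$). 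You should replace the unit-cost-per-vertex objective by this hard budget. One further technical point you do not mention: when a region's boundary vertices are removed, their edges to the residual graph must be preserved for later regions; the paper handles this with ``zombie'' copies of those edges inside $\algpart$, without which the containment property of Lemma~\ref{lem:partitioning} (that each recursed subgraph sits inside a single region) fails.
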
\label{thm:treewidth_intrediction}
Note that this gives a bicriteria $(\log n \log \log n, \log w)$-approximation to the Bounded Treewidth Interdiction problem. We also remark that the approximation factors and the running time of the above algorithm do not depend on $w$.

Let us now see how Theorem~\ref{thm:bti} can be used to obtain Theorem~\ref{thm:MAX-CUT}. The basic idea is to apply Theorem~\ref{thm:bti} to the factor graph $H$ of the noisy formula $\Phi$ to obtain a formula $\Phi'$ whose factor graph has a smaller treewidth, and then compute an exact solution for $\Phi'$. Since $k$ is constant, the noisy formula $\Phi$ has at most $O(m)$ variables and thus $H$ has $\Theta(m)$ vertices. Thus, as discussed above, we know that there exists a set $F$ of $(k\delta + O(\epsilon /(\log m\log \log m))) m$ edges that we can delete from $H$ to reduce its treewidth to $O(\log m \log \log m/\epsilon)$. So we can apply Theorem~\ref{thm:bti} to $H$ with $w = O(\log m \log \log m / \epsilon)$ to find an edge set $F'$ of size at most $O((\epsilon + k\delta (\log m \log \log m))m)$ such that the residual factor graph $H - F'$ has treewidth $O(\log m (\log \log m)^2/\epsilon)$. Now observe that deleting the clauses incident to $F'$ gives us a formula $\Phi'$ whose factor graph is a subgraph of $H-F'$, and thus has treewidth $O(\log m (\log \log m)^2/\epsilon)$. Thereafter, one can use an exact algorithm for MAX-$k$-SAT \cite{MotwaniKhanna} that runs in time $2^{O(\ell)}\cdot \poly(m)$ when the factor graph has treewidth at most $\ell$. 
Next, we analyze the quality of this solution. Denote by $\OPT$ and $\OPT'$ the maximum number of satisfiable clauses in $\Phi$ and $\Phi'$, respectively. At least a constant fraction of the clauses in $\Phi$ are satisfiable, so $\OPT \geq \Omega(m)$.
Since the formula $\Phi'$ is obtained by deleting $O(\epsilon + \delta (\log m \log \log m))m$ clauses from $\Phi$ (recall that $k$ is a constant), we have $\OPT' \geq \OPT - O(\epsilon + \delta (\log m \log \log m))m$. Thus, $\OPT' \geq (1 - O(\epsilon + \delta \log m \log \log m))\OPT$ and so our solution is a $(1 + O(\epsilon + \delta \log m \log \log m))$-approximation. 

We remark that Bounded Treewidth Interdiction is a fundamental problem of interest beyond its application to noisy planar graphs. For instance, treewidth interdiction algorithms have found applications in designing PTASes in several settings \cite{FLMS12,DHK,fomin2011bidimensionality}.

Next, we consider the Bounded Size Interdiction problem in Section \ref{sec:MIS}.

\begin{theorem} (Weaker version)
\label{thm:weak-bsi}
For the weaker form of the Bounded Size Interdiction problem stated above, given any $\beta \leq  1$, we can find in time $n^{O(1/\epsilon^2)}$ a subset of vertices $X'$
with $|X'| \leq O(|X|/\beta + \beta |E|)$ such that $G[V\setminus X']$ has no component larger than $1/\epsilon^2$. Here, $|E|$ is the number of edges in $G$. \end{theorem}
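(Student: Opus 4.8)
The plan is to reduce this to rounding a covering LP over small connected subgraphs. Set $s := \lfloor 1/\epsilon^2\rfloor$, so that a vertex set induces no component of size exceeding $1/\epsilon^2$ exactly when it contains no connected set of $s+1$ vertices. Consider the LP: minimize $\sum_{v\in V} x_v$ subject to $\sum_{v\in S} x_v \ge 1$ for every $S\subseteq V$ with $|S| = s+1$ and $G[S]$ connected, and $x\ge 0$. Since $s = O(1/\epsilon^2)$ is constant, there are at most $\binom{n}{s+1}\le n^{s+1} = n^{O(1/\epsilon^2)}$ such constraints, so I would simply enumerate them and solve the LP in time $n^{O(1/\epsilon^2)}$. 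The $0/1$ indicator of the promised set $X$ is feasible, because a connected $(s+1)$-set avoiding $X$ would lie inside one component of $G - X$, contradicting the promise; hence the LP optimum $L$ satisfies $L\le |X|$.

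Next I would round and then clean up. Let $x^*$ be an optimal fractional solution, and form $Y$ by putting each vertex $v$ into $Y$ independently with probability $p_v := \min\{1, x^*_v/\beta\}$; then $\E|Y| = \sum_v p_v \le \beta^{-1}\sum_v x^*_v = L/\beta \le |X|/\beta$. The set $Y$ need not be feasible, so let $R$ be the union of the vertex sets of all components of $G - Y$ with more than $s$ vertices (the ``bad'' components) and output $X' := Y\cup R$. Deleting the bad components only removes whole components and shrinks the others, so every component of $G - X'$ has at most $s\le 1/\epsilon^2$ vertices, and $X'$ is feasible. Each bad component is connected on at least $s+1\ge 2$ vertices, hence has at least half as many edges as vertices, so $|R|\le 2\,E_{\mathrm{bad}}$, where $E_{\mathrm{bad}}$ counts edges inside bad components. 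Therefore $\E|X'|\le |X|/\beta + 2\,\E[E_{\mathrm{bad}}]$, and everything reduces to showing $\E[E_{\mathrm{bad}}] = O(\beta|E|)$; a good outcome then exists, and conditional expectations derandomize the construction.

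For the key estimate, fix an edge $e = \{u,v\}$: it lies in a bad component of $G - Y$ exactly when some connected set $S\supseteq\{u,v\}$ with $|S| = s+1$ avoids $Y$. If $S$ contains a vertex $w$ with $x^*_w\ge\beta$, then $p_w = 1$ and $S$ cannot avoid $Y$; otherwise $\Pr[S\cap Y = \emptyset] = \prod_{w\in S}(1-p_w)\le \exp\!\big(-\sum_{w\in S}p_w\big) = \exp\!\big(-\beta^{-1}\sum_{w\in S}x^*_w\big)\le e^{-1/\beta}$, using the LP constraint for $S$. A union bound over the connected $(s+1)$-sets through $e$ that avoid the heavy set $\{w:x^*_w\ge\beta\}$ gives $\Pr[e\in E_{\mathrm{bad}}]\le \min\{1,\ e^{-1/\beta}M_e\}$, with $M_e$ the number of such sets, so the target follows once $\sum_e\min\{1, e^{-1/\beta}M_e\} = O(\beta|E|)$.

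This last inequality is the main obstacle, since $M_e$ can be as large as $n^{\Theta(1/\epsilon^2)}$ while $e^{-1/\beta}$ is bounded away from $0$ when $\beta$ is large. Two reductions help tame it: if $\beta\le 1/(s+1)$ then a connected $(s+1)$-set with all coordinates below $\beta$ has total $x^*$-mass below $(s+1)\beta\le 1$, contradicting its constraint, so $M_e = 0$ for all $e$ and $X' = Y$ already works; and if $\beta|E|\ge n$ then the trivial choice $X' = V$ meets the bound. What remains is the range $1/(s+1)<\beta$ with $\beta|E|<n$, where $1/\beta = O(1/\epsilon^2)$, and there one would argue that an edge whose contribution is capped at $1$ (i.e.\ with $M_e\ge e^{1/\beta}$) sits in a region dense enough that the fractional optimum $L$, or the bound $|X'|\le n$, already absorbs the cost of deleting all such edges. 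Pinning down this dichotomy precisely---or replacing the clean-up step by one that does not union-bound over small subgraphs---is the step I expect to be the crux.
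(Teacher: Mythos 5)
The gap you flag is real, and your dichotomy does not close it. For $\beta$ in the range $1/(s+1) < \beta \leq 1/2$, the per-edge union bound $\Pr[e\in E_{\mathrm{bad}}]\le e^{-1/\beta}M_e$ is generically useless: $M_e$ can be exponential in $s$ even in bounded-degree graphs (e.g.\ the number of connected $(s+1)$-vertex subtrees of a tree that pass through a fixed edge), while $e^{-1/\beta}\ge e^{-(s+1)}$, so the product exceeds $1$ for every edge. At the same time, if the graph is sparse then $\beta|E|$ can be $o(n)$, so neither $X'=V$ nor an arbitrary partition is within budget, and rerunning the algorithm at $\beta'=1/(s+1)$ inflates the first term to $\Theta(s\,|X|)$, which $O(|X|/\beta)$ does not cover. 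In short, the cleanup cost $\E|R|$ really is the crux, and the covering LP alone gives you no leverage on it.

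The paper avoids this by writing a structurally different relaxation, and the difference is instructive. Rather than a pure covering LP over connected $(s+1)$-sets, it writes a configuration LP \eqref{lp:partition}: a variable $z_S$ for each set $S$ of size $\le s$, a guessed budget $\sum_v y_v\le a$ with $a=|X|$, the fractional-partition constraint $\sum_{S\ni v}z_S = 1 - y_v$, a per-edge constraint $\sum_{S:\,u\in S,\,v\notin S}z_S \le x_{uv} + y_v$, and objective $\min\sum_{uv}x_{uv}$. For constant $s$ this has $n^{O(s)}$ variables and $O(n^2)$ nontrivial constraints, so it is solvable in the same $n^{O(1/\epsilon^2)}$ time as enumerating your constraints. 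Its rounding first moves every $v$ with $y_v\ge\beta$ into $X'$, then repeatedly samples each $S$ independently with probability $z_S$ and assigns to it whatever vertices are still uncovered. Because the $z_S$ form a fractional partition, each surviving vertex has residual covering mass at least $1/2$, so $O(\log n)$ phases suffice, and the per-edge constraint yields a \emph{direct} bound $\Pr[(u,v)\text{ is cut}] \le 3(2x_{uv}+y_u+y_v)\le 6x_{uv}+6\beta$ with no union bound over subgraphs at all; summing over edges gives the $O(\beta|E|)$ term. Corollary~\ref{cor:bis} then moves one endpoint of each cut edge into $X'$, and the remaining range $\beta>1/2$ is handled trivially since an arbitrary partition cuts at most $|E|\le 2\beta|E|$ edges. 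If you want to keep your overall scheme, the fix is to replace the covering LP by this configuration (fractional clustering) LP and sample \emph{sets} rather than thresholding individual vertices; that is precisely what eliminates the exponential union bound.
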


For a noisy planar graph $G$, there exists $X \subseteq V$ of size $|X| \leq (\epsilon + \delta)n$ (as discussed above) and $|E| \leq (3+\delta)n = O(1)n$. The latter follows as a planar graph has at most $3n-6$ edges. So setting $\beta = (\epsilon+\delta)^{1/2}$ in Theorem \ref{thm:weak-bsi} already gives $X'$ of size $O((\epsilon + \delta)^{1/2} n)$ in time $n^{O(1/\epsilon^2)}$. This can be used to design a $(1 + O((\epsilon + \delta)^{1/2}))$-approximation algorithm for independent set in noisy planar graphs. To get the better approximation factor of $1 + O(\epsilon + \delta)$ of Theorem \ref{thm:MIS} above, we will use a more refined result (Theorem \ref{lem:noisy-LT}) that decouples the dependence of $|X'|$ on $\epsilon$ and $\delta$.

To prove Theorem \ref{lem:noisy-LT}, we write a configuration LP based formulation and round it suitably. (See Section \ref{sec:MIS}.) The proof of Theorem \ref{thm:bti} is much more challenging and requires several new ideas, and we give a broad overview of the algorithm and the proof below.

\paragraph{Overview of Techniques for Theorem \ref{thm:bti}:}
First, observe that if $G$ has treewidth at most $w$, then $F=\emptyset$, and the algorithm must return $F'=\emptyset$. Thus, the problem is at least as hard as determining the treewidth of $G$.
This is well known to be NP-Hard, and in fact unlikely to admit a polynomial time $O(1)$ approximation under reasonable complexity assumptions \cite{WAPL}.
This implies that the bicriteria guarantee is necessary, and that it is unlikely  that the approximation with respect to $w$ can be made $O(1)$.

At a high level, our algorithm will try to construct a good tree decomposition of width $w$, while removing some problematic  edges along the way. To this end, let us first see how the known algorithms for finding tree decompositions work.
Treewidth is characterized up to $O(1)$ factor by the well-linkedness property of a graph, which allows one to construct a tree decomposition by computing small balanced vertex separators recursively. Either the algorithm succeeds at each step and eventually finds a tree decomposition, or returns a well-linked set as a certificate that $G$ has large treewidth. Finding balanced vertex separators is hard but one can use LP or SDP (resp.) formulations \cite{BGHK95,Amir,FHL} based on spreading constraints \cite{ENRS,ARV}, and lose an $O(\log w)$ or $O(\sqrt{\log w})$ (resp.) factor in the quality of the treewidth.

In the noisy setting, our algorithmic task can thus be viewed as detecting which edges to remove so that the above recursive procedure works. To do this, we formulate an LP with variables for which edges to remove (let us call these the $x_{uv}$ variables) so that in the residual graph every subset $S$ of vertices has a small fractional balanced vertex separator of size at most $w$.
However, as there are exponentially many such sets $S$, this gives a huge overall LP with (both) exponentially many variables and constraints, and it is unclear how to solve it. In particular, we have exponentially many different vertex separator LPs coupled together with the common $x_{uv}$ variables.

We describe the algorithm in two parts.
First, we assume that we are given the $x_{uv}$ values from some feasible optimum LP solution.
Using these $x$-values, for any given set $S$, we can now formulate a balanced edge-and-vertex separator LP, where the $x$-values give the fractional amount by which edges are removed, and in addition at most $w$ vertices are removed. Using standard region-growing techniques jointly on these edge and vertex values, we decide which edges to delete (this adds to $F'$), and which vertices lie in the separator for $S$ (these enter the bags in the tree decomposition).
Doing this directly gives an $O(\log^2 n)$ approximation (provided we ensure that the separator tree is balanced and has depth $O(\log n)$), due to the loss of an $O(\log n)$ factor on each level of recursion. To reduce this to $O(\log n \log \log n)$, we use ``Seymour's trick" of more careful region-growing \cite{Seymour}, together with some additional technical steps needed to make it work together with the tree decomposition procedure.

Second, we describe how to ``solve" the LP. Perhaps surprisingly, this turns out to be quite challenging and requires some new ideas, which may be be useful in other contexts. We only sketch these here, and details can be found in Section \ref{sec:LP}.
First, we bypass the need to completely solve this LP, by using the
Round-or-Separate framework (as in \cite{AnSS14,LLS}). In particular,
the algorithm starts with some possibly infeasible solution $x$, and tries to construct the tree decomposition.
If it succeeds, we are done. Otherwise, it gets stuck at finding a small balanced vertex separator for some set $S$. At that point, we try to add a violated inequality.
However, a crucial point is that we need to find a violated inequality only involving the $x$-variables. So, a key step is to reformulate the LP to only have the $x$-variables.
This crucially uses LP duality and the structure of the LP that the variables for different sets $S$ are only loosely coupled via the $x$-variables. After this reformulation, it is still unclear how to find
a violated inequality due to the exponential size of the LPs involved. We get around this issue by using some further properties of the Ellipsoid Method and the LP duality.

\paragraph{Other Related Work.} The noise model considered here gives an interesting interpolation between easy and general worst-case instances. This is similar in spirit to approaches such as smoothed analysis \cite{SpielmanTeng}, planted models and semi-random models \cite{FK98,MMV12}, although unlike these models our noise model is completely adversarial.
For the general version of the bounded size interdiction problem---given a graph $G$ and a size parameter $s$, find the smallest vertex set $X$ to delete so that $G - X$ has components of size at most $s$---Lee~\cite{lee17small} independently gave a $O(\log s)$-approximation algorithm that runs in time $2^{O(s)}\poly(n)$. However, using this for maximum independent set on noisy planar graphs only yields a $(1 + O((\delta +\epsilon)\log (1/\epsilon))$-approximation. Fomin et al.~\cite{FLMS12} have considered the vertex deletion variant of our treewidth interdiction problem. They obtain a constant approximation factor for the problem, however their approximation factor depends at least exponentially on $w$ which makes it inapplicable in our setting\footnote{We need a sublinear dependence on $w$. To see why, consider for example the noisy MIS problem with $G_0$ as a grid. If we wish to reduce the treewidth to $w$, we would need to remove an $\Omega(1/w)$ fraction of the vertices, so if the interdiction algorithm is not an $o(w)$ approximation, it might end up deleting all the vertices.}. Also their algorithm is polynomial time only for $w=O(1)$. A related model was considered by \cite{HKNO} in the context of property-testing and sublinear time algorithms in the bounded degree model.

\section{Notation and Preliminaries}
We always use $G_0$ for the underlying graph, and $G$ for the noisy graph. The number of vertices of $G$ is always $n$.
For a subset $S \subseteq V$ and $F \subseteq E$, we use the notation $G[S] - F$ to denote the subgraph induced on the vertices $S$, excluding the edges in $F$. We use the notation $E(S)$ to denote the subset of $E$ with both endpoints in $S$, and given another subset $S' \subseteq V$, we use $E(S,S')$ to denote the subset of $E$ with one endpoint in $S$ and another in $S'$. 

\paragraph{Planar and Minor-Free graphs.}
The classic planar separator theorem \cite{LT} states that any planar graph has a $2/3$-balanced vertex separator of size $O(\sqrt{n})$ and that it can be found efficiently. Applying this recursively gives the following.
\begin{lemma}
\label{thm:LT}
For any planar graph $G$ and any $\alpha>0$, there is subset of vertices $X \subset V$ with $|X| =O(\alpha n)$, such that every component $C_i$ of $G_0[V-X]$ has at most $1/\alpha^2$ vertices. \end{lemma}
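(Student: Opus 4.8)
The plan is to obtain $X$ by applying the planar separator theorem of \cite{LT} recursively, cutting along a small balanced separator of every component that is still too large, and taking $X$ to be the union of all separators produced. Set $r := \lfloor 1/\alpha^2\rfloor$; we may assume $2 \le r < n$, since for $r \ge n$ the empty set works and for $r \le 1$ one can take $X = V$ (in which case $\alpha = \Omega(1)$, so $|X| = n = O(\alpha n)$). First I would run the recursion: initialize a collection of components to the connected components of $G$, and while some current component $C$ has $|C| > r$, apply the planar separator theorem to $C$ — legitimate since every subgraph of a planar graph is planar — to get a $\tfrac23$-balanced vertex separator $S_C$ with $|S_C| \le c\sqrt{|C|}$ for an absolute constant $c$; add $S_C$ to $X$ and replace $C$ by the connected components of $G[C\setminus S_C]$, each of which has at most $\tfrac23|C|$ vertices. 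Since sizes strictly decrease the process halts, and by construction every component of $G[V\setminus X]$ has at most $r \le 1/\alpha^2$ vertices; the construction is polynomial-time given the algorithmic form of \cite{LT}. It remains to bound $|X| = \sum_C |S_C|$.

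To bound the total separator size I would record the execution as a rooted forest $\mathcal{T}$: its roots are the components of $G$, its internal nodes are the components $C$ that got separated (those with $|C| > r$), and the children of such a $C$ are the components of $G[C\setminus S_C]$; write $s_C := |C|$ for the size of the subgraph at node $C$. Two observations drive the estimate. \emph{(i)} Along any root-to-leaf path the sizes shrink by a factor of at least $\tfrac23$ at each step, since a $\tfrac23$-balanced separator leaves pieces of size at most $\tfrac23 s_C$. \emph{(ii)} Nodes of $\mathcal T$ that are incomparable in the forest order have vertex-disjoint subgraphs: if they lie in the same tree they sit in distinct child-subtrees of their lowest common ancestor $A$, and those subtrees are contained in distinct (hence disjoint) components of $G[A\setminus S_A]$, while if they lie in different trees they are in different components of $G$. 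Now bucket the internal nodes by size: let $N_j$ be the set of internal nodes $C$ with $(\tfrac23)^{j+1}n < s_C \le (\tfrac23)^j n$. By \emph{(i)}, any strict descendant of a node in $N_j$ already has size at most $(\tfrac23)^{j+1}n$, so $N_j$ is an antichain in $\mathcal T$; by \emph{(ii)} its nodes then have pairwise-disjoint vertex sets, so $\sum_{C\in N_j} s_C \le n$, and since each such $s_C > (\tfrac23)^{j+1}n$ we get $|N_j| < (\tfrac32)^{j+1}$.

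Finally I would sum the contributions. Each $C\in N_j$ contributes $|S_C|\le c\sqrt{s_C}\le c\sqrt{(\tfrac23)^j n}$, so, using $(\tfrac32)^{j+1}\sqrt{(\tfrac23)^j} = \tfrac32(\tfrac32)^{j/2}$, we have $\sum_{C\in N_j}|S_C| \le (\tfrac32)^{j+1}\, c\sqrt{(\tfrac23)^j n} = \tfrac{3c}{2}\,\sqrt n\,(\tfrac32)^{j/2}$. Every internal node has size greater than $r$, so only buckets with $(\tfrac23)^j n > r$, i.e.\ $j < \log_{3/2}(n/r)$, are nonempty, and for those $j$ one has $(\tfrac32)^{j/2} < \sqrt{n/r}$. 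The finite geometric series $\sum_j (\tfrac32)^{j/2}$ over this range is therefore $O(\sqrt{n/r})$, whence $|X| = \sum_j\sum_{C\in N_j}|S_C| = O(\sqrt n\cdot\sqrt{n/r}) = O(n/\sqrt r) = O(\alpha n)$, using $r = \Theta(1/\alpha^2)$. The argument is essentially all bookkeeping; the one place that needs care — and is genuinely not a direct induction — is this final summation: the naive recurrence $S(n) \le c\sqrt n + \sum_i S(n_i)$ with $\sum_i n_i \le n$ and $n_i \le \tfrac23 n$ does not close by induction, and one really has to use that the separator cost is geometrically dominated by the \emph{bottom} level of the recursion rather than being spread evenly over the $\Theta(\log(n/r))$ levels.
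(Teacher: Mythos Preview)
Your proof is correct and is exactly the argument the paper has in mind: the paper does not prove this lemma at all but merely states it as the standard consequence of applying the Lipton--Tarjan separator theorem recursively (``Applying this recursively gives the following''). You have supplied the details of precisely that recursion, including the geometric-bucketing argument needed to sum the separator sizes to $O(n/\sqrt{r}) = O(\alpha n)$.
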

A more generally applicable technique (see e.g.~\cite{MotwaniKhanna,eppstein,Demaine2008}) is Baker's decomposition \cite{baker}, which states that for any integer $k$, a planar graph can be decomposed into pieces of treewidth $O(k)$ (specifically, $k$-outerplanar graphs) by removing $O(1/k)$ fraction of edges or vertices. 

A minor of $G$ is a graph $G'$  obtained by deleting and contracting edges. A graph $G$ is $H$-free
if $G$ does not contain a subgraph $H$ as a minor. Planar graphs
are exactly the graphs excluding $K_{3,3}$ and $K_5$ as minors. In fact, Robertson and Seymour proved that every graph family
closed under taking minors is characterized by a set of
excluded minors.
Both the planar separator theorem and Baker's decomposition approach extend more generally to 
$H$-free graphs \cite{AST,devos,DHK}. 

\paragraph{Treewidth.} We review some relevant definitions related to treewidth.

\begin{definition}[$\alpha$-separator of $S$ in $G$]
Given a graph $G=(V,E)$ and a set $S \subset V$,
a vertex set $X \subset V$ is an \emph{$\alpha$-separator} of vertex set $S$ in $G$ if every component $C$ of $G[V - X]$ has $|C \cap S| \leq \alpha|S|$. 
\end{definition}

\begin{definition}[Well-linked sets]
  A vertex set $S$ is \emph{$w$-linked} in $G$ if it does not have a $\half$-separator $X$ with $|X| < w$. The \emph{linkedness} of $G$ is defined to be the maximum integer $w$ such that there exists a $w$-linked set in $G$, and is denoted as $\link(G)$.
\end{definition}

\begin{definition}[Tree decomposition]
  \label{def:treedec}
  A \emph{tree decomposition} of $G$ is a tree $T$ whose nodes $t$ correspond to vertex subsets $V_t$ of $G$ (called bags) that satisfies the following properties:
       (i) for every edge $(u,v) \in E$, there exists a bag $V_t$ containing both $u$ and $v$;
  (ii) for every vertex $v$, the bags that contain $v$ form a non-empty subtree of $T$.     The \emph{width} of the decomposition is $\width(T) = \max_{s \in T} |V_s| - 1$.
\end{definition}

\begin{definition}[Treewidth]
  The \emph{treewidth} of $G$ is the minimum integer $w$ such that it has a tree decomposition of width $w$, and is denoted as $\tw(G)$.
\end{definition}

The following well-known (see e.g. \cite{Reed97}) approximate characterization of treewidth in terms of linkedness will be useful for us.

\begin{lemma}[\cite{Reed97}]
  \label{lem:tw-duality}
	For any graph $G$, 
  $\link(G) < \tw(G) < 4\link(G)$.
\end{lemma}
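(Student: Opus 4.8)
The statement to prove is Lemma~\ref{lem:tw-duality}: for any graph $G$, $\link(G) < \tw(G) < 4\link(G)$. This is a classical duality between treewidth and well-linkedness, attributed to Reed. Let me sketch a proof plan.

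\bigskip

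\noindent\textbf{Proof proposal.}

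The plan is to prove the two inequalities separately, as they require genuinely different arguments. For the lower bound $\link(G) < \tw(G)$, I would show that a tree decomposition of small width supplies a small balanced separator for \emph{every} vertex set $S$, so no set can be highly well-linked. Concretely, fix an optimal tree decomposition $T$ of width $w = \tw(G)$ and a target set $S$. Using the standard centroid-type argument on $T$: for each edge $e$ of $T$, deleting $e$ splits $T$ into two subtrees and hence splits $V$ into the two corresponding unions of bags (overlapping only in the bag-intersection across $e$); orient $e$ toward the side containing more than half of $S$. A node all of whose incident edges point inward (which exists by the usual finite-tree argument) yields a bag $V_t$ with $|V_t| \le w+1$ whose removal leaves every component with at most $|S|/2$ elements of $S$. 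Hence $V_t$ is a $\half$-separator of $S$ of size $\le w+1$, so $S$ is not $(w+1)$-linked. Since this holds for all $S$, we get $\link(G) \le w = \tw(G)$, and because $S$ being $w$-linked is the strict condition (no $\half$-separator of size $<w$), one checks the strict inequality $\link(G) < \tw(G)$ follows (the bag of size $w+1$ certifies $S$ is not $(w+1)$-linked, so $\link(G)\le w$; arguing a touch more carefully with the definition gives strictness, or one simply notes $\tw(G)\ge \link(G)+1$ as integers once equality is ruled out in the degenerate cases).

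\bigskip

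For the upper bound $\tw(G) < 4\link(G)$, I would build a tree decomposition recursively, using the fact that no set of size $\approx 4\link(G)$ is well-linked, so balanced separators always exist. Set $w = \link(G)$. The key recursive claim is: for every vertex set $W$ with $|W| \le 3w$, the graph $G$ admits a tree decomposition in which some bag contains $W$ and every bag has size at most $4w$ (equivalently width $< 4w$). Prove this by induction on $|V \setminus W|$. If $|V\setminus W|$ is small enough that $|W| + |V\setminus W| \le 4w$, take the single bag $V$. Otherwise, since $W$ is not $\frac12$-linked (as $|W| \le 3w$... here one wants $W$ not well-linked; if $|W| > w$ it isn't $w$-linked but we need more), pick a set $W'$ with $|W| \le |W'| \le 3w$ that is maximal under failing to be balanced-separated — more precisely, augment $W$ to a set of size exactly $3w$ (if $|V|\ge 3w$; else trivial) and use that a set of size $3w > w$ admits a $\frac12$-separator $X$ with $|X| \le w$. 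Then $W \cup X$ has size $\le 4w$; it forms the root bag. Each component $C$ of $G[V\setminus X]$ together with its interface $N(C)\subseteq X$ (of size $\le |X| \le w$) plus at most... — the standard bookkeeping: set $W_C = (W\cap C) \cup X$; since $|W\cap C| \le |W|/2 \le 3w/2$ and $|X|\le w$, we get $|W_C| \le 3w$ wait we need $|W_C|\le 3w$: $3w/2 + w = 5w/2 \le 3w$, good. Recurse on $G[C\cup X]$ with distinguished set $W_C$; attach the resulting trees to the root bag $W\cup X$. Properties (i) and (ii) of the tree decomposition are verified edge-by-edge and vertex-by-vertex in the usual way (every edge lies inside some $C\cup X$ or inside the root bag; the subtree condition holds because $X$ appears in the root and all children). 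Finally apply this with $W=\emptyset$ to get $\tw(G) \le 4w - 1 < 4\link(G)$.

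\bigskip

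The main obstacle is getting the constants and the inductive invariant exactly right in the upper-bound argument: one must choose the size threshold for the distinguished set $W$ (I used $3w$) so that (a) any such $W$ is large enough to be guaranteed a $\frac12$-separator of size $\le w$ — which needs $|W| > w$, comfortably satisfied — and (b) after splitting, each child's distinguished set $W_C = (W\cap C)\cup X$ still satisfies the $\le 3w$ bound, which needs $|W|/2 + w \le 3w$, i.e. $|W| \le 4w$, also fine — and (c) the root bag $W\cup X$ has size $\le 4w$, needing $|W| \le 3w$. The balance condition must be applied with care: a $\frac12$-separator only guarantees $|W\cap C| \le |W|/2$ per component, which is what the recursion consumes to ensure termination and the size bounds. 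A secondary subtlety is the strictness of both inequalities and the degenerate base cases (e.g. $|V| \le 4w$, or $\link(G)$ small), which should be handled by noting these are statements about integers and the single-bag decomposition. I would also cite Lemma~\ref{thm:LT}-style recursion only informally; the real engine here is purely the definitions of $\frac12$-separator, $w$-linked, and tree decomposition as given.
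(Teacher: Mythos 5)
The paper does not prove this lemma; it cites it directly from Reed's 1997 survey, so there is no in-paper proof to compare against. Your sketch reconstructs the standard classical argument — the centroid-bag argument for the lower bound and the recursive balanced-separator construction of a tree decomposition for the upper bound — and this is indeed the right route. Two issues, however, deserve attention.

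On the lower bound: a width-$w$ tree decomposition has bags of size at most $w+1$, so the centroid argument yields a $\half$-separator of size at most $w+1$ for any $S$. You assert this ``certifies $S$ is not $(w+1)$-linked, so $\link(G)\le w$,'' but that is a definitional slip: by the paper's definition, $S$ failing to be $(w+1)$-linked requires a $\half$-separator of size $<w+1$, i.e.\ at most $w$, whereas the centroid bag only has size at most $w+1$. What actually follows is that no set is $(w+2)$-linked, hence $\link(G)\le\tw(G)+1$. Moreover, recovering the stated strict inequality ``with a touch more care'' is not possible in general: with these definitions one checks $\link(K_2)=\tw(K_2)=1$ and $\link(K_3)=\tw(K_3)=2$, so the strict form as cited already fails on small cliques (an imprecision inherited from the lemma's statement, and harmless for the paper's use of it). On the upper bound: the size bookkeeping you carry out ($|W|\le 3w$ gives a root bag $W\cup X$ of size $\le 4w$ and child sets $(W\cap C)\cup X$ of size $\le 5w/2\le 3w$) is correct, but you omit the termination argument — the recursion on $G[C\cup X]$ only makes progress if $G-X$ has at least two components. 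This does hold: if $V\setminus X$ were a single component it would contain at least $|W|-|X|\ge 3w-w=2w>\tfrac12|W|$ vertices of the padded set $W$, contradicting that $X$ is a $\half$-separator of $W$. Spelling this out (or switching the induction measure to $|V(H)|$) would close the one genuine gap in the sketch.
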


\newcommand{\interdict}{\textsc{Interdict}}
\newcommand{\near}{\frac{2}{3}}
\newcommand{\xcut}{\delta_x}
\newcommand{\ycut}{\delta_y}
\newcommand{\const}{\kappa}
\section{Bounded Treewidth Interdiction}
Recall that in the Bounded Treewidth Interdiction problem, we are given a graph $G = (V,E)$, a target treewidth $w$, and we want to find the minimum set $F$ of edges to delete from $G$ such that $\tw(G-F) < w$. In this section, we describe the exponential-size LP and sketch the rounding algorithm used to prove Theorem \ref{thm:bti}. In the following, when $X$ is a vertex set and $F$ is an edge set, we use the shorthand $G-X$ to mean $G[V - X]$, and $G-X-F$ to mean $G[V - X] - F$.

\subsection{An Exponential-Sized LP}
\label{sec:lp-relaxation}
Lemma \ref{lem:tw-duality} gives us a convenient characterization of feasible solutions $F$ which we can use to write an LP. In particular, it says that if $\tw(G-F) < w$, then every vertex set $S \subseteq V$ has a $\half$-separator $X^S$ in $G - F$ of size less than $w$. Consider LP \eqref{lp:exp} in Figure \ref{fig:lpexp}. It has a variable $x_{uv}$ indicating if edge $(u,v) \in E$ belongs to $F$. For every subset $S\subseteq V$ and vertex $v \in V$, variable $y^S_v$ indicates if $v$ belongs to the minimum-size $\half$-separator $X^S$ of $S$ in $G-F$. For $u,v \in V$, let $\paths(u,v)$ denote the set of paths between $u$ and $v$. For a path $P$, define $E(P)$ to be the set of edges in $P$ and $V(P)$ to be the set of vertices on $P$, including the endpoints.

\label{sec:ROS}
\begin{figure}
\centering
\begin{equation}
    \boxed{
      \label{lp:exp}
      \begin{aligned}
        \mbox{min} \quad
        & \sum_{(u,v) \in E} x_{uv}\\
        \mbox{s.t.}\quad
        & \sum_{v \in V} y^S_v \leq w &\quad \forall S \subseteq V\\
        & d^S_{uv} \leq \sum_{e \in E(P)} x_e + \sum_{t \in V(P)} y^S_t &\quad \forall S \subseteq V, u,v \in V, P \in \paths(u,v) \\
        & \sum_{v \in U} d^S_{uv} \geq |U| - \frac{|S|}{2} &\quad \forall U \subseteq S \subseteq V, u \in U
      \end{aligned}
    }
  \end{equation}
  \caption{LP relaxation for the treewidth interdiction problem.}
  \label{fig:lpexp}
\end{figure}

We interpret the solution as follows: the LP assigns a length $x_{uv}$ to each edge $(u,v) \in E$ and a weight $y^S_v$ for each vertex set $S$ and vertex $v$. Consider a fixed set $S$. Without loss of generality, we can assume that the variable $d^S_{uv}$ denotes the distance between $u$ and $v$ induced by the edge lengths $x_e$ and vertex weights $y^S_t$. In particular, if we define the length of a path $P \in \paths(u,v)$ to be the sum of edge lengths and vertex weights on the path, including the weights on $u$ and $v$, then $d^S_{uv}$ is the length of the shortest path between $u$ and $v$. The variables $d^S_{uv}$ and the last set of constraints are often called \emph{spreading metrics} and \emph{spreading constraints}, respectively.
It is also easy to see
that without loss of generality any feasible solution satisfies $d_{uv}^S \leq 1$. Note that there is a potentially different metric $d^S$ for each set $S$, and that the LP has
exponentially many constraints and exponentially many variables.

\begin{lemma}
  \label{lem:LP-relax}
  LP \eqref{lp:exp} is a relaxation of the treewidth interdiction problem.
\end{lemma}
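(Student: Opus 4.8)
The plan is to show that given any feasible solution $F$ to the treewidth interdiction problem (i.e.\ $\tw(G - F) < w$), one can construct a feasible solution to LP \eqref{lp:exp} with objective value exactly $|F|$, hence the LP optimum is at most the integral optimum. First I would set $x_{uv} = 1$ if $(u,v) \in F$ and $x_{uv} = 0$ otherwise; this makes the objective $\sum_{(u,v) \in E} x_{uv} = |F|$. It then remains to exhibit feasible values for the $y^S_v$ and $d^S_{uv}$ variables for every $S \subseteq V$.

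Fix a set $S \subseteq V$. Since $\tw(G-F) < w$, Lemma~\ref{lem:tw-duality} (the inequality $\link(G-F) < \tw(G-F)$) tells us that $S$ is not $w$-linked in $G - F$, so by the definition of well-linkedness there is a $\half$-separator $X^S$ of $S$ in $G - F$ with $|X^S| < w$, i.e.\ $|X^S| \le w - 1$. I would set $y^S_v = 1$ for $v \in X^S$ and $y^S_v = 0$ otherwise, which immediately gives $\sum_{v \in V} y^S_v = |X^S| \le w$, satisfying the first constraint. Next I would define $d^S_{uv}$ to be the shortest-path distance between $u$ and $v$ under the combined edge lengths $x_e$ and vertex weights $y^S_t$ (counting the weights of the endpoints), capped at $1$; the second (path) constraint holds by definition of shortest path, and the cap does not hurt because any single vertex $u$ contributes weight $y^S_u$ and one can check the relevant inequalities still hold, or more simply one can take $d^S_{uv} = \min\{1, \text{dist}(u,v)\}$ and note the path constraint is only a lower bound requirement. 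Actually the cleanest route: just set $d^S_{uv}$ equal to the (uncapped) shortest-path distance, so the path constraints hold with equality for the shortest path and as inequalities for others.

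The real content is verifying the spreading constraints $\sum_{v \in U} d^S_{uv} \ge |U| - |S|/2$ for every $U \subseteq S$ and $u \in U$. The key observation is that every vertex $v \in S$ not in the same connected component of $(G - F) - X^S$ as $u$ has $d^S_{uv} \ge 1$: any $u$-$v$ path in $G$ must either use an edge of $F$ (contributing $x_e = 1$) or pass through a vertex of $X^S$ (contributing $y^S_t = 1$), because otherwise the path would connect $u$ and $v$ within $(G-F) - X^S$. Since $X^S$ is a $\half$-separator of $S$, the component of $(G-F)-X^S$ containing $u$ contains at most $|S|/2$ vertices of $S$; hence at most $|S|/2$ vertices of $U \subseteq S$ have $d^S_{uv}$ possibly less than $1$ (including $u$ itself with $d^S_{uu}=y^S_u \ge 0$), and the remaining at least $|U| - |S|/2$ vertices each contribute at least $1$. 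This gives $\sum_{v \in U} d^S_{uv} \ge |U| - |S|/2$ as required. The main obstacle, such as it is, is being careful with the endpoint-weight convention in the definition of path length and the possibility that $u \in X^S$; but this only affects the bookkeeping and not the structure of the argument, since dropping the non-negative terms $d^S_{uv} \ge 0$ for the at most $|S|/2$ "close" vertices is already enough. Assembling these choices over all $S$ yields a feasible LP solution of cost $|F|$, completing the proof.
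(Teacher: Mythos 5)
Your proof is correct and follows essentially the same approach as the paper: set $\xvec$ to the indicator of $F$, use Lemma~\ref{lem:tw-duality} to obtain a small $\half$-separator $X^S$ of $S$ in $G-F$ for each $S$, set $\yvec^S$ to the indicator of $X^S$, and verify the spreading constraints by noting that the component of $(G-F)-X^S$ containing $u$ has at most $|S|/2$ vertices of $S$ and everything else is at distance $\geq 1$. The only cosmetic difference is your choice of $d^S_{uv}$ as the actual shortest-path distance (so the path constraints are automatic and you must argue the lower bound $d^S_{uv}\geq 1$ for far vertices), whereas the paper sets $d^S_{uv}$ explicitly to $1$ or $0$ (so the lower bounds are automatic and the path constraints need the same connectivity argument); these are two sides of the same coin and both are valid.
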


\begin{proof}
  We show that for every edge set $F$ such that $\tw(G-F) < w$, there exists a feasible solution $(\xvec,\yvec,\dvec)$ to LP \eqref{lp:exp} with $\sum_{(u,v) \in E} x_{uv} \leq |F|$. Let $\xvec$ be the indicator vector for $F$.

As $\tw(G-F)< w$, by Lemma \ref{lem:tw-duality}, for each vertex set $S$, there exists a set $X^S$ of at most $w$ vertices such that no component of $G-X^S-F$ contains more than half of $S$. Define $\yvec^S$ to be the indicator vector for $X^S$ and $d^S_{uv}=1$ if either $u$ or $v$ lies in $X^S$, or if $u$ and $v$ lie in separate components of $G-X^S-F$.

The solution $(\xvec,\yvec,\dvec)$ has $\sum_{(u,v) \in E} x_{uv} = |F|$ and satisfies the first two sets of constraints. It remains to show that it satisfies the spreading constraints. Fix some $S$ and consider $U \subseteq S$ with $|U| > |S|/2$. Let $u$ be a vertex of $U$. There are two cases to consider:
(i) If $u \in X^S$, we have $d^S_{uv} = 1$ for all $v \in U \setminus \{u\}$ and so $\sum_{v \in U}d^S_{uv} \geq |U| - 1 \geq |U| - |S|/2$;
(ii) otherwise if $u \notin X^S$, let $C$ be the component of $G-X^S-F$ that contains $u$. We have $|C \cap S| \leq |S|/2$ since $X^S$ is a $\half$-separator of $S$ in $G-F$. We also have $d^S_{uv} = 1$ for every $v \in U - (C \cap S)$ since $v$ is either in a different component of $G-X^S-F$ or in $X^S$. Thus,
\[\sum_{v \in U} d^S_{uv} \geq \sum_{v \in U - (C \cap S)} d^S_{uv} \geq |U| - |C \cap S| \geq |U| - |S|/2.\]
Therefore, the solution $(\xvec,\yvec,\dvec)$ is a feasible solution with $\sum_{(u,v) \in E} x_{uv} = |F|$.
\end{proof}

In the rest of this section, we describe our rounding algorithm for LP \eqref{lp:exp} and prove the following lemma.% at a high level and sketch the proof of the following lemma.
% (A detailed description of the algorithm and the proof can be found in Appendix \ref{sec:rounding-app}.)
\begin{lemma}
  \label{lem:tw-oracle}
  Given oracle access to a feasible solution $(\xvec, \yvec, \dvec)$ of LP \eqref{lp:exp}, we can find in time $\poly(n)$ an edge set $F$ such that $\tw(G-F) \leq O(w \log w)$ and $|F| \leq O(\log n \log \log n)\sum_{(u,v) \in E}x_{uv}$. \end{lemma}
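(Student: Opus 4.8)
The plan is to turn the feasible LP solution into a recursively-built tree decomposition of $G - F$, deciding on the fly which edges to add to $F$. The overall structure follows the classical proof that small balanced vertex separators yield a tree decomposition: given the current ``active'' set $S$ of vertices (initially $S = V$), we find a $\frac23$-balanced separator of $S$, recurse on the pieces, and glue. The twist is that at each step we have only a \emph{fractional} separator: for the set $S$ we have weights $y^S_v$ with $\sum_v y^S_v \le w$ together with the edge lengths $x_e$ and the spreading metric $d^S$. I would run a joint region-growing argument on the combined edge-and-vertex metric: pick a vertex $u \in S$ (a suitably ``central'' one so that the spreading constraints guarantee a ball of radius $\le \frac12$ contains at most $\frac23|S|$, and its complement at most $\frac23|S|$ — here is where the $U \subseteq S$ spreading constraints are used, as in the standard $O(\log n)$-balanced-separator analysis), and grow a ball around $u$ in the metric $d^S$. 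Cutting at a radius $r \in (0, \frac12)$ removes the edges $e$ crossing the ball (charged to $x_e$) and the vertices $t$ in the ``boundary shell'' of the ball (charged to $y^S_t$, and these vertices go into the current bag). A standard averaging/region-growing argument shows there is a cut radius where the total charge is $O(\log(\cdot))$ times the LP mass $\sum_{e} x_e + \sum_t y^S_t$ restricted to the ball; since the ball has metric diameter $\le 1$ and spreading forces the ``volume'' lower bound, the Garg–Vazinari–Yannakakis–style calculation gives an $O(\log n)$ factor per level.

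Next I would control the recursion depth. Because each separator is $\frac23$-balanced with respect to $S$, the size of the active set drops by a constant factor, so after $O(\log n)$ levels every leaf set has $O(1)$ vertices and we stop. The edges cut at level $i$ are charged against the $x$-mass of disjoint balls at that level, so each level contributes $O(\log n)\sum_e x_e$ to $|F|$, and summing over $O(\log n)$ levels gives $|F| = O(\log^2 n)\sum_e x_e$ naively. To get down to $O(\log n \log\log n)$ I would invoke Seymour's refined region-growing trick: instead of a fixed logarithmic loss per level, one parametrizes the cut threshold so that edges that are repeatedly ``nearly cut'' across many levels are paid for only once; concretely, one tracks for each edge the number of levels in which it survives close to a cut boundary and shows the total over all levels telescopes to $O(\log n \log\log n)\sum_e x_e$. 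The vertices placed in bags: at each level the bag gets the boundary-shell vertices, whose $y^S$-mass is $\le w$; the region-growing bound gives $O(\log w)$ times $w$ vertices actually placed (the $\log w$, not $\log n$, because the vertex weights sum to only $w$ and we can afford to lose a log in that smaller quantity), so each bag has $O(w\log w)$ vertices and hence $\tw(G - F) \le O(w \log w)$, as claimed. One has to check the tree-decomposition axioms: the recursive bags, augmented by the running separator vertices carried down from ancestors, satisfy (i) and (ii) — this is the routine part and mirrors the textbook argument (e.g.\ Reed).

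The main obstacle, and the place I would spend the most care, is making the two logarithmic savings coexist: we want an $O(\log n \log\log n)$ factor on the \emph{edges} removed but only $O(\log w)$ on the \emph{vertices} in the bags, and these come from the \emph{same} region-growing cut. The difficulty is that Seymour's trick wants a single threshold function applied uniformly, while the edge mass and vertex mass in a ball scale differently ($\sum_e x_e$ can be as large as $|F|$, but $\sum_t y^S_t \le w$). I expect the resolution is a two-phase cut at each node: first do a coarse vertex-cut (losing only $O(\log w)$ since the vertex budget is $w$) that isolates the separator vertices into the bag, \emph{then} do Seymour-style region growing on the residual edge metric to pick which edges to delete, ensuring the ball radii chosen for the edge cuts are compatible with the balance guarantee already obtained from the vertex cut. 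Getting the balance condition and the two radius choices to be simultaneously satisfiable — while keeping the pieces ``active sets'' that still carry valid spreading metrics inherited from the parent's LP solution (restricting $d^S$ to a sub-ball and re-centering) — is the delicate bookkeeping. Everything else (the per-level averaging, the depth bound, verifying the decomposition axioms, and the final accounting $|F| \le O(\log n \log\log n)\sum_e x_e$) is standard once that is in place.
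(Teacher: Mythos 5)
Your high-level plan matches the paper's: turn the LP solution into a recursive tree decomposition by region growing in the combined edge-and-vertex metric $d^S$, use the spreading constraints for $\frac23$-balance, charge cut edges to the $x$-mass and cut boundary vertices to the $y^S$-mass (with an $O(\log w)$ loss on the latter because $\sum_v y^S_v \le w$), and invoke Seymour's trick to get $O(\log n\log\log n)$ rather than $O(\log^2 n)$ on the edges via a telescoping sum over the $x$-volumes of the recursive subgraphs. The balance is actually obtained somewhat more simply than you describe: one grows balls from \emph{arbitrary} centers in $S$ at radius at most $\rmax = 1/12$, and a short argument (the paper's Lemma~\ref{lem:radius}) shows that any vertex set with pairwise $d^S$-distance at most $1/6$ contains at most $\frac23|S|$ of $S$; no "central" center is needed, and one repeats the region growing until at most $\frac23$ of $S$ remains.

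The part you flag as the main obstacle — achieving $O(\log\log n)$-type loss on the edges and only $O(\log w)$ loss on the vertices from the \emph{same} cut — is the crux, and the paper resolves it differently from your proposed two-phase cut. The paper's Region Growing Lemma (Lemma~\ref{lem:region}) proves that a \emph{single} radius $r\in[0,\rmax]$ can be simultaneously "$x$-good" (satisfying a Seymour-style bound $|\xcut(s,r)|\le O(\log\log n)\cdot\ln\bigl(\vol_x(s,\rmax)/\vol_x(s,r)\bigr)\vol_x(s,r)$) and "$y$-good" ($|\ycut(s,r)|\le O(\log w)\cdot\vol_y(s,r)$), by showing separately that the set of $x$-bad radii and the set of $y$-bad radii each have Lebesgue measure at most $1/48$: the $x$-bound uses the $\ln\ln$ trick (integrating the derivative of $-\ln\ln$ of the volume ratio over $[0,\rmax]$ gives at most $\ln\ln[e(n+1)]$), and the $y$-bound uses the ordinary $\ln$ trick with the fact that $\vol_y$ ranges in $[1/w, w+1/w]$. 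Since both bad sets are small, a single radius works for both; there is no need (and it is not clear it would work) to first fix a vertex cut and then try to find a compatible edge-cut radius on a "residual edge metric."

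You also miss a technical subtlety the paper highlights explicitly: when a ball $B$ and its vertex boundary $\ycut(B)$ are removed from the working graph, every edge $(u,v)$ with $u\in\ycut(B)$ and $v$ still in the residual graph disappears, yet some later recursive subgraph $H_i$ may need to contain this edge (the subgraphs $H_i$ include edges between a component $C_i$ and the separator $X$). To keep the charging scheme and the containment property ("each $H_i$ is fully inside some region") intact, the paper adds weightless "zombie" copies of these edges back into the residual graph before the next iteration of region growing. Without this fix the bounded-charge and containment invariants that drive the telescoping argument break down.
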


\subsection{Sketch of  the Rounding Algorithm}
\label{sec:rounding-sketch}
Let $(\xvec,\yvec,\dvec)$ be a feasible solution to LP \eqref{lp:exp}. As mentioned in the Introduction, our rounding algorithm is based on a recursive algorithm for constructing tree decompositions with Seymour's recursive graph decomposition trick layered on top. We now describe at a high level how these ideas are combined together and highlight the key issues that arise.

\paragraph{Classic Tree Decomposition.} We begin by outlining the relevant parts of the classic tree decomposition algorithm \cite{RobertsonS95b} (which we call $\algclassic$). It is a recursive algorithm: Given a subgraph $H$ of $G$ and an integer $w$, it either constructs a tree decomposition of width $O(w)$ or it finds a $w$-linked set $S$ which certifies\footnote{It certifies $\tw(H) \geq w$, and the treewidth of a graph is at least the treewidth of any subgraph.} that $\tw(G) \geq w$. There are two key steps (illustrated in Figure \ref{fig:recursive-separator}) that are important to us:
\begin{itemize}
\item Separate: find a minimum-size $\near$-separator $X$ of a vertex set $S$ in $H$ (the particular choice of $S$ depends on previous recursive steps). \item Recurse: for each component $C_i$ of $H-X$, recurse on the subgraph $H_i$ of $H$ which consists of the edges of $H$ induced by $C_i$ and those between $C_i$ and $X$.
\end{itemize}
We say that $\algclassic$ ``succeeds'' if it does not encounter a $w$-linked vertex set $S$. In particular, when $\algclassic$ succeeds, the separators found during its execution can be used to construct a tree decomposition of width $O(w)$; when it fails, it has found a $w$-linked set $S$ during its recursion. 

\begin{figure}
  \centering
  \hfill
  \includegraphics[scale=0.3]{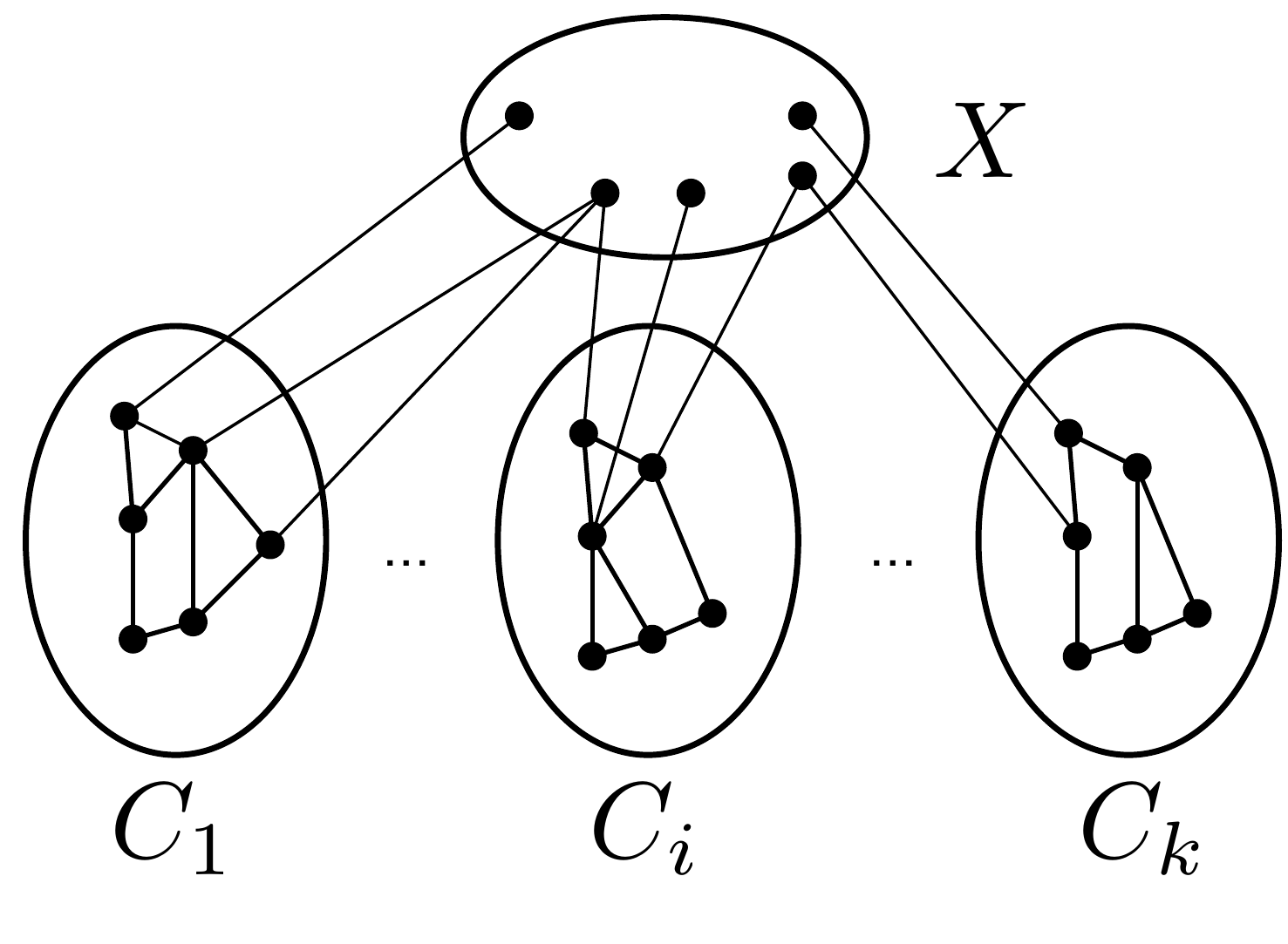}
  \hfill{}
  \caption{$\algclassic$ finds a separator $X$, and for each component $C_i$ of $H-X$, it recurses on the subgraph $H_i$ consisting of edges of $H$ with at least one endpoint in $C_i$.}
  \label{fig:recursive-separator}
\end{figure}

\paragraph{Our Algorithm.}
Our algorithm (which we call $\interdict$) largely follows along the lines of $\algclassic$. The main difference is that we also want to delete edges to ensure that size of the separators found in the recursion are small enough so that we succeed in constructing a tree decomposition of width $O(w\log w)$. We still make the same choices about which set $S$ to separate and how to recurse; this allows us to reuse the analysis of $\algclassic$ to prove that we have deleted enough edges to reduce the treewidth down to $O(w \log w)$. In particular, instead of the Separate step, we want to perform a ``Delete and Separate'' step instead.
\begin{itemize}
\item Delete and Separate: delete a subset $D$ of edges and find a vertex set $X$ of size $O(w \log w)$ such that $X$ is a $\near$-separator of $S$ in $H-D$.
\end{itemize}
Here is where LP \eqref{lp:exp} is useful. It is similar to the spreading metric relaxation for finding minimum balanced vertex separators, except that it gives edge-and-vertex separators. As mentioned in the Introduction, one can apply standard region growing techniques in an almost black box fashion along with other tricks to obtain a $(\log^2 n, \log w)$-approximation to treewidth interdiction.

\begin{figure*}
  \centering
                \hfill
  \subfloat{\label{fig:embedding}\includegraphics[scale=0.3]{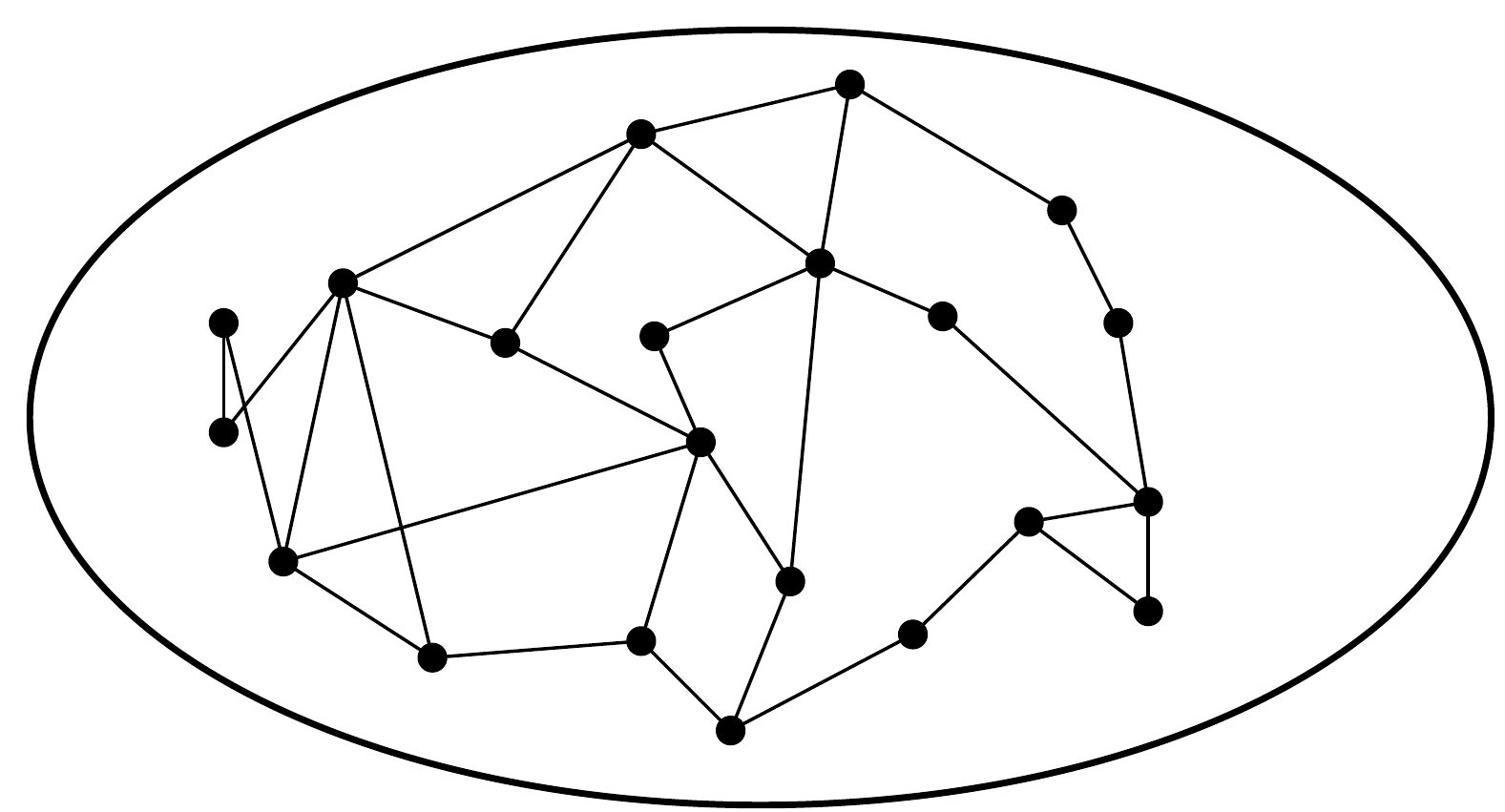}}
  \hfill
  \subfloat{\label{fig:regiongrowing}\includegraphics[scale=0.3]{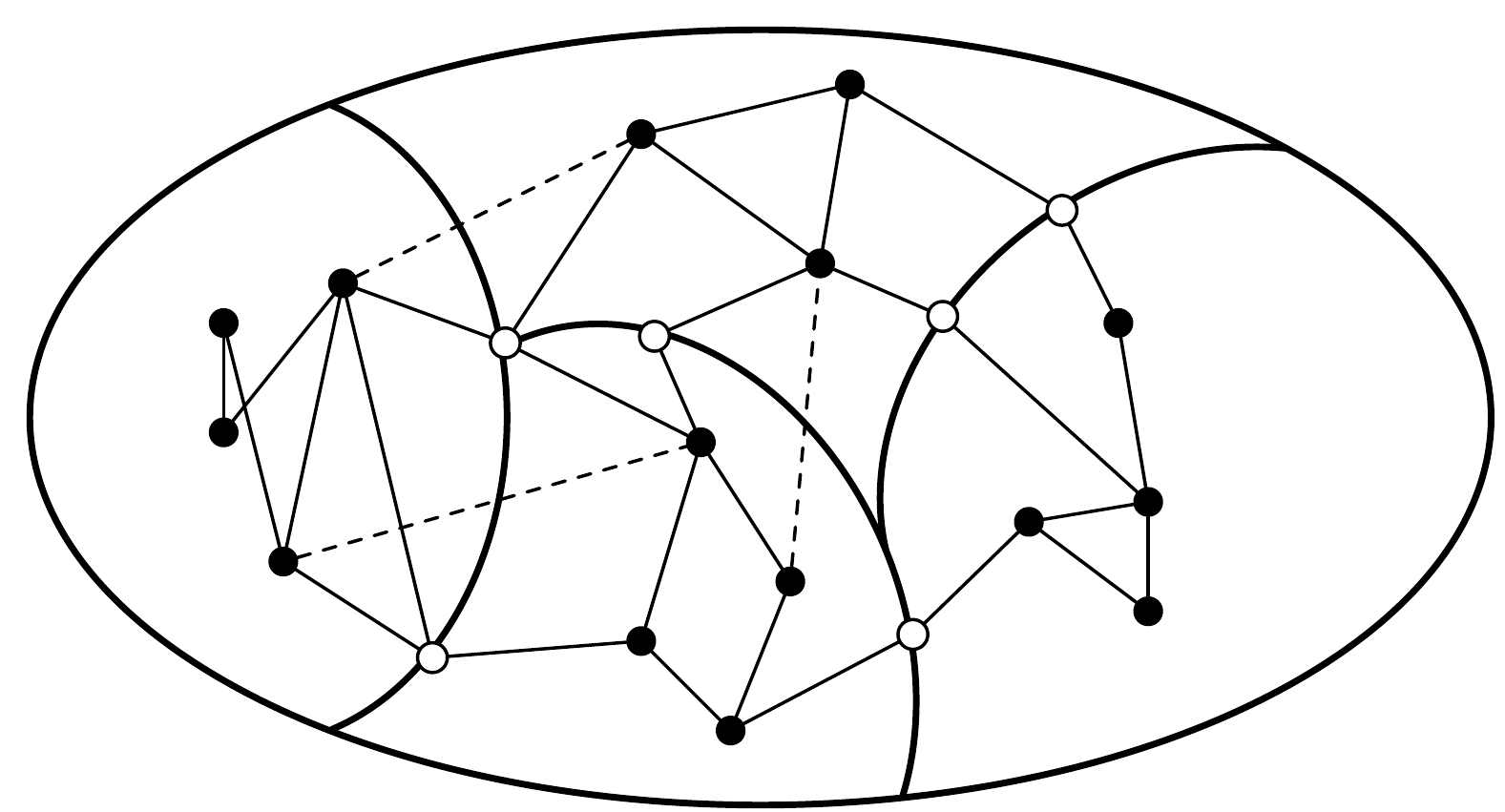}}
  \hfill
  \subfloat{\label{fig:region-subgraph}\includegraphics[scale=0.3]{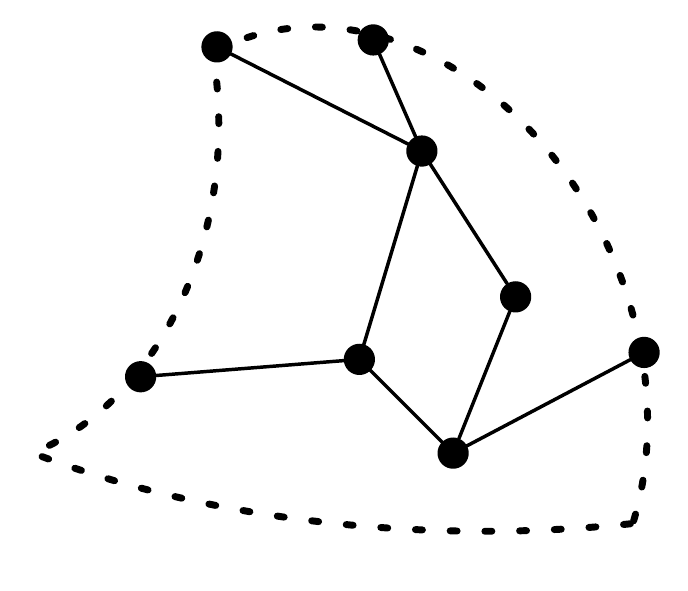}}
  \hfill{}
  \caption{$\interdict$ uses the spreading metric given by LP solution (left), partitions the metric into regions with cut vertices shown as hollow vertices and cut edges shown as dashed edges (center), and for each region, recurses on the subgraph contained in it (right). Note that the subgraph includes the cut vertices on the boundary of the region as well.}
  \label{fig:partitioning}
\end{figure*}

Obtaining a $(\log n \log \log n, \log w)$-approximation needs some care. At a high level, we want to apply region growing recursively using Seymour's recursion. The basic idea is to ensure that not only is $|D|$ bounded by the cost of the LP solution projected on $H$, but it is also bounded in some nice way by the cost of the LP solution on the subgraphs $H_i$ we recurse on. In particular, we want to use region growing to partition the spreading metric $d^S$ into ``regions'' $B_1, \ldots, B_k$ (illustrated in Figure \ref{fig:partitioning}) with the following properties:
\begin{itemize}
\item (bounded charge) the set of edges $\xcut(B_i)$ (vertices $\ycut(B_i)$ resp.) cut by $B_i$ can be charged to the $x_{uv}$ variables ($y^S_v$ variables, resp.) ``contained'' in the region,
\item (containment) each subgraph $H_i$ we recurse on is contained inside a region.
\end{itemize}
We can then choose $D = \xcut(B_1) \cup \cdots \cup \xcut(B_k)$ and $X=\ycut(B_1) \cup \cdots \cup \ycut(B_k)$. Due to the structure of the subproblems in the Recurse step, ensuring the second property requires some care with how we implement region growing.
The problem is that the separators involve both edges and vertices, and they play different roles, i.e.~edges are deleted globally, while vertices are deleted locally for each $S$.
Normally, the region growing technique proceeds by finding a region $B_i$ in the current graph that satisfies the bounded charge property, removes $B_i \cup \ycut(B_i)$---the vertices contained in or cut by $B_i$ from the graph---and repeats on the residual graph. However, this would also remove any edge $(u,v)$ with $u \in \ycut(B_i)$, even though $v$  is still remaining in the residual graph. In this case, no future region can contain or cut the edge $(u,v)$. This is a problem if $v$ ends up being in some component $C_i$ later, as we would have a subgraph $H_i$ that is not contained in any region.Thus, we need to somehow preserve edges between $\ycut(B_i)$ and the residual vertices. We will do this by introducing copies of these edges that we need to preserve, called ``zombie edges". A detailed description of Delete and Separate step as well as the rounding algorithm appears in the next subsection.

%  \subsection{Details of the Rounding Algorithm}
% \label{sec:rounding-app}

\subsection{The Delete and Separate Step}
\label{sec:algpart}

We start by describing the Delete and Separate step sketched out in Section \ref{sec:rounding-sketch}. The main ingredient is the region growing technique.

\paragraph{Region Growing.}
\label{sec:region}
We begin with some standard definitions. We define the ball $B(s,r)$ centered at vertex $s$ with radius $r$ as $B(s,r) = \{v : d^S_{sv} \leq r\}$. We say that a vertex $v$ is \emph{cut} by $B(s,r)$ if $d^S_{sv} - y^S_v < r < d^S_{sv}$, i.e.~$v$ is only ``partially inside'' the ball, and use $\ycut(s,r)$ to denote the set of vertices cut by $B(s,r)$; likewise, an edge $(u,v)$ is \emph{cut} by $B(s,r)$ if $d^S_{su} \leq r < d^S_{su} + x_{uv}$ and we use $\xcut(s,r)$ to denote the set of edges cut by $B(s,r)$.
We define two quantities for the $x$-cost and $y$-cost.
The \emph{cost} $\xval(s,r)$ of $B(s,r)$ is defined to be the total cost of the LP solution ``contained'' in $B(s,r)$:
 \begin{align*}
   \xval(s,r) = \sum_{u \in B(s,r), v \in B(s,r) \cup \ycut(s,r)} x_{uv}\\
    +\sum_{(u,v) \in \xcut(s,r) : u \in B(s,r), v \notin B(s,r)} r - d^S_{su}
  \end{align*}

Similarly, its \emph{weight} is defined to be the total fractional weight that is ``contained'' in $B(s,r)$:
\[\yval(s,r) = \sum_{v \in B(s,r)} y^S_v + \sum_{v \in \ycut(s,r)}   r - (d^S_{sv} -y^S_v).\]
See Figure \ref{fig:region} for an illustration.
The \emph{$x$-volume} and \emph{$y$-volume} is then defined to be $\xvol(s,r) =  \frac{\xval(G)}{n^2} + \xval(s,r)$ and $\yvol(s,r) = \frac{1}{w} + \yval(s,r)$.

These notions also extend to subgraphs. Given a subgraph $H'$, we define $\xval(H') = \sum_{(u,v) \in H'} x_{uv}$ and $\yval(H') = \sum_{v \in H'} y^S_v$. Similarly, $\xvol(H') =  \frac{\xval(G)}{n^2} + \xval(H')$ and $\yvol(H') = \frac{1}{w} + \yval(H')$.

\begin{figure}
  \centering
  \includegraphics[scale=0.4]{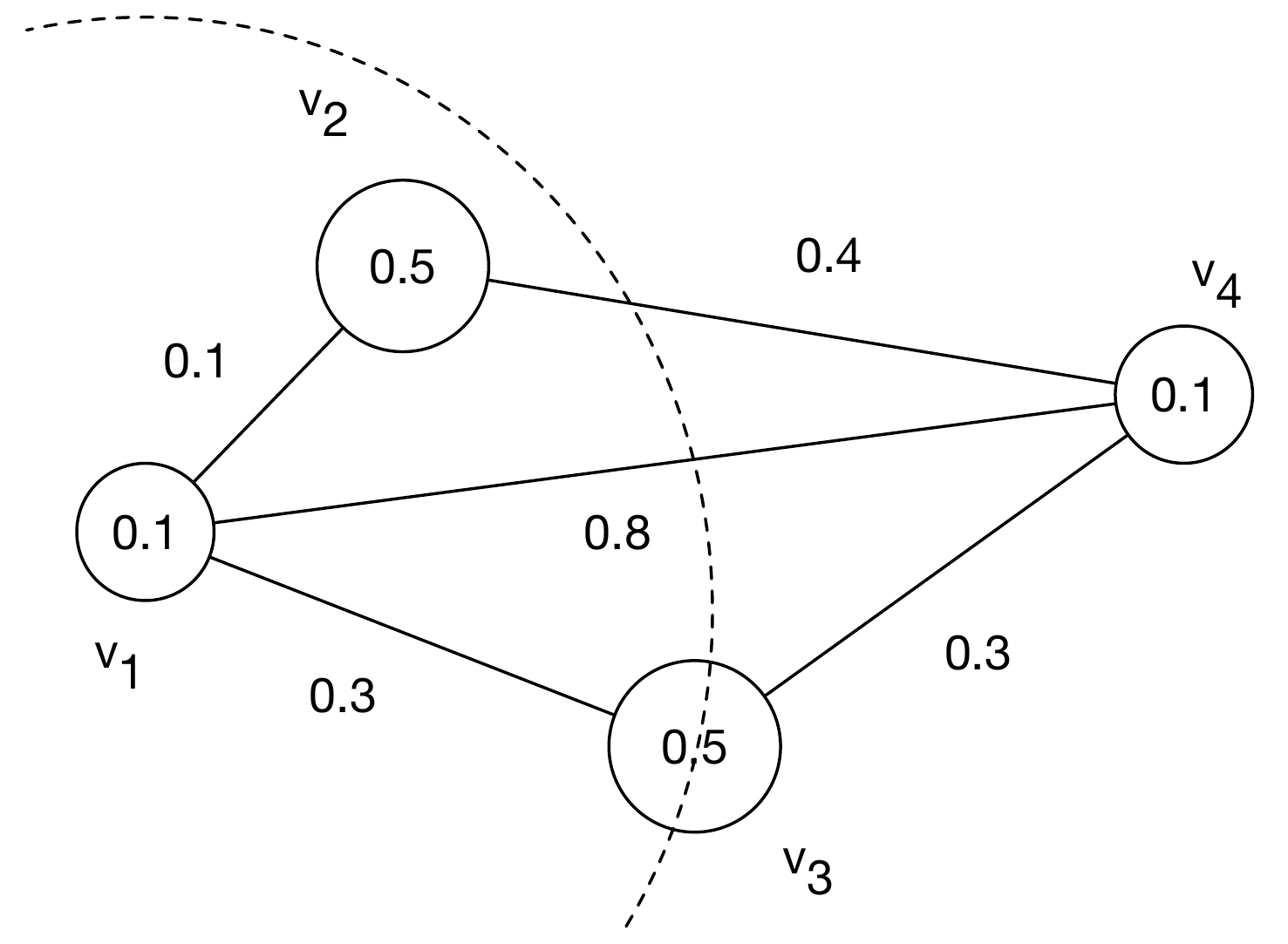}
  \caption{In this figure, the edge lengths represents the $x$ variables, and vertices are represented as circles whose diameters correspond to their $y^S$ weights. The dashed arc is centered at $v_1$ and has radius $0.8$. We have $B(v_1,0.8) = \{v_1,v_2,v_3\}$. Its boundary edges and vertices are $\xcut(v_1,0.8) = \{(v_2,v_4),(v_1,v_4)\}$ and $\ycut(v_1,0.8) = \{v_3\}$, respectively. Towards $\xvol(v_1,0.8)$, edge $(v_1,v_2)$ contributes $0.1$, $(v_1,v_3)$ contributes $0.3$, $(v_1,v_4)$ contributes $0.7$, and $(v_2,v_4)$ contributes $0.1$. Towards $\yvol(v_1,0.8)$, $v_1$ contributes $0.1$, $v_2$ contributes $0.5$, and $v_3$ contributes $0.4$.}
  \label{fig:region}
\end{figure}

The following lemma shows that there exists a \emph{good} radius $r \in [0,\rmax]$ such that the number of edges cut by $B(s,r)$ can be charged to $\xvol(s,r)$ and simultaneously, the number of vertices cut by $B(s,r)$ can be charged to $\yvol(s,r)$. \begin{lemma}[Region Growing Lemma]
  \label{lem:region}
  For every vertex $s$, we can efficiently find a \emph{good} radius $r \in [0,\rmax]$ such that
          \begin{align*}
    |\xcut(s,r)| &\leq O(\log \log n) \cdot \ln \frac{\vol_x(s,\rmax)}{\vol_x(s,r)} \cdot \vol_x(s,r),\\
    |\ycut(s,r)| &\leq \const \cdot \log w \cdot \vol_y(s,r),
  \end{align*}
  for some constant $\const$.
\end{lemma}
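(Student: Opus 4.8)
\textbf{Proof plan for the Region Growing Lemma (Lemma~\ref{lem:region}).}
The plan is to run two coupled region-growing arguments from the same center $s$, one controlling the edge cut against $\vol_x$ and one controlling the vertex cut against $\vol_y$, and to find a single radius $r\in[0,\rmax]$ that is simultaneously good for both by a measure/averaging argument. First I would set up the standard ``volume derivative'' inequalities. For the $y$-side, consider the function $\yvol(s,r)$; as $r$ grows, each vertex $v\in\ycut(s,r)$ contributes a slab of thickness $y^S_v$ over which $\yvol$ increases at unit rate, so $\frac{d}{dr}\yvol(s,r)\ge|\ycut(s,r)|$ at radii where no vertex or edge boundary event occurs (and $\yvol$ jumps up at such events, which only helps). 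If $|\ycut(s,r)|>\const\log w\cdot\yvol(s,r)$ held for \emph{all} $r\in[0,\rmax]$, then $\frac{d}{dr}\ln\yvol(s,r)>\const\log w$ throughout, so $\ln\frac{\yvol(s,\rmax)}{\yvol(s,0)}>\const\log w\cdot\rmax=\const\log w/12$, i.e. $\yvol(s,\rmax)>\yvol(s,0)\cdot w^{\Omega(\const)}$. But $\yvol(s,0)\ge\frac1w$ by definition, while $\yvol(s,\rmax)\le\frac1w+\sum_v y^S_v\le\frac1w+w\le 2w$ using the first LP constraint $\sum_v y^S_v\le w$; choosing $\const$ a large enough absolute constant makes $w^{\Omega(\const)}/w>2w$, a contradiction. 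Hence the set $\mathcal{G}_y$ of radii $r\in[0,\rmax]$ violating the $y$-bound cannot be all of $[0,\rmax]$; in fact the same computation shows it must have small measure. I would make this quantitative: the subset of $[0,\rmax]$ where the $y$-inequality \emph{fails} (i.e. where $|\ycut(s,r)|>\const\log w\cdot\yvol(s,r)$) has Lebesgue measure at most $O\!\big(\frac{\log(2w^2)}{\const\log w}\big)=O(1/\const)\cdot\rmax$, which is a small constant fraction of $\rmax$ once $\const$ is large.

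Next I would handle the $x$-side, where the extra $O(\log\log n)\cdot\ln(\vol_x(s,\rmax)/\vol_x(s,r))$ factor means this is exactly Seymour's sharper region-growing rather than the crude version. Here one uses the potential $\Psi(r)=\ln\ln\frac{\vol_x(s,\rmax)}{\vol_x(s,r)}$ (or the related $\ln\frac{\vol_x(s,\rmax)}{\vol_x(s,r)}\cdot$ iterated-log trick): because $\vol_x(s,r)$ grows at rate $\ge\xcut(s,r)$ in $r$ (again up to upward jumps at boundary events, which help), the standard Seymour calculation shows that the measure of radii $r\in[0,\rmax]$ for which $|\xcut(s,r)|>c\log\log n\cdot\ln\frac{\vol_x(s,\rmax)}{\vol_x(s,r)}\cdot\vol_x(s,r)$ is at most $O(1/(c\log\log n))\cdot\big(\Psi(0)-\Psi(\rmax)\big)$ or so — crucially one uses $\vol_x(s,\rmax)/\vol_x(s,0)\le n^2$, which holds because $\vol_x(s,0)\ge\xval(G)/n^2$ while $\vol_x(s,\rmax)\le\xval(G)/n^2+\xval(G)\le 2\xval(G)$ (and if $\xval(G)=0$ the lemma is trivial since then $|\xcut(s,r)|=0$). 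So $\ln\frac{\vol_x(s,\rmax)}{\vol_x(s,0)}\le 2\ln n$, hence $\ln\ln\frac{\vol_x(s,\rmax)}{\vol_x(s,0)}\le\ln(2\ln n)=O(\log\log n)$, and the bad-$x$-radius set has measure at most $O(\log\log n)/(c\log\log n)\cdot\rmax=O(1/c)\cdot\rmax$, again a small constant fraction of $\rmax$ for $c$ a large absolute constant.

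Finally I would combine the two: the set of $r\in[0,\rmax]$ that is bad for the edge bound has measure $\le\rmax/4$ and the set bad for the vertex bound has measure $\le\rmax/4$ (adjusting the absolute constants $c,\const$), so their union misses a subset of $[0,\rmax]$ of measure $\ge\rmax/2>0$; any $r$ in this subset is good for both inequalities simultaneously. For the ``efficiently find'' claim, note that both cut functions $|\xcut(s,r)|,|\ycut(s,r)|$ and both volumes are piecewise-constant / piecewise-linear in $r$ with breakpoints only at the finitely many values $d^S_{su}$, $d^S_{su}+x_{uv}$, $d^S_{sv}-y^S_v$ (polynomially many), so one just enumerates the $O(n^2)$ candidate intervals, and within each interval the inequalities are monotone enough to check a single representative radius (e.g. the right endpoint); return the first good one.

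\textbf{Main obstacle.} I expect the delicate part to be the $x$-side Seymour bound interacting with the \emph{joint} structure: making the region-growing simultaneously respect the edge volume (with the $\log\log n$ improvement, which requires the $\ln\ln$ potential and the $n^2$ cap built into $\vol_x(s,0)=\xval(G)/n^2+\cdots$) and the vertex volume (with only the cruder $\log w$ factor and the $1/w$ additive term). In particular one must be careful that boundary events for edges and for vertices — which occur at different radii and cause the two volume functions to jump at different places — do not destroy the differential-inequality bookkeeping; the standard fix is that all such jumps are increases in the relevant volume and decreases (or no change) in the relevant cut count at the jump, so every jump can only help both the $\ln$-derivative estimates, and the measure bounds go through. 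The additive regularizers $\xval(G)/n^2$ and $1/w$ are exactly what keeps $\ln(\vol_x(s,\rmax)/\vol_x(s,0))=O(\log n)$ and $\ln(\yvol(s,\rmax)/\yvol(s,0))=O(\log w)$ bounded, which is what converts ``the bad set can't be everything'' into ``the bad set has small measure,'' and hence into the existence of a commonly good radius.
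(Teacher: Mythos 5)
Your proposal follows essentially the same route as the paper's own proof: bound the measure of the set of radii in $[0,\rmax]$ that violate the $y$-inequality using the single-log potential $\ln \yvol(s,r)$ (with the $1/w$ regularizer giving $\yvol(s,\rmax)/\yvol(s,0) \le O(w^2)$), bound the measure of the set violating the $x$-inequality using Seymour's double-log potential $\ln\ln(\vol_x(s,\rmax)/\vol_x(s,r))$ (with the $\xval(G)/n^2$ regularizer giving the ratio $\le O(n^2)$, hence total potential drop $O(\log\log n)$), observe each bad set occupies at most a constant fraction of $[0,\rmax]$, and then find a common good radius by enumerating the polynomially many breakpoints of the piecewise-constant cut functions. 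The only cosmetic difference is that the paper phrases the measure bound as a contradiction (assuming measure $>1/48$ and over-integrating), whereas you state it directly; the content is identical.
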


\begin{proof}
  Say that $r$ is \emph{$x$-good} if it satisfies the first inequality and \emph{$y$-good} if it satisfies the second inequality. Set $r(n):=\ln \ln [e (n + 1)]$ and define the following sets of radii:
  \begin{align*}
    A_x = &\Bigl\{r \in [0,\rmax] : \\ &\frac{|\xcut(s,r)|}{\xvol(s,r)} >
      \markov \cdot \ln \frac{e \cdot \vol_x(s,\rmax)}{\vol_x(s,r)}
      \cdot r(n) \Bigr\},
  \end{align*}
  \begin{align*}
A_y &= \left\{r \in [0,\rmax] : \frac{|\ycut(s,r)|}{\yvol(s,r)} > \markov \cdot \ln (w^2+1)\right\}.
  \end{align*}
In other words, $A_x$ and $A_y$ are the sets of radii that are $x$-bad and $y$-bad, respectively. We claim that the measure of both $A_x$ and $A_y$ are small: $\mu(A_x), \mu(A_y) \leq 1/\markov$. The claim then implies that there exists $r \in [0, \rmax]$ that is simultaneously $x$-good and $y$-good.

Observe that $\frac{\partial \xvol(s,r)}{\partial r} = |\xcut(s,r)|$ and $\frac{\partial \yvol(s,r)}{\partial r} = |\ycut(s,r)|$. Suppose, towards a contradiction, that $\mu(A_x) > 1/\markov$. We have
\begin{align*}
&\frac{\partial}{\partial r} \left(-\ln \ln \frac{e \cdot \xvol(s,\rmax)}{\xvol(s,r)}\right)\\
&= \frac{|\xcut(s,r)|}{\xvol(s,r)\cdot \ln \frac{e \cdot \xvol(s,\rmax)}{\xvol(s,r)}}
\end{align*}
So
  \begin{align*}
    &\ln \ln \frac{e \cdot \xvol(s,\rmax)}{\xvol(s,0)}\\
    &= -\ln \ln \frac{e \cdot \xvol(s,\rmax)}{\xvol(s,r)} \ \Bigg|^{\rmax}_0 \\
    &= \int_0^{\rmax} \frac{|\xcut(s,r)|}{\xvol(s,r)\cdot \ln \frac{e \cdot \xvol(s,\rmax)}{\xvol(s,r)}}\, \partial r \\
    &\geq \mu(A_x) \cdot \markov \ln \ln [e (n + 1)]
    >  r(n).
  \end{align*}
  But
  \[\ln \ln \frac{e \cdot \xvol(s,\rmax)}{\xvol(s,0)}
  \leq \ln \ln \frac{e (\xvol + \xvol/n)}{\xvol/n}
  = r(n).\]
  Thus, we have a contradiction and so $\mu(A_x) \leq 1/\markov$.

  We now turn to bounding $\mu(A_y)$. Suppose, towards a contradiction, that $\mu(A_y) > 1/\markov$.
  \begin{align*}
    \ln \frac{\yvol(s,1/\markov)}{\yvol(s,0)}
    &= \int_0^{\rmax} \frac{1}{\yvol(s,r)} \frac{\partial \yvol(s,r)}{\partial r}\, \partial r\\
    &= \int_0^{\rmax} \frac{|\ycut(s,r)|}{\yvol(s,r)}\, \partial r \\
    &\geq \mu(A_y) \cdot \markov \ln (w^2+1)
    > \ln (w^2+1).
  \end{align*}
  But \[\ln \frac{\yvol(s,1/\markov)}{\yvol(s,0)}
  \leq \ln \frac{w + 1/w}{1/w}
  = \ln(w^2 + 1).\]
  Therefore, both $\mu(A_x), \mu(A_y) \leq 1/\markov$ and so there exists $r \in [0, \rmax]$ that is simultaneously $x$-good and $y$-good.

  To find such an $r$ efficiently, note that as $r$ grows, $\vol_x(s,r)$ and $\vol_y(s,r)$ are non-decreasing. Thus, we only need to check the condition at points $r$ when either $\xcut(s,r)$ and $\ycut(s,r)$ changes, which happens at most $2|V'|$ and $2|E'|$ times, respectively. This is because as $r$ grows, once a vertex leaves $\ycut(s,r)$, it is inside $B(s,r)$ and will never reappear in $\ycut(s,r)$; and once an edge leaves $\xcut(s,r)$, both of its endpoints are inside $B(s,r)$ and the edge will never reappear in $\ycut(s,r)$.
\end{proof}

The next lemma shows that as we grow a ball $B(s,r)$ around vertex $s$, as long as $r \leq \rmax$, it cannot contain more than two-thirds of $S$.
\begin{lemma}
  \label{lem:radius}
  Let $U$ be a set of vertices. If $\max_{u,v \in U}d^S_{uv} \leq 1/6$, then $|U \cap S| \leq 2|S|/3$.
  \end{lemma}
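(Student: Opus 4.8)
The plan is to argue by contradiction via the spreading constraints of LP~\eqref{lp:exp}. Suppose $|U \cap S| > \frac{2}{3}|S|$. Set $U' := U \cap S$ and pick any $u \in U'$; then $u \in U' \subseteq S$, so the last family of constraints applies with this choice of $U'$ and $u$, giving $\sum_{v \in U'} d^S_{uv} \ge |U'| - \frac{|S|}{2}$. Since $|U'| > \frac{2}{3}|S|$, the right-hand side is strictly larger than $\frac{2}{3}|S| - \frac{1}{2}|S| = \frac{1}{6}|S|$, and in particular at least $\frac{1}{6}|U'|$ is not immediately what we want — so instead I would lower-bound the right-hand side as $|U'| - \frac{|S|}{2} \ge |U'| - \frac{3}{4}|U'| = \frac{1}{4}|U'|$ (using $|S| < \frac{3}{2}|U'|$), which shows $\sum_{v \in U'} d^S_{uv} \ge \frac{1}{4}|U'|$.

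On the other hand, every pair in $U$ (hence in $U' \subseteq U$) has $d^S_{uv} \le 1/6$ by hypothesis, so $\sum_{v \in U'} d^S_{uv} \le \frac{1}{6}|U'|$ (the $v=u$ term contributes $d^S_{uu} = 0$, so this sum is over $|U'|-1$ positive terms each at most $1/6$, hence at most $\frac{1}{6}(|U'|-1) \le \frac{1}{6}|U'|$). Combining the two bounds gives $\frac{1}{4}|U'| \le \frac{1}{6}|U'|$, which is a contradiction as soon as $|U'| \ge 1$ (and if $|U'| = 0$ the claimed inequality $|U \cap S| \le \frac{2}{3}|S|$ holds trivially). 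Hence $|U \cap S| \le \frac{2}{3}|S|$.

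The only slightly delicate point — and the step I would be most careful about — is getting the constant bookkeeping right: one must check that the fraction $2/3$ in the conclusion is exactly what is forced by the threshold $1/6$ on the diameter together with the $|S|/2$ slack in the spreading constraint. The chain is: $|U'| \le \frac{1}{6}|U'| \cdot \text{(something)}$ is too lossy, so one should instead directly compare $|U'| - \frac{|S|}{2}$ against $\frac{1}{6}|U'|$; feasibility requires $|U'| - \frac{|S|}{2} \le \frac{1}{6}|U'|$, i.e. $\frac{5}{6}|U'| \le \frac{1}{2}|S|$, i.e. $|U'| \le \frac{3}{5}|S| < \frac{2}{3}|S|$, which is even a bit stronger than claimed and certainly suffices. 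I would present it in this cleaner form: assume $|U \cap S| > \frac{2}{3}|S| > \frac{3}{5}|S|$, invoke the spreading constraint with $U' = U \cap S$ and an arbitrary $u \in U'$, bound the left side above by $\frac{1}{6}|U'|$ using the diameter hypothesis, and derive $\frac{5}{6}|U'| \le \frac{1}{2}|S|$, contradicting $|U'| > \frac{3}{5}|S|$. No region-growing or treewidth machinery is needed here; this lemma is purely a statement about the LP's spreading metric, and the proof is a two-line inequality chase once the right constraint is selected.
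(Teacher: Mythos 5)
Your proof is correct and takes essentially the same approach as the paper: both argue the contrapositive by applying the spreading constraint to $U' = U \cap S$ (using $|S| < \tfrac{3}{2}|U'|$ to lower-bound $\sum_{v \in U'} d^S_{uv}$ by $\tfrac{1}{4}|U'|$) and then contradicting the diameter hypothesis. The only cosmetic difference is that the paper finishes by averaging to exhibit a single $v$ with $d^S_{uv} > 1/4 > 1/6$, whereas you compare the summed upper bound $\tfrac{1}{6}|U'|$ directly against $\tfrac{1}{4}|U'|$; your version even yields the slightly sharper constant $3/5$.
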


\begin{proof}
  We prove the contrapositive. Suppose, that $|U \cap S| > 2|S|/3$. Let $u$ be a vertex in $U \cap S$. The spreading constraints imply that
  \[\sum_{v \in U \cap S} d^S_{uv} \geq |U \cap S| - \frac{|S|}{2} > |U \cap S| - \frac{3|U \cap S|}{4} = \frac{|U \cap S|}{4}.\]
Thus, by averaging, there exists $v \in U \cap S$ such that $d^S_{uv} > 1/4 > 1/6$.  \end{proof}

We are now ready to describe the $\algpart$ algorithm, which executes the Delete and Separate component of our algorithm. Recall that in the Delete and Separate step, the input is a subgraph $H = (V',E')$ and a vertex set $S \subseteq V'$, and our goal is find a set $D$ of edges to delete and a vertex set $X$ of size $O(w \log w)$ such that $X$ is a $\near$-separator of $S$ in $H-D$.

\paragraph{The $\algpart$ algorithm.}
\begin{enumerate}
\item Initialization. Let $\Hhat = H$
\item Region growing. While $\Hhat$ contains more than two-thirds of $S$,
  \begin{enumerate}
  \item Find region. Choose an arbitrary vertex $v \in S$ that is contained in $\Hhat$ and find a good radius $r$ such that $B(v,r)$ satisfies the conditions of Lemma \ref{lem:region}. Note that distances, balls and boundaries are defined with respect to $\Hhat$.
  \item Removal. Remove all vertices in $B(v,r)$ and their incident edges from $\Hhat$.
  \item Add zombies. For each edge $(s,u)$ that was removed, if $s \in \ycut(v,r)$ and $u$ is still in $\Hhat$, add \emph{zombie vertex} $z_u(s)$ with weight $y^S_{z_u(s)} = 0$ to $\Hhat$ and \emph{zombie edge} $(z_u(s), u)$ with length $x_{z_u(s),u} = x_{su}$.
  \end{enumerate}
\item Let $B(v_i,r_i)$ be the ball found in the $i$-th iteration.
\item Return the vertex set $X = \bigcup_i \ycut(v_i,r_i)$ and the edge set $D = \bigcup_i \xcut(v_i,r_i)$, replacing each zombie vertex $z_u(s)$ in $X$ with its original vertex $s$ and each zombie edge $(z_u(s), u)$ in $D$ with its original edge $(s,u)$.
\end{enumerate}

\begin{lemma}
  \label{lem:partitioning}
                  Suppose $\algpart$ took $\ell$ iterations and let $B(v_1,r_1), \ldots, B(v_\ell,r_\ell)$ be the regions it found. Let $C_1, \ldots, C_k$ be the components of $H-X-D$. For every $j \in [k]$, define the subgraph $H_j = (V_j,E_j)$ where $E_j$ is the subset of $E'-D$ with at least one endpoint in $C_j$, and $V_j$ is the set of endpoints of $E_j$.   We have the following:   \begin{enumerate}
  \item $|D| \leq O(\log \log n) \cdot\sum_{i=1}^\ell \ln\left(\frac{\xvol(H)}{\xvol(v_i,r_i)}\right) \xvol(v_i,r_i)$,
  \item $|X| \leq \Xconst \cdot (w +|S|/w) \log w$, for some constant $\Xconst$,
  \item $|C_i \cap S| \leq 2|S|/3$, and
  \item The edge set $E_j$ of each subgraph $H_j$ is either contained in a region $B(v_i,r_i)$ (and so $\xvol(H_j) \leq \xvol(v_i,r_i)$) or it is contained in the residual graph $\Hhat$ at the end of the execution of $\algpart$.
  \end{enumerate}
  \end{lemma}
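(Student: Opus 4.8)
The plan is to prove the four claims of Lemma~\ref{lem:partitioning} by tracking carefully what $\algpart$ does in each iteration and how the zombie bookkeeping interacts with the components of $H-X-D$.

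\textbf{Claim 1 (edge bound).} For each iteration $i$, Lemma~\ref{lem:region} gives $|\xcut(v_i,r_i)| \le O(\log\log n)\cdot \ln\frac{\xvol(v_i,\rmax)}{\xvol(v_i,r_i)}\cdot \xvol(v_i,r_i)$, where all quantities are computed in the current residual graph $\Hhat_i$. I would first argue $\xvol(v_i,\rmax) \le \xvol(H)$: the numerator is bounded by $\frac{\xval(G)}{n^2} + \xval(\Hhat_i)$, and since the total $x$-mass in $\Hhat_i$ (counting zombie edges, each of which is a copy of an original edge of $H$ whose original endpoint has been deleted) is at most $\xval(H)$ — the key point being that a zombie edge $(z_u(s),u)$ replaces the deleted edge $(s,u)$, so we never create net new $x$-mass — we get $\xvol(v_i,\rmax)\le \xvol(H)$. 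Summing over $i$ and using $D = \bigcup_i \xcut(v_i,r_i)$ (after un-zombifying, which is a bijection onto a subset of $E'$, so no double counting) yields the bound. The mild subtlety is that the $\frac{\xval(G)}{n^2}$ additive term must be shown not to accumulate badly; since there are at most $n$ iterations and each region contributes this term once inside a logarithm that is then multiplied by $\xvol \ge \xval(G)/n^2$, the total contribution is $O(\xval(G)\log\log n)$ which is absorbed — I would state this as a routine check.

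\textbf{Claims 2 and 3.} For Claim~3: each component $C_i$ of $H-X-D$ is, by construction, a subset of $B(v_j,r_j)$ for some single $j$ (a component cannot straddle two balls, since the vertices between consecutive balls were either absorbed into an earlier ball, cut by it into $X$, or are in the residual graph). Actually the cleanest route: every vertex of $V'$ not in $X$ is either in some $B(v_i,r_i)$ or in the final residual $\Hhat$; a component $C_i$ of $H-X-D$ that meets $B(v_j,r_j)$ must be entirely inside $B(v_j,r_j)$ because $\ycut(v_j,r_j)\subseteq X$ separates $B(v_j,r_j)$ from the rest in $H-D$. Then Lemma~\ref{lem:radius} applies since $B(v_j,r_j)$ has radius $r_j \le \rmax = 1/12$, so any two of its vertices are within $2\cdot \rmax = 1/6$, giving $|C_i\cap S|\le |B(v_j,r_j)\cap S| \le 2|S|/3$; components living in the residual graph contain no vertex of $S$-mass exceeding... actually the loop terminates precisely when $\Hhat$ contains $\le 2|S|/3$ of $S$, so those components are fine too. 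For Claim~2: $|X| = |\bigcup_i \ycut(v_i,r_i)| \le \sum_i |\ycut(v_i,r_i)| \le \const\log w\sum_i \yvol(v_i,r_i) = \const\log w\bigl(\sum_i \frac{1}{w} + \sum_i \yval(v_i,r_i)\bigr)$. The regions are vertex-disjoint in terms of the $y^S$-mass they contain (each vertex of $\Hhat$ is removed when its ball is processed, and zombie vertices carry weight $0$), so $\sum_i \yval(v_i,r_i) \le \sum_{v} y^S_v \le w$ by the first LP constraint; and there are at most... I need a bound on $\ell$. Since each iteration removes $v_i \in S$ from $\Hhat$ and $v_i$ is never re-added (zombies are not in $S$), $\ell \le |S|$, but a better bound comes from noting $\sum_i \frac{1}{w}\cdot$(something); I would argue $\ell \le |S|$ suffices only if we also bound it by, hmm — actually we want $|X| \le \Xconst(w + |S|/w)\log w$, so we need $\sum_i \frac1w \le |S|/w$, i.e. $\ell \le |S|$, which holds. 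Combined: $|X| \le \const\log w(|S|/w \cdot 1 + w)$... wait that gives $\const\log w(\ell/w + w) \le \const\log w(|S|/w + w)$. Good.

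\textbf{Claim 4 (containment) — the main obstacle.} This is the heart of the lemma and where the zombie construction earns its keep. I would argue as follows. Fix a component $C_j$ of $H-X-D$ and its subgraph $H_j$ with edge set $E_j \subseteq E'-D$ having an endpoint in $C_j$. Case (a): $C_j$ lies inside some ball $B(v_i,r_i)$ (in the graph $\Hhat_i$). Then every edge of $E_j$ with an endpoint in $C_j$ either has both endpoints in $B(v_i,r_i)\cup\ycut(v_i,r_i)$ — hence is "contained" in the region and charged in $\xval(v_i,r_i)$ — or it is cut by $B(v_i,r_i)$, but cut edges are in $D$, contradiction; so $E_j$ is contained in the region and $\xvol(H_j)\le\xvol(v_i,r_i)$. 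The danger scenario the zombies guard against: an edge $(s,u)$ with $s\in\ycut(v_i,r_i)$ (so $s\in X$ eventually) and $u$ surviving into a later ball $B(v_{i'},r_{i'})$; without zombies, $(s,u)$ would have been deleted from $\Hhat$ when $B(v_i,r_i)$ was processed and would belong to no region, yet it is in $E'-D$ and has endpoint $u$ possibly in $C_{i'}$ — so $E_{i'}$ would not be contained. The fix: the zombie edge $(z_u(s),u)$ with $x_{z_u(s),u}=x_{su}$ carries the $x$-mass of $(s,u)$ forward, so when $B(v_{i'},r_{i'})$ is grown it either contains $(z_u(s),u)$ (accounting for the mass of the real edge $(s,u)$, since $s\in X$ will be a boundary vertex of region $i'$ after un-zombifying) or cuts it (so $(s,u)\in D$ after un-zombifying, so it is not in $E_j$). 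Case (b): $C_j$ is a set of vertices in the final residual $\Hhat$; then I claim $E_j\subseteq\Hhat$ (as an edge set, including its zombie incarnations): any edge of $E'-D$ incident to $C_j$ that was removed during some iteration $i$ must have had an endpoint in $B(v_i,r_i)$; the other endpoint is in $C_j\subseteq\Hhat$, so either the edge was cut (in $D$, contradiction) or a zombie copy was added and still survives in $\Hhat$ — so it is present in the final residual graph, giving $\xvol(H_j)\le\xvol(\Hhat)$. I expect the write-up difficulty to be entirely in phrasing this case analysis precisely — in particular making the "un-zombify is a bijection that respects $\in D$, $\in$ region, $\in\Hhat$" statement airtight, since a single original edge $(s,u)$ could in principle spawn a chain of zombie copies across several iterations, and one must check that at each stage the $x$-mass is conserved and the final disposition (in $D$, in some region's $\xval$, or in $\Hhat$) is consistent. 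I would handle that by a short induction on iterations showing the invariant: "the multiset of $x$-weighted edges in $\Hhat_i$ plus the $x$-weighted edges already placed in $D$ or charged to regions $1,\dots,i-1$ equals the original $E'$ with weights $x$, under the natural un-zombify map."
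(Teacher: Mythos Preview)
Your proposal is correct and follows the paper's approach; in fact your treatment of Claim~4 is considerably more detailed than the paper's one-sentence argument (``$H_j$ is connected, hence all its vertices lie in one region or in the final residual''), and your zombie-bookkeeping invariant is exactly the right way to make that sentence rigorous.

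Two small remarks. In Claim~3 you bound the diameter of $B(v_j,r_j)$ by $2r_j \le 1/6$ in the metric of the \emph{current residual graph} $\Hhat$, but Lemma~\ref{lem:radius} requires $\max d^S_{uv} \le 1/6$ in the \emph{original} metric $d^S$; the paper closes this by noting that deleting edges and vertices only increases distances, and that each zombie vertex $z_u(s)$ is a leaf in $\Hhat$ and hence cannot create a shortcut between two surviving real vertices, so the $\Hhat$-distance dominates $d^S$. In Claim~1 your worry about the additive $\xval(G)/n^2$ term accumulating is unnecessary at this stage: the statement is already phrased in terms of $\xvol$, which by definition includes that term, so the only thing to check is $\xvol(v_i,\rmax) \le \xvol(H)$, exactly as you outline.
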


\begin{proof}
The first statement follows from the fact that we chose a good radius $r_i$ for each region $B(v_i,r_i)$ and that the $x$-volume of any region can be at most $\xvol(H)$. Let us now consider the second statement. The fact that we chose a good radius for each region implies that
\[|X| = \sum_{i=1}^\ell |\ycut(v_i,r_i)| \leq \const \cdot \log w \sum_{i=1}^\ell \yvol(v_i,r_i)\]
for some constant $\const$.
Recall that $\yvol(v_i,r_i) = \yval(v_i,r_i) + 1/w$. Since each vertex can only contribute towards the weight of at most one region and zombie vertices have zero weight, the total weight of any region is at most $\sum_{v \in V'}y^S_v \leq w$. So, $\sum_{i=1}^\ell \yvol(v_i,r_i) \leq (w + \ell/w)$. Moreover, since the center of each region is a vertex $v_i$ of $S$, we remove at least one vertex of $S$ in each iteration. Thus, the number of iterations $\ell$ can be at most $|S|$. So, $|X| \leq \const \cdot \log w \cdot (w + \ell/w) \leq \const \cdot (w + |S|/w) \log w$ and this proves the second statement of the lemma.

Next, we argue that $|C_i \cap S| \leq 2|S|/3$. If $C_i$ was a component that remained at the end of the execution of $\algpart$, then by definition, $|C_i \cap S| \leq 2|S|/3$. Suppose $C_i$ was a component that is contained in $B(v_i,r_i)$. By Lemma \ref{lem:radius}, it suffices to check that the distance between any two remaining vertices $u$ and $v$ at the start of iteration $i$ is at least $d^S_{uv}$. Over the previous iterations, $\algpart$ modifies the graph in two ways: by removing edges and vertices, and by introducing zombie edges. Removing edges and vertices clearly cannot decrease the distance between $u$ and $v$. Zombie edges do not create a shortcut between $u$ and $v$ either. Thus, the distance between $u$ and $v$ in iteration $i$ is at least $d^S_{uv}$.

Finally, each subgraph $H_j$ is connected (since they are the set of edges with one endpoint in a connected component $C_j$), thus for each region $B(v_i,r_i)$, either all the vertices of $H_j$ are in $B(v_i,r_i)$ or all of them are not in $B(v_i,r_i)$.
\end{proof}

 \subsection{Putting it together}

We are now ready to describe the $\interdict$ algorithm, which recursively rounds a feasible solution to LP \eqref{lp:exp}. It takes as input a subgraph $H = (V',E')$ and a vertex set $S$, and deletes a set of edges $F$ such that $\tw(H-F) \leq O(w \log w)$. Let $\Xconst$ be the constant in Lemma \ref{lem:partitioning}.
\paragraph{The $\interdict$ Algorithm.}
\begin{enumerate}
  \item Delete and Separate.  Use algorithm $\algpart$ to find a set $D$ of edges to delete and a $\near$-separator $X$ of $S$ in $H-D$. Let $C_1, \ldots, C_k$ be the components of $H-D-X$. Delete $D$.
\item Define subproblems. For $i \in [k]$, define the subgraph $H_i = (V_i,E_i)$ where $E_i$ is the subset of $E'-D$ with an endpoint in $C_i$, and $V_i$ is the set of endpoints of $E_i$. \item Recurse. For $i \in [k]$, call $\interdict(H_i, V_i \cap (X \cup S))$. \end{enumerate}

To round a feasible solution of LP \eqref{lp:exp}, we call $\interdict(G,S_0)$ where $S_0$ is an arbitrary set of at most $O(w \log w)$ vertices. Let $F$ be the set of edges deleted by $\interdict(G,S_0)$.

\begin{lemma}
  \label{lem:tree-dec}
      $\tw(G - F) \leq O(w \log w)$.
\end{lemma}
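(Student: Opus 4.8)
The plan is to establish that \textsc{Interdict}$(G,S_0)$ simulates a valid run of the classical decomposition algorithm \textsc{Decomposition} with target width parameter $\Theta(w\log w)$, so that its successful termination yields a tree decomposition of $G-F$ of width $O(w\log w)$. The key observation is that the ``Delete and Separate'' step, realized by \textsc{Partition}, always returns a $\near$-separator $X$ of $S$ in $H-D$ of size at most $\Xconst\cdot(w+|S|/w)\log w$ (Lemma~\ref{lem:partitioning}, parts 2 and 3). Since \textsc{Interdict} makes exactly the same choices of which set $S$ to separate and how to recurse as \textsc{Decomposition} does, the only thing we must verify is that the separator sizes stay bounded throughout the recursion by $O(w\log w)$, which is precisely what lets the classical analysis go through.

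First I would set up the invariant that at every recursive call $\interdict(H,S)$, the set $S$ passed in has size $|S|\le c\cdot w\log w$ for an appropriate constant $c$. The base call uses $|S_0|=O(w\log w)$. Inductively, a recursive call is made on $(H_i, V_i\cap(X\cup S))$, so the new set has size at most $|X|+|S|\le \Xconst(w+|S|/w)\log w + |S|$; plugging in $|S|\le c w\log w$ gives a bound of the form $(\Xconst + \Xconst c/w + c)w\log w$, and one checks that for $c$ large enough relative to $\Xconst$ (and $w\ge 1$) this is again at most $cw\log w$ — this is the standard ``the new boundary set is only a constant-factor blowup of the old one'' computation from the Robertson--Seymour decomposition. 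Hence every separator $X$ found has size $O(w\log w)$.

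Next I would invoke the correctness of \textsc{Decomposition}: since \textsc{Interdict} deletes $D$ and then recurses on $H-D$ with the \emph{same} combinatorial structure, the bags $\{X\cup S\}$ produced along the recursion tree, glued together in the tree-of-recursion shape, form a tree decomposition of $G-F$ (edges with an endpoint in $C_i$ are handled in the $i$-th subproblem, vertices of $X$ appear in the parent bag and in all child bags that contain them, and property~(ii) of Definition~\ref{def:treedec} follows from the recursive structure exactly as in the classical proof; part~3 of Lemma~\ref{lem:partitioning} guarantees the recursion strictly shrinks $|C_i\cap S|$ so it terminates). Each bag has size $|X\cup S|\le |X|+|S| = O(w\log w)$, so the width is $O(w\log w)$, giving $\tw(G-F)\le O(w\log w)$.

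The main obstacle I anticipate is making the recursion-termination and bag-gluing argument fully rigorous: one must be careful that \textsc{Partition} genuinely returns a $\near$-separator of $S$ in $H-D$ (not just in the zombie-augmented graph $\Hhat$), that replacing zombie vertices/edges by their originals in the final $X$ and $D$ does not break the separator property, and that the recursion depth is bounded (the classical bound is $O(\log n)$ via balancedness, which here follows because $|C_i\cap S|$ shrinks by a constant factor each level and the subproblems are on vertex-disjoint components). Handling the zombie edges correctly — showing they faithfully stand in for the real edges between $\ycut$ vertices and the residual graph so that ``containment'' (part~4 of Lemma~\ref{lem:partitioning}) holds and no real edge of $G-F$ is orphaned — is the delicate bookkeeping step, but it is conceptually routine once the zombie substitution in the final step of \textsc{Partition} is tracked through.
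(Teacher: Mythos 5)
Your strategy matches the paper's one-line proof (defer to the Robertson--Seymour decomposition analysis with separator sizes from Lemma~\ref{lem:partitioning}), but the inductive size invariant you spell out does not close, and the defect is real rather than cosmetic. You bound the new interface set by $|X|+|S|$. Since $|X|>0$ whenever $\algpart$ returns a nontrivial separator, $|X|+|S|>|S|$, so no constant $c$ makes $|X|+|S|\le cw\log w$ hold at each level of recursion given $|S|\le cw\log w$ at the parent; the quantity can only grow. (Your intermediate expression is also mis-simplified: with $|S|\le cw\log w$ the term $\Xconst(|S|/w)\log w$ equals $\Xconst c(\log w)^2$, not $\Xconst c\log w$; and in any event $(\Xconst+\Xconst c/w+c)w\log w>cw\log w$.) What actually makes the invariant self-maintaining is precisely the balancedness you cite only for termination. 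Since $V_i\setminus C_i\subseteq X$ and $C_i\cap X=\emptyset$, one has $|V_i\cap(X\cup S)|\le|X|+|C_i\cap S|\le|X|+\tfrac{2}{3}|S|$ by part~3 of Lemma~\ref{lem:partitioning}. Plugging in $|S|\le\tau$, the induction reduces to $\Xconst(w+\tau/w)\log w\le\tau/3$, which is satisfied by $\tau=\Theta(w\log w)$ once $\log w=O(w)$, i.e.\ for all $w$ past a constant threshold (the finitely many smaller $w$ are absorbed into the hidden constant). With that repair, the remainder of your writeup --- the bag-gluing, the careful treatment of zombie vertices/edges when they are replaced by their originals, and the termination via the $\tfrac{2}{3}$-shrinkage --- is sound and usefully more explicit than the paper's proof, which simply cites \cite{RobertsonS95b}.
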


\begin{proof}
    We can apply exactly the same analysis of the classic tree decomposition algorithm \cite{RobertsonS95b} to argue that the $O(w \log w)$-size separators found in the recursion can be used to construct a tree decomposition $T$ of $G-F$ such that the width of $T$ is at most $O(w \log w)$. \end{proof}

Recall that $\xval(G) = \sum_{(u,v) \in E} x_{uv}$, the cost of the LP solution $(\xvec, \yvec, \dvec)$. \begin{lemma}
  \label{lem:F-cost}
  $|F| \leq O(\log n\log \log n)\xval(G)$.
\end{lemma}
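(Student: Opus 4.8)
The plan is to bound $|F|$ by charging the edges deleted at each node of the recursion tree of $\interdict$ to the $x$-cost of the LP solution, and to show that the total charge incurred over the whole recursion is $O(\log n \log\log n)\,\xval(G)$. The recursion tree has a natural level structure, and — as in the analysis of $\algclassic$ — one can arrange (by the standard trick of alternating balanced-separator steps so that subproblem sizes shrink geometrically) that the recursion has depth $O(\log n)$. The key point is that on each level of the recursion, the subgraphs $H_i$ being processed are edge-disjoint (they partition $E'-D$ up to shared ``boundary'' edges incident to separator vertices, and after introducing zombie edges each original edge is accounted for in exactly one subproblem), so the sum of $\xval(H_i)$ over all nodes on a fixed level is at most $\xval(G)$ plus a small additive slack coming from the $\frac{\xval(G)}{n^2}$ terms in the definition of $\xvol$ and from zombie copies.

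Concretely, at a node processing $(H,S)$, the Delete and Separate step deletes $D$ with
\[
  |D| \;\le\; O(\log\log n)\cdot \sum_{i=1}^{\ell}\ln\!\Bigl(\tfrac{\xvol(H)}{\xvol(v_i,r_i)}\Bigr)\,\xvol(v_i,r_i)
\]
by Lemma \ref{lem:partitioning}(1). I would first use the elementary inequality $\ln(a/b)\cdot b \le a$ together with concavity / a weighted-Jensen argument to bound $\sum_i \ln\bigl(\xvol(H)/\xvol(v_i,r_i)\bigr)\xvol(v_i,r_i) \le O\!\bigl(\ln(\xvol(H)/\min_i \xvol(v_i,r_i))\bigr)\cdot \xvol(H)$; since $\xvol(v_i,r_i)\ge \xval(G)/n^2$ by definition, the logarithmic factor is $O(\log n)$ (here is exactly where the additive $\frac{\xval(G)}{n^2}$ padding in the definition of the volume pays off — it prevents the logarithm from blowing up when a region has tiny cost). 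Hence at each node the deletion cost is $O(\log n \log\log n)\cdot \xvol(H)$.

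Next I would sum over the recursion tree. By the containment property (Lemma \ref{lem:partitioning}(4)), each subgraph $H_j$ recursed on is contained either in one of the regions $B(v_i,r_i)$ or in the leftover residual graph, so the $\xvol$'s of the children of a node sum to at most $\xvol(H)$ (up to the $n^{-2}$ slack, of which there are at most $n^{O(1)}$ copies, contributing $o(\xval(G))$ in total). Therefore on each of the $O(\log n)$ levels, $\sum_{\text{node }v\text{ on that level}} \xvol(H^v) \le \xval(G) + o(\xval(G))$. Combining, $|F| = \sum_{\text{nodes}} |D^{\text{node}}| \le O(\log n \log\log n)\cdot \sum_{\text{nodes}} \xvol(H^{\text{node}}) \le O(\log n \log\log n)\cdot O(\log n)\cdot \xval(G)$ — but this gives $O(\log^2 n \log\log n)$, which is too weak. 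This is the main obstacle, and the fix is Seymour's recursion trick: rather than charging $|D|$ at a node to $\xvol(H)$ of that node, one charges it to $\xvol$ of the \emph{children} $H_i$, exploiting that $\ln(\xvol(H)/\xvol(v_i,r_i))\,\xvol(v_i,r_i)$ is already a ``$\xvol(H_i)\log(\text{parent}/\text{child})$'' term; telescoping these logarithms along each root-to-leaf path of the depth-$O(\log n)$ recursion collapses the product of logs into a single $\ln(\xval(G) n^2 / (\xval(G)/n^2)) = O(\log n)$ factor, so the whole recursion contributes only $O(\log n)$ (not $O(\log^2 n)$), and the $O(\log\log n)$ factor from Lemma \ref{lem:region} survives once. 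I expect the delicate part of writing this up to be verifying that zombie edges do not over-count the $x$-cost — each zombie edge $(z_u(s),u)$ carries $x_{z_u(s),u}=x_{su}$, but its ``parent'' original edge $(s,u)$ was already removed from $\Hhat$, so a careful accounting (charging each created zombie copy to the fractional cost $r - d^S_{su}$ of cutting the original edge, which is at most $x_{su}$ and is ``used up'' by that cut) shows the total $x$-mass in play never exceeds $\xval(G) + o(\xval(G))$.

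\qed
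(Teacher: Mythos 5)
Your proposal takes essentially the same route as the paper: first the ``naive'' bound of $O(\log n)$ per recursion node (giving $O(\log^2 n \log\log n)$ overall, which you correctly reject), and then the fix via Seymour's telescoping, using the containment property of Lemma~\ref{lem:partitioning}(4) to write each term $\ln(\xvol(H_j)/\xvol(s_j,r_j))$ as at most $\ln(\xvol(H_j)/\xvol(H_{j+1}))$ so that the sum along a root-to-leaf path collapses to $\ln(\xvol(G)/\xvol(H_{k+1})) \le \ln(n^2+1)$. That is exactly the paper's charging argument (there, implemented per edge via the quantity $g(u,v) = x_{uv} + \xval(G)/n^2$). You also correctly identify the role of the additive $\xval(G)/n^2$ padding, and flag the zombie-edge accounting as the point needing care.

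Two small imprecisions worth tightening. First, you speak of a ``depth-$O(\log n)$ recursion,'' but the telescoping does not rely on the depth being logarithmic: the sum of $\ln(\xvol(H_j)/\xvol(H_{j+1}))$ telescopes to a single $\ln(\xvol(G)/\xvol(H_{k+1}))$ regardless of how many levels there are, and it is the padding term that bounds this ratio by $n^2+1$, not the depth. Second, the phrase ``collapses the product of logs'' should be ``sum of logs'' — one is adding $\ln(\cdot/\cdot)$ contributions across levels, not multiplying them; the only ``product'' that collapses is the informal $O(\log n)\cdot O(\log n)$ of the naive bound. Neither slip changes the substance of the argument.
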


\begin{proof}
  Let $k$ be the recursion depth of $\interdict(G,S_0)$. For each depth $j$, let $\Hcal_j$ be the collection of subgraphs that $\interdict$ recursed on at depth $j$, and for each $H \in \Hcal_j$, let $R(H)$ be the set of regions found by $\algpart$ when $\interdict$ recursed on $H$. (Note that at depth $0$, we have $\Hcal_0 = \{G\}$.) By Lemma \ref{lem:partitioning}, we have the following bound on $|F|$:
  \begin{equation}
      |F| \leq O(\log \log n) \cdot \sum_j \sum_{H \in \Hcal_j} \sum_{B(s,r)
        \in
        R(H)}\ln\left(\frac{\xvol(H)}{\xvol(s,r)}\right)\xvol(s,r).\label{eq:F-cost}
\end{equation}
  For each edge $(u,v) \in E$, define $g(u,v) = x_{uv} + \frac{\xval(G)}{n^2}$. We have that
  \begin{equation}
  \sum_{(u,v) \in E}g(u,v) \leq 2\xval(G).\label{eq:total-g}
\end{equation}

We now show that $\sum_j \sum_{H \in \Hcal_j} \sum_{B(s,r) \in R(H)}\ln\left(\frac{\xvol(H)}{\xvol(s,r)}\right)\xvol(s,r) \leq O(\log n) \sum_{(u,v) \in E} g(u,v)$. We will do this by charging $\sum_{H \in \Hcal_j} \sum_{B(s,r) \in R(H)}\ln\left(\frac{\xvol(H)}{\xvol(s,r)}\right)\xvol(s,r)$ to edges of the subgraphs in $\Hcal_j$ and proving that every edge $(u,v)$ receives a total charge, across all depths, of at most $O(\log n) g(u,v)$. Our charging scheme is as follows: for each subgraph $H \in \Hcal_j$ and each region $B(s,r) \in R(H)$, charge $\ln\left(\frac{\xvol(H)}{\xvol(s,r)}\right) g(u,v)$ to each edge $(u,v)$ with at least one endpoint in $B(s,r)$.

Let us first see why the total amount charged per region $B(s,r) \in R(H)$ is at least $\ln\left(\frac{\xvol(H)}{\xvol(s,r)}\right)\xvol(s,r)$. Let $E_H(s,r)$ be the edges with at least one endpoint in $B(s,r)$. This is exactly the set of edges that contribute to the region's $x$-volume $\xvol(s,r)$. Therefore, 
\begin{align*}
  \sum_{(u,v) \in E_H(s,r)} g(u,v) 
  &\geq \sum_{(u,v) \in E_H(s,r)} x_{uv} + \frac{\xval(G)}{n^2}\\
  &\geq \xvol(s,r)
\end{align*}
and so the total amount charged is at least  $\ln\left(\frac{\xvol(H)}{\xvol(s,r)}\right)\xvol(s,r)$. Overall, we have
\begin{equation}
   \sum_j \sum_{H \in \Hcal_j} \sum_{B(s,r) \in R(H)}\ln\left(\frac{\xvol(H)}{\xvol(s,r)}\right)\xvol(s,r)
    \leq \sum_{(u,v) \in E} \phi(u,v),\label{eq:charge-uv}
 \end{equation}
   where $\phi(u,v)$ is the total charge received by $(u,v)$.

  Fix an edge $(u,v) \in E$. Let us now upper bound $\phi(u,v)$. Since $\interdict$ recurses on edge-disjoint subgraphs, $(u,v)$ is contained in at most one subgraph of $\Hcal_j$ for each depth $j$, and if it does not belong to a subgraph of $\Hcal_j$, then it does not belong to any subgraph of $\Hcal_{j'}$ at lower depths $j' > j$. In the worst case, $(u,v)$ is contained in some subgraph $H_j \in \Hcal_j$ for every depth $j$. Let us assume this is so and define $B(s_j,r_j) \in R(H_j)$ to be the region containing $(u,v)$. Lemma \ref{lem:partitioning} tells us that all of $H_j$ is contained in the region $B(s_{j-1},r_{j-1})$ for each $j$ and so $\xvol(s_j, r_j) \geq \xvol(H_{j+1})$. Thus,
  \begin{align*}
    \phi(u,v)
    &= \sum_{j=0}^k \ln\left(\frac{\xvol(H_j)}{\xvol(s_j,r_j)}\right) g(u,v)\\
    &\leq \sum_{j=0}^k \ln\left(\frac{\xvol(H_j)}{\xvol(H_{j+1})}\right) g(u,v)\\
    &\leq \ln\left(\frac{\xvol(G)}{\xvol(H_{k+1})}\right) g(u,v) \\
    &\leq \ln (n^2+1) g(u,v),
  \end{align*}
  where the last inequality follows from the fact that $x$-volume is always at least $\frac{\xval(G)}{n^2}$.

Combining this with Inequality \eqref{eq:charge-uv}, we get
  \begin{align*}
    &\sum_j \sum_{H \in \Hcal_j} \sum_{B(s,r) \in R(H)}\ln\left(\frac{\xvol(H)}{\xvol(s,r)}\right)\xvol(s,r)\\
    &\leq \sum_{(u,v) \in E} \ln (n^2+1) g(u,v)\\
    &\leq 2\ln (n^2+1) \xval(G),
  \end{align*}
  where the last inequality follows from Inequality \eqref{eq:total-g}. Finally, plugging this into the right hand side of \eqref{eq:F-cost} gives us $|F| \leq O(\log n \log \log n)\xval(G)$, as desired.
\end{proof}

Lemmas \ref{lem:F-cost} and \ref{lem:tree-dec} imply Lemma \ref{lem:tw-oracle}.

 \section{Using the LP}
\label{sec:LP}
We now come to the problem of how to handle the LP \eqref{lp:exp} in $n^{O(1)}$ time.
As discussed in the Introduction, we bypass the need to completely solve this LP using the
Round-or-Separate framework.
A crucial ingredient here is that if the rounding step gets stuck (is unable to find a small separator), we need to find a violated inequality for the $x$-variables, and not just for the LP \eqref{lp:exp}. In Section \ref{s:lp:part1} we show how to reformulate LP  \eqref{lp:exp} to only have the $x$ variables.
In this reformulation, the coefficients of the inequalities come from feasible points in a polytope with exponentially many variables, so Section \ref{s:lp:part2} deals with how to fix such a violated inequality.

\subsection{Reformulating the LP}
\label{s:lp:part1}
 We first reformulate LP \eqref{lp:exp}---the original LP with variables $(\xvec,\yvec,\dvec)$---to obtain another LP with only $x$-variables.

Call a vector $\xvec$ \emph{feasible} if there exist vectors $\yvec$ and $\dvec$ such that $(\xvec, \yvec, \dvec)$ is a feasible solution to LP \eqref{lp:exp}.
Define $\feas$ to be the set of all feasible vectors $\xvec$, and observe that $\feas$ is simply the feasible region of LP \eqref{lp:exp} with the $y$ and $d$ variables projected out.

Next, we show how to describe $\feas$ using linear inequalities. For every vertex set $S$,  we define an LP parameterized by $\xvec$ (see Figure \ref{fig:lpS}). We call this $\sepLP(\xvec, S)$.
\begin{figure*}[htp]
  \centering
\begin{equation}
    \boxed{
      \label{lp:S}
      \begin{aligned}
        \mbox{min} \quad
        & \sum_v y_v\\
        \mbox{s.t.}\quad
        & d_{uv} \leq \sum_{e \in E(P)} x_e + \sum_{t \in V(P)} y_t
        &\quad \forall u,v \in V, P \in \paths(u,v) \\
        & \sum_{u \in T} d_{uv} \geq |T| - |S|/2
        & \quad \forall T \subseteq S \subseteq V, v \in T
      \end{aligned}
     }
  \end{equation}
  \caption{$\sepLP(\xvec,S)$}
  \label{fig:lpS}
\end{figure*}
We emphasize that in this LP, only $\yvec$ and $\dvec$ are variables.
Recall that this LP is related to the problem of finding a small balanced vertex separator of $S$ in $G$,
provided the edges are removed fractionally to extent $x_e$.

\begin{definition}
Let $\sepOPT(\xvec,S)$ be the value of the optimum solution to the $\sepLP(\xvec, S)$.
We say that $\xvec$ is \emph{$S$-feasible} if $\sepOPT(\xvec, S) \leq w$, and denote by $\feas(S)$ the set of $S$-feasible vectors.
\end{definition}
\begin{lemma}
  \label{lem:equiv}
  $\feas = \bigcap_{S \subseteq V} \feas(S)$.
\end{lemma}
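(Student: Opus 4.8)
The claim is the set equality $\feas = \bigcap_{S \subseteq V} \feas(S)$, and the natural route is to prove the two inclusions separately, both essentially by unwinding the definitions. The key conceptual point is that the only ``global'' variables in LP~\eqref{lp:exp} are the $x_{uv}$'s; for each fixed $S$, the constraints involving $\yvec^S$ and $\dvec^S$ are exactly the constraints of $\sepLP(\xvec, S)$ together with the budget constraint $\sum_v y^S_v \le w$. So a vector $\xvec$ extends to a feasible solution of \eqref{lp:exp} if and only if, for every $S$ \emph{simultaneously}, one can choose $\yvec^S, \dvec^S$ satisfying those constraints with $\sum_v y^S_v \le w$ — and since the choices for different $S$ are completely independent, this is equivalent to being able to do it for each $S$ separately, which is precisely $S$-feasibility.

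\textbf{The $\subseteq$ direction.} Suppose $\xvec \in \feas$, so there exist $\yvec, \dvec$ with $(\xvec, \yvec, \dvec)$ feasible for \eqref{lp:exp}. Fix any $S \subseteq V$. I would restrict attention to the sub-vectors $\yvec^S := (y^S_v)_{v \in V}$ and $\dvec^S := (d^S_{uv})_{u,v \in V}$. The second and third families of constraints of \eqref{lp:exp}, specialized to this $S$, are verbatim the path-length and spreading constraints of $\sepLP(\xvec, S)$; hence $(\yvec^S, \dvec^S)$ is feasible for $\sepLP(\xvec, S)$. Its objective value is $\sum_v y^S_v$, which by the first constraint of \eqref{lp:exp} is at most $w$. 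Therefore $\sepOPT(\xvec, S) \le \sum_v y^S_v \le w$, so $\xvec \in \feas(S)$. Since $S$ was arbitrary, $\xvec \in \bigcap_S \feas(S)$.

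\textbf{The $\supseteq$ direction.} Suppose $\xvec \in \feas(S)$ for every $S$. For each $S$, $S$-feasibility means $\sepOPT(\xvec, S) \le w$, so $\sepLP(\xvec, S)$ has an optimal (hence feasible) solution $(\yvec^S, \dvec^S)$ with $\sum_v y^S_v = \sepOPT(\xvec, S) \le w$. I would then assemble these: keep $\xvec$, and for each $S$ use this $\yvec^S$ and $\dvec^S$ as the $S$-indexed blocks of the solution to \eqref{lp:exp}. The first constraint holds because $\sum_v y^S_v \le w$; the second and third hold because $(\yvec^S, \dvec^S)$ was feasible for $\sepLP(\xvec, S)$ (the constraints match term for term). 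Hence $(\xvec, \yvec, \dvec)$ is feasible for \eqref{lp:exp}, so $\xvec \in \feas$.

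\textbf{Anticipated obstacle.} There is no serious obstacle — the proof is a bookkeeping argument matching up constraint families. The only things to be a little careful about are: (i) confirming that the objective of $\sepLP(\xvec, S)$ is exactly $\sum_v y_v$, so that the threshold $\le w$ in the definition of $S$-feasibility corresponds precisely to the budget constraint of \eqref{lp:exp}; (ii) noting, as the surrounding text already observes, that one may WLOG take $d^S_{uv}$ to be the shortest-path length induced by the $x$ and $y$ values, so feasibility of the $d$-variables is never an issue once $\xvec$ and $\yvec^S$ are fixed; and (iii) making explicit that the per-$S$ choices in the $\supseteq$ direction are independent, so glueing them causes no conflict (they share only $\xvec$, which is fixed). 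I would state these three points briefly and then conclude the equality.
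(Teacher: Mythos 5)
Your proposal is correct and matches the paper's own proof essentially verbatim: both directions are argued by the same unwinding of definitions, restricting the $S$-indexed block of a feasible $(\xvec,\yvec,\dvec)$ for one inclusion and assembling per-$S$ optimal solutions of $\sepLP(\xvec,S)$ for the other. Your three ``careful points'' are sound, though the paper treats them as obvious and does not spell them out.
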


\begin{proof}
  Suppose $\xvec \in \feas$. Then, for every vertex set $S$, the vectors $\yvec^S$ and $\dvec^S$ are a feasible solution to $\sepLP(\xvec,S)$ and $\sum_v y^S \leq w$, so $\sepOPT(\xvec,S) \leq w$.

  On the other hand, suppose $\sepOPT(\xvec, S) \leq w$ for all vertex sets $S$. Define $\yvec$ and $\dvec$ such that for each $S \in \Scal$, $(\yvec^S, \dvec^S)$ is the optimal solution to $\sepLP(\xvec,S)$. As $\sepOPT(\xvec, S) \leq w$ for all $S \in \Scal$, we have that $(\xvec, \yvec, \dvec)$ is a feasible solution to LP \eqref{lp:exp}.
\end{proof}

Thus, to describe $\feas$ using linear inequalities, it suffices to describe $\feas(S)$ using linear inequalities. By linear programming duality, we have that $\sepOPT(\xvec,S) \leq w$ if and only if for every feasible solution to the dual of $\sepLP(\xvec,S)$ has objective value at most $w$.

So we consider the dual LP \eqref{lp:S-dual} (Figure \ref{fig:flowLP}), and denote it by $\flowLP(\xvec,S)$. (We call this $\flowLP$ as it is dual to a ``cut-type'' LP.) 

\begin{lemma} \label{lem:dual} The dual to $\sepLP(\xvec,S)$ is given by LP \eqref{lp:S-dual} below.
\end{lemma}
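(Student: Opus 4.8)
The plan is to obtain LP~\eqref{lp:S-dual} by a routine application of linear programming duality to $\sepLP(\xvec,S)$, keeping in mind throughout that the edge-removal values $x_e$ are \emph{fixed parameters}, not variables: they enter only through the right-hand sides of the constraints and therefore appear as coefficients in the dual objective.

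First I would rewrite $\sepLP(\xvec,S)$ in a standard minimization form. Its variables are the $y_v$ for $v \in V$ (nonnegative, with objective coefficient $1$) together with the distance variables $d_{uv}$ (objective coefficient $0$). The first family of constraints becomes $\sum_{t \in V(P)} y_t - d_{uv} \ge -\sum_{e \in E(P)} x_e$, one inequality for each pair $u,v$ and each $P \in \paths(u,v)$; the spreading constraints $\sum_{u \in T} d_{uv} \ge |T| - |S|/2$ for $v \in T \subseteq S$ are already in ``$\ge$'' form. I would attach a nonnegative dual multiplier $f^P_{uv}$ to each path constraint --- to be read as the amount of flow routed between $u$ and $v$ along $P$ --- and a nonnegative multiplier $z^T_v$ to each spreading constraint.

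Second I would read off the dual directly. Its objective is $\max\big(\sum_{v \in T \subseteq S}(|T| - |S|/2)\,z^T_v - \sum_{u,v}\sum_{P \in \paths(u,v)}\big(\sum_{e \in E(P)} x_e\big) f^P_{uv}\big)$. The constraint dual to $y_t$ says that the total flow passing through vertex $t$ is at most $1$, i.e.\ $\sum_{u,v}\sum_{P \in \paths(u,v):\, t \in V(P)} f^P_{uv} \le 1$ for every $t \in V$; the constraint dual to $d_{uv}$ relates the net demand placed on the pair $\{u,v\}$ by the $z$-variables to the flow routed on $\paths(u,v)$ (an equality if the $d_{uv}$ are treated as free, an inequality if taken nonnegative). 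Collecting terms and renaming the multipliers gives precisely LP~\eqref{lp:S-dual}, which is why we call it $\flowLP(\xvec,S)$. I would finish by invoking strong duality: $\sepLP(\xvec,S)$ is feasible and bounded (a feasible solution exists by taking the $y_v$ large enough, and without loss of generality the optimum has $d_{uv} \le 1$), so its value $\sepOPT(\xvec,S)$ equals the optimum of $\flowLP(\xvec,S)$ --- which is the form needed for the $S$-feasibility characterization in the next step.

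The one place that needs genuine care --- and hence the only real obstacle --- is the sign bookkeeping around the parameters $x_e$: because they sit on the right-hand side of the first constraint family, they appear with a minus sign in the dual objective, and one must check that each inequality is oriented consistently with the displayed LP~\eqref{lp:S-dual} and that the free-versus-nonnegative convention for the $d_{uv}$ matches. Beyond that, the proof is a mechanical transcription.
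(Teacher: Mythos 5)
Your proposal is correct and follows essentially the same route as the paper: rewrite the path constraints as $\sum_{t\in V(P)} y_t - d_{uv} \ge -\sum_{e\in E(P)} x_e$, attach multipliers $f^{uv}_P$ and $g_{T,v}$ (your $z^T_v$), and read off the dual objective and constraints by inspecting the coefficients of $y_t$ and $d_{uv}$, with the $x_e$ appearing as data in the dual objective. Your added remark on strong duality is not needed for the lemma as stated (it only asserts what the dual \emph{is}), but it correctly anticipates what Lemma~\ref{lem:ineqs} relies on.
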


\begin{proof}
Let us introduce the variables $f^{uv}_P$ for the first set of constraints in LP \eqref{lp:S}, and the variables 
$ g_{T,v}$ for the second set of constraints.
Again, we emphasize that only $f$ and $g$ are variables here (and that $x$ is not a variable).

Let us check that the dual is exactly LP \eqref{lp:S-dual}. Rewrite the first set of primal constraints as 
\[\sum_{t \in V(P)} y_t - d_{uv} \geq -\sum_{e \in E(P)} x_e \quad \forall u,v \in V, P \in \paths(u,v)\]  
In the objective, the coefficient of $f^{uv}_P$ is $-\sum_{e \in E(P)} x_e$ and the coefficient of $g_{T,v}$ is $|T| - |S|/2$. The dual constraints correspond to 
primal variables $d_{uv}$ and $y_t$. 

In the primal, the variable $d_{uv}$ has a coefficient of $-1$ for the constraints corresponding to the dual variables $f^{uv}_P$ for $P \in \paths(u,v)$, and a coefficient of $1$ in the constraints corresponding to the dual variables $g_{T,t}$ for $t \in \{u,v\}$ and $T \ni t$. Similarly, the variable $y_t$ has a coefficient of $1$ for the constraints corresponding to the dual variables $f^{uv}_P$ for $u,v \in V$ and $P \in \paths(u,v)$ such that $t \in V(P)$. So, LP \eqref{lp:S-dual} is dual to LP \eqref{lp:S}. 
\end{proof}
\begin{figure*}
  \centering
\begin{equation}
    \boxed{
      \label{lp:S-dual}
      \begin{aligned}
        \mbox{max} \quad
        &\sum_{T \subseteq S, v \in T} g_{T,v}(|T| - |S|/2) 
        - \sum_{u,v \in V; P \in \paths(u,v)} f^{uv}_P\left[\sum_{e \in P} x_e\right]\\
        \mbox{s.t.}\quad
        & \sum_{T \subseteq V : u, v \in T} g_{T,u} + g_{T,v}
        \leq \sum_{P \in \paths(u,v)} f^{uv}_P
        &\quad \forall u,v \in V,\\
        & \sum_{u,v \in V} \sum_{P \in \paths(u,v): t \in V(P)} f^{uv}_P \leq 1
        &\quad \forall t \in V\\
      \end{aligned}
    }
  \end{equation}
  \caption{$\flowLP(\xvec,S)$.}
  \label{fig:flowLP}
\end{figure*}

\paragraph{Separation oracle for $\feas(S)$.}  Given a solution $(\fvec, \gvec)$ to $\flowLP(\xvec,S)$, we denote its objective value as $\flowLP(\xvec,S,\fvec,\gvec)$. LP duality implies the following lemma.
\begin{lemma}
  \label{lem:ineqs}
  $\xvec \in \feas(S)$ if and only if $\flowLP(\xvec,S,\fvec,\gvec) \leq w$ for all feasible dual solutions $(\fvec, \gvec)$.
\end{lemma}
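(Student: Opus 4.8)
The statement is an immediate application of LP duality to the pair $\sepLP(\xvec,S)$ / $\flowLP(\xvec,S)$, once we know that strong duality holds and that the claimed dual objective $\flowLP(\xvec,S,\fvec,\gvec)$ is indeed the dual objective. The plan is as follows. First I would invoke Lemma~\ref{lem:dual}, which already establishes that $\flowLP(\xvec,S)$ is the LP dual of $\sepLP(\xvec,S)$ (with $\xvec$ fixed, so only $\yvec,\dvec$ are primal variables and $\fvec,\gvec$ are dual variables). Second, I would check that strong duality applies: $\sepLP(\xvec,S)$ is a minimization LP that is feasible (take $y_v=1$ for all $v$, and $d_{uv}=1$ for all $u,v$, which satisfies the path constraints since every path has at least one vertex, and satisfies the spreading constraints since $|T|-|S|/2 \le |T|$) and its objective $\sum_v y_v$ is bounded below by $0$; hence by strong duality the optimum value $\sepOPT(\xvec,S)$ equals $\max_{(\fvec,\gvec)} \flowLP(\xvec,S,\fvec,\gvec)$ over feasible dual solutions. (If one prefers to avoid a delicate feasibility discussion, weak duality already gives one direction and a standard compactness/attainment argument gives the other; but strong duality is cleanest.)

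Third, I would assemble the equivalence. By definition, $\xvec \in \feas(S)$ iff $\sepOPT(\xvec,S) \le w$. By the duality identity just established, $\sepOPT(\xvec,S) = \sup_{(\fvec,\gvec)} \flowLP(\xvec,S,\fvec,\gvec)$, where the supremum is over all feasible solutions of $\flowLP(\xvec,S)$. Therefore $\sepOPT(\xvec,S) \le w$ holds if and only if $\flowLP(\xvec,S,\fvec,\gvec) \le w$ for every feasible dual solution $(\fvec,\gvec)$. This is exactly the claimed statement, so nothing further is needed.

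\textbf{Expected main obstacle.} The only subtlety is a technical one concerning the infinite-dimensionality of these LPs: $\paths(u,v)$ can contain exponentially (indeed, for the purposes of the duality argument, arbitrarily) many paths, so $\sepLP(\xvec,S)$ and $\flowLP(\xvec,S)$ have exponentially many constraints/variables. Strong LP duality still holds for finite LPs, and these are finite (the number of simple paths in an $n$-vertex graph is finite), so there is no real issue — but one should be slightly careful that ``the dual'' is well-defined and that the optimum is attained on both sides, which is automatic for rational finite LPs via Farkas/LP-duality. I would therefore state this as a one-line consequence of Lemma~\ref{lem:dual} together with strong LP duality, checking primal feasibility and boundedness in passing, and not belabor it. The genuinely nontrivial work — namely, how to use this characterization algorithmically, given that $\flowLP(\xvec,S)$ itself has exponential size — is deferred to Section~\ref{s:lp:part2} and is not part of this lemma.
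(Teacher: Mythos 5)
Your proposal is correct and is essentially the paper's own argument: the paper states the lemma as an immediate consequence of Lemma~\ref{lem:dual} and LP duality, and your write-up simply supplies the routine verification (primal feasibility, boundedness below, and the observation that ``max of dual $\le w$'' is the same as ``every feasible dual has value $\le w$'') that the paper leaves implicit.
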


Let us see what we have achieved so far. The expression $\flowLP(\xvec,S,\fvec,\gvec) \leq w$ is a linear constraint on $x$, whose coefficients  $(\fvec, \gvec)$ are feasible points in the polytope given by \eqref{lp:S-dual}. Now we crucially note that the coefficients of the constraints on $(\fvec, \gvec)$ depend only on the topology of the graph $G$ and not on $\xvec$.
\begin{definition}
 We say that $(\fvec, \gvec)$ is \emph{$S$-valid} if it satisfies the constraints of $\flowLP(\xvec,S)$.
 \end{definition}

Thus, Lemma \ref{lem:ineqs} gives a description of $\feas(S)$ in terms of linear inequalities. Together with Lemma \ref{lem:equiv}, we get a description of $\feas$ in terms of linear inequalities and so we can reformulate LP \eqref{lp:exp} as follows.
\begin{equation}
\boxed{
  \label{lp:poly}
\begin{aligned}
  \mbox{min} \quad
  & \sum_{(u,v) \in E} x_{uv}\\
  \mbox{s.t.}\quad
  & \flowLP(\xvec,S,\fvec,\gvec)\leq w  \quad\forall S \subseteq V, (\fvec, \gvec) \mbox{ $S$-valid}
\end{aligned}
}
\end{equation}

\subsection{Round or Separate}
\label{s:lp:part2}
We will work with the reformulated LP \eqref{lp:poly}. One immediate obstacle is that it is unclear how to get an efficient separation oracle for $\feas$.
In fact, even for a fixed vertex set $S$, it is not immediately clear how to find a violated inequality for $\feas(S)$. The problem is that for any constraint, the coefficients of the $x$-variables are based on $\fvec,\gvec$, which need to satisfy LP \eqref{lp:S-dual}, and it is not clear how to generate them; e.g. there are exponentially many $\fvec, \gvec$ variables in $\flowLP(\xvec,S)$. To get over these problems, we do the following.

We apply the Round-or-Separate framework with the $\interdict$ algorithm in Section \ref{sec:ROS}. Roughly speaking, we start with some candidate solution $\xvec$ (possibly infeasible), and try to construct a tree decomposition of width $O(w \log w)$ using $\interdict$. If we succeed, we are done. Otherwise, $\interdict$ could not find a small balanced separator for some vertex set $S$, and this can only happen if $\xvec \notin \feas(S)$. Later, in Lemma \ref{lem:sep-oracle}, we give an efficient procedure that given $\xvec$ and $S$, determines whether $\xvec \in \feas(S)$ and if so, outputs a solution $(\yvec^S, \dvec^S)$ that is feasible to $\sepLP(\xvec,S)$ and satisfies $\sum_v y^S_v \leq w$; otherwise, it outputs a separating hyperplane, i.e. an $S$-valid $(\fvec,\gvec)$ such that $\flowLP(\xvec,S,\fvec,\gvec) > w$. In the first case, $\interdict$ can use $(\yvec^S, \dvec^S)$ to find a $O(w \log w)$-size separator of $S$ and make progress. In the second case, we can add the separating hyperplane to find a new candidate $x$ and repeat the whole tree decomposition procedure. By the standard Ellipsoid method-based, separation-versus-optimization framework, the number of such iterations is polynomially bounded.

It remains to show how to efficiently separate $\feas(S)$ for any given subset $S$.
We will do this indirectly by using the Ellipsoid Method and applying duality to $\sepLP(\xvec, S)$.

\begin{lemma}
  \label{lem:sep-oracle}
  There exists a polynomial-time algorithm that given $\xvec$ and $S$, determines whether $\xvec \in \feas(S)$, and if so, outputs a solution $(\yvec^*, \dvec^*)$ that is feasible to $\sepLP(\xvec,S)$ and satisfies $\sum_v y^*_v \leq w$; otherwise, it finds a violated inequality for LP \eqref{lp:poly}. \end{lemma}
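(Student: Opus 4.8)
The plan is to use the Ellipsoid Method on $\sepLP(\xvec,S)$ together with LP duality to get the separation oracle. The key difficulty, as the text already flags, is that $\sepLP(\xvec,S)$ itself has exponentially many constraints (one spreading constraint per $T \subseteq S$, and one path constraint per path $P$), and its dual $\flowLP(\xvec,S)$ has exponentially many variables, so neither LP can be written down explicitly. First I would observe that $\sepLP(\xvec,S)$ nonetheless admits a polynomial-time separation oracle: given a candidate $(\yvec,\dvec)$, the path constraints are equivalent to requiring $d_{uv} \le \operatorname{dist}(u,v)$ where distances are taken in $G$ with edge lengths $x_e$ and vertex weights $y_t$, which is checkable by a shortest-path computation for each pair; and although there are exponentially many spreading constraints $\sum_{u \in T} d_{uv} \ge |T| - |S|/2$, for a fixed $v$ the most violated one is found by sorting the values $d_{uv}$ over $u \in S$ and greedily including the smallest ones (this is the standard trick for separating spreading constraints). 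Hence by the equivalence of separation and optimization, we can solve $\sepLP(\xvec,S)$ in polynomial time, obtaining its optimal value $\sepOPT(\xvec,S)$ and an optimal primal solution $(\yvec^*,\dvec^*)$.

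Next, if $\sepOPT(\xvec,S) \le w$, we are in the first case: $\xvec \in \feas(S)$ and $(\yvec^*,\dvec^*)$ is the desired feasible solution to $\sepLP(\xvec,S)$ with $\sum_v y^*_v = \sepOPT(\xvec,S) \le w$, which $\interdict$ can use. If instead $\sepOPT(\xvec,S) > w$, then $\xvec \notin \feas(S)$ and by Lemma~\ref{lem:equiv} also $\xvec \notin \feas$, so we must exhibit a violated inequality for LP~\eqref{lp:poly}, i.e.\ an $S$-valid $(\fvec,\gvec)$ with $\flowLP(\xvec,S,\fvec,\gvec) > w$. The natural candidate is an optimal dual solution, which by strong duality has value $\sepOPT(\xvec,S) > w$. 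The issue is producing such a $(\fvec,\gvec)$ explicitly despite its exponential size. Here I would use the standard fact that when the Ellipsoid Method solves $\sepLP(\xvec,S)$ via its separation oracle, it only ever queries polynomially many constraints; collecting these queried constraints gives a polynomial-size LP $\sepLP'$ with the same optimal value. The dual of $\sepLP'$ has polynomially many variables and can be solved explicitly; its optimal solution, interpreted as a solution $(\fvec,\gvec)$ of $\flowLP(\xvec,S)$ supported on the polynomially many paths/sets that appeared, is $S$-valid and has objective value $\sepOPT(\xvec,S) > w$ (the $\flowLP$ constraints are exactly the dual constraints of the full $\sepLP$, and setting the remaining coordinates to zero keeps them satisfied since each omitted constraint is slack or of the $\le$ type). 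This is precisely the violated inequality $\flowLP(\xvec,S,\fvec,\gvec) > w$ for LP~\eqref{lp:poly}.

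The main obstacle is the last step: carefully checking that the dual solution obtained from the truncated LP $\sepLP'$ really yields a \emph{feasible} point of $\flowLP(\xvec,S)$, not merely a vector achieving the right objective on the subset of constraints. Concretely, one must verify that every $\flowLP$ constraint of the form $\sum_{T \ni u,v} (g_{T,u}+g_{T,v}) \le \sum_{P \in \paths(u,v)} f^{uv}_P$ and $\sum_{t \in V(P)} \sum f^{uv}_P \le 1$ holds when the variables outside the truncated support are zeroed out — this follows because those constraints correspond to primal variables $d_{uv}, y_t$ which were present in $\sepLP'$ (the primal variables are not truncated, only the constraints are), so weak duality on $\sepLP'$ already certifies these inequalities. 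Once that bookkeeping is done, combining the two cases gives exactly the statement of Lemma~\ref{lem:sep-oracle}. I would also note in passing that this, plugged into the Round-or-Separate loop described just above the lemma, is what makes the whole algorithm of Theorem~\ref{thm:bti} run in $n^{O(1)}$ time.
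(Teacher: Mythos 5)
Your proof is correct and takes essentially the same route as the paper: Ellipsoid on $\sepLP(\xvec,S)$ with a polynomial separation oracle, and, in the infeasible case, restricting to the polynomially many queried constraints and reading off an $S$-valid $(\fvec,\gvec)$ from the dual of this truncated LP. You actually supply a couple of details the paper leaves implicit (how to separate the spreading constraints, and the check that the truncated dual extended by zeros remains feasible for $\flowLP(\xvec,S)$ because its constraints are indexed by primal variables rather than primal constraints).
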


\begin{proof}
  We apply the Ellipsoid method to find an optimal solution $(\yvec^*,\dvec^*)$ to $\sepLP(\xvec, S)$. We can do this efficiently by using a separation oracle that uses Dijkstra's algorithm to determine the distances $\dvec$ given $\xvec$ and $\yvec$. If $\sum_v y^*_v \leq w$, then $\xvec \in \feas(S)$ by definition. 
  
  On the other hand, if $\sum_v y^*_v > w +\epsilon$ (where $\epsilon$ can be made exponentially small), then we can find a violated inequality for \eqref{lp:poly} as follows. As
 the Ellipsoid method added only polynomially many constraints we can re-solve this LP only on these added constraints and assume that $\yvec^*$ is a solution to this smaller LP on only polynomially many constraints.
     By complementary slackness, there exists an optimal dual solution $(\fvec^*, \gvec^*)$ where the only non-zero dual variables are those that correspond to these polynomially many primal constraints. Thus, it suffices to solve $\flowLP(\xvec,S)$ restricted to these dual variables which makes $\flowLP(\xvec,S)$ polynomial in size. The objective function in $\flowLP$ then gives the violated inequality for LP \eqref{lp:poly}.
\end{proof}

\section{Bounded Size Interdiction and MIS in Noisy Planar Graphs}
\label{sec:MIS}
We consider the Bounded Size Interdiction problem and show how it implies Theorem \ref{thm:MIS}.

Consider the noisy graph $G$ obtained by adding $\delta n$ edges to some planar $G_0$. 
Let us view $G$ as obtained by superimposing the noisy edges on the recursive decomposition of $G_0$ given by Lemma \ref{thm:LT}. This directly implies the following (noisy) decomposition for $G$.
\begin{lemma}[Noisy Decomposition] 
\label{lemma:rough-decompo} Given a $\delta$-noisy planar graph $G$, 
for any $\alpha >0$, there exists a partition $X,C_1,\ldots,C_k$ of $V$ with (i) $|X| \leq c\alpha n$ for some universal constant $c$, (ii)
 $|C_i| \leq 1/\alpha^2$ for all $i\in [k]$, and (iii) at most $\delta n$ edges whose endpoints lie in distinct (two different) $C_i$s.
\end{lemma}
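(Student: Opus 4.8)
The plan is to simply transfer the planar decomposition of $G_0$ guaranteed by Lemma~\ref{thm:LT} to the noisy graph $G$, and to observe that the only edges of $G$ that can cross between pieces of this decomposition are the noisy ones. No new decomposition needs to be computed on $G$ itself — indeed, the paper has already argued that running a separator algorithm directly on $G$ fails — but since the statement only asserts \emph{existence}, invoking Lemma~\ref{thm:LT} on the hidden graph $G_0$ is enough.

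Concretely, first I would apply Lemma~\ref{thm:LT} to the planar graph $G_0$ with parameter $\alpha$, obtaining a vertex set $X \subseteq V$ with $|X| \leq c\alpha n$ for a universal constant $c$, such that every connected component of $G_0[V-X]$ has at most $1/\alpha^2$ vertices. Let $C_1,\ldots,C_k$ be the vertex sets of these components. Since the noisy model only adds edges, $G_0$ and $G$ share the vertex set $V$ with $|V|=n$, and the $C_i$ are exactly the vertex classes of the connected components of the subgraph of $G_0$ induced on $V\setminus X$; hence $(X,C_1,\ldots,C_k)$ is a partition of $V$, and properties (i) and (ii) hold immediately. For property (iii), I would take any edge $(u,v)$ of $G$ with $u\in C_i$, $v\in C_j$, $i\neq j$: such an edge cannot be an edge of $G_0$, because $C_i$ and $C_j$ are distinct connected components of $G_0[V-X]$ and so $G_0$ has no edge between them. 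Therefore $(u,v)$ is one of the at most $\delta n$ noisy edges in $G\setminus G_0$, giving at most $\delta n$ edges with endpoints in distinct $C_i$'s.

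There is essentially no obstacle: all the mathematical content sits in the planar separator theorem (Lemma~\ref{thm:LT}), and the noisy setting contributes only the bookkeeping step of counting the added edges. The one point deserving a sentence of care is emphasizing that $X$ and the $C_i$ are defined with respect to $G_0$, not $G$, so that ``no $G_0$-edge crosses between $C_i$ and $C_j$'' is exactly the defining property of connected components; the constant $c$ in (i) is just the one hidden in the $O(\alpha n)$ bound of Lemma~\ref{thm:LT}.
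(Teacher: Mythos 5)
Your proof matches the paper's (informal) argument exactly: apply Lemma~\ref{thm:LT} to the hidden planar graph $G_0$, take the resulting $X$ and the components of $G_0[V-X]$ as the $C_i$, and observe that any $G$-edge between distinct $C_i$'s must be noisy since no $G_0$-edge crosses components. The paper states this only as an immediate consequence of ``superimposing the noisy edges on the recursive decomposition of $G_0$''; your write-up just spells out the same observation.
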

 
Of course as we do not know $G_0$, it is not clear how to find such a decomposition. Theorem \ref{lem:noisy-LT} shows that this can be done approximately.

\begin{theorem}[Noisy Bounded Size Interdiction]
  \label{lem:noisy-LT}
Let $G = (V,E)$ be any graph that has a vertex partition $X,C_1,\ldots,C_k$ with $|C_i| \leq s$ for each $i \in [k]$, and let $b$ be the total number of edges 
whose endpoints lie in distinct $C_i$s.

Then  for every $\beta \leq 1$, we can find in time $n^{O(s)}$ a vertex partition $X',C'_1,\ldots,C'_{k'}$ such that (i) $|X'| = O(|X|/\beta)$, (ii)
 $|C'_i| \leq s$ for each $i\in [k']$, and (iii) at most $O(b + \beta |E|)$ edges whose endpoints lie in distinct $C_i$s. 
\end{theorem}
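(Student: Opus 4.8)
The plan is to write a configuration LP that captures the existence of the good partition $X, C_1, \ldots, C_k$, round it, and control the three quantities $|X'|$, the component sizes, and the number of crossing edges separately. First I would set up the LP: for every subset $R \subseteq V$ with $|R| \le s$ that induces a connected subgraph, introduce a variable $z_R \in [0,1]$ indicating that $R$ is chosen as one of the components $C'_i$; for every vertex $v$, introduce a variable $x_v \in [0,1]$ indicating $v \in X'$; and for every edge $e = (u,v)$, a variable $\eta_e \in [0,1]$ indicating that $e$ is a crossing edge (endpoints in distinct components or, more conveniently, an edge we ``pay for''). The covering constraint is that for each $v$, $\;x_v + \sum_{R \ni v} z_R \ge 1$, i.e. every vertex is either in $X'$ or in some chosen piece. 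The consistency/charging constraint for an edge $e=(u,v)$ is that $\eta_e \ge z_R - z_{R'}$ type inequalities, or more simply $\eta_e + x_u + x_v + \sum_{R \supseteq \{u,v\}} z_R \ge \min_{R\ni u} z_R$-ish; the cleanest version is to say $1 - x_u - x_v - \eta_e \le \sum_{R \ni u, v \in R} z_R$, forcing that unless we pay $x_u, x_v$, or $\eta_e$, both endpoints lie together in a common chosen piece. The objective is to minimize $\sum_v x_v / \beta + \sum_e \eta_e$ (or impose budgets and check feasibility). Since there are $n^{O(s)}$ candidate sets $R$, the LP has $n^{O(s)}$ variables and $\poly(n)$ constraints, so it is solvable in time $n^{O(s)}$, matching the claimed running time.

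Next I would argue the LP is feasible with small value: the integral solution coming from the promised partition — $x_v = \mathbf{1}[v \in X]$, $z_{C_i} = 1$, $\eta_e = \mathbf{1}[e \text{ crosses}]$ — is feasible (each $C_i$ is a union of connected components of size $\le s$, which we can split into connected pieces each of size $\le s$), with cost $O(|X|/\beta + b)$. So the LP optimum is at most this. Then I would round. The natural approach is randomized rounding combined with a clean-up: independently include $v$ in $X'$ with probability $\min(1, \beta x_v)$ (the factor $\beta$ boosts the $x$-variables so that, after removing $X'$, the fractional mass on the surviving pieces is well-concentrated), and then on the residual graph $G - X'$ pick pieces. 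The key point is that after removing $X'$, for a surviving vertex $v$ the remaining $z$-mass through $v$ is $\ge 1 - x_v$ minus what was killed; one then does a second rounding (e.g. pick a single piece per vertex by sampling proportional to $z_R$, or solve a small exact problem on each residual connected component) to get an honest partition into pieces of size $\le s$. The crossing edges are exactly those not swallowed by a common piece, and by the edge constraint the expected number of such edges — after accounting for edges with an endpoint in $X'$ — is $O(b + \sum_e \eta_e + \text{loss from boosting } x)$. Here the boosting by $1/\beta$ costs us a factor $1/\beta$ in $|X'|$ (giving $|X'| = O(|X|/\beta)$ as claimed) but is what keeps the crossing-edge count at $O(b + \beta|E|)$: roughly, the edges we newly fail to cover are those incident to a vertex whose piece-mass got destroyed, and each vertex's mass survives except with probability $O(\beta x_v \cdot(\text{something}))$, and summing the edge-failure probabilities over all $m \le |E|$ edges contributes the $\beta |E|$ term. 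A concentration argument (Markov plus a union bound over the $O(1)$ events, or a direct expectation bound followed by repetition to boost success probability) finishes it.

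The main obstacle I anticipate is the interaction between the two roundings: after we delete $X'$, the pieces $R$ that still "want" to be chosen may overlap, and we must extract a genuine vertex partition into connected sets of size $\le s$ while (a) not creating too many new crossing edges and (b) not having to put too many extra vertices into $X'$. The textbook trick is that on each connected component of $G - X'$, the problem of choosing a disjoint sub-collection of the fractionally-active pieces covering most vertices is itself a small instance — but components of $G-X'$ need not be small. So I would instead handle this by a more careful pipage/dependent rounding of the $z_R$ within $G - X'$, or by observing that it suffices to greedily peel off pieces: whenever a vertex still has $z$-mass $\ge 1/2$ on pieces avoiding already-chosen vertices, assign it such a piece, else throw it into $X'$ (this adds only $O(1)$ factor to $|X'|$). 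Making the crossing-edge accounting tight through this greedy step — so that an edge becomes crossing only when "charged" to $\eta_e$, to an $X'$-vertex, or to the $\beta|E|$ slack — is the delicate part, and is where the constant in "$O(b + \beta|E|)$" comes from. I would also need to double-check that splitting the original $C_i$ (size $\le s$) into connected pieces does not itself create crossing edges beyond those already counted — it does not, since edges within $C_i$ that get split are internal to a set of size $\le s$ and can be absorbed, or one simply notes the original statement's $b$ counts only edges between distinct $C_i$s and the LP's $\eta$ can be set to capture exactly these.
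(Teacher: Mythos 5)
Your high-level plan is right, and in fact your LP is essentially the same as the paper's after a small reformulation: the paper's edge constraint $\sum_{S : u \in S,\, v \notin S} z_S \leq x_{uv} + y_v$ together with the covering constraint $\sum_{S \ni u} z_S = 1 - y_u$ is exactly your $\sum_{R \supseteq \{u,v\}} z_R \geq 1 - y_u - y_v - x_{uv}$. So the formulation is fine, and the $n^{O(s)}$ running time is for the same reason. Where the proposal has a genuine gap is the rounding, which is where the technical content of this theorem lives, and you yourself flag it as the "delicate part."

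Two concrete problems. First, the paper keeps the budget $\sum_v y_v \leq a$ as an explicit \emph{constraint} (with $a=|X|$ guessed) and thresholds deterministically: put $v$ in $X'$ iff $y_v \geq \beta$. Markov then gives $|X'| \leq a/\beta$ exactly. Your version folds the $X'$-cost into the objective as $\sum_v x_v/\beta$, which only yields $\sum_v x_v/\beta \leq |X|/\beta + b$, hence $|X'| = O(|X|/\beta + b)$ rather than the stated $O(|X|/\beta)$. (Also, including $v$ with probability $\min(1,\beta x_v)$ is the wrong direction given $\beta \leq 1$; you presumably meant $\min(1,x_v/\beta)$, but even then the bound is $O(|X|/\beta + b)$.) Second, and more importantly, your piece-rounding is not a single committed algorithm but a list of alternatives (dependent rounding, pipage, greedy peeling), and you correctly observe that the interaction between the two roundings is the hard part — but you do not resolve it. The paper's resolution is a repeated-sampling scheme: after thresholding, every surviving vertex has $\sum_{S \ni v} z_S = 1-y_v \geq 1/2$; in each phase sample each support set $S$ independently with probability $z_S$ and assign its still-uncovered vertices; repeat until all vertices are covered. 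Because each vertex is covered with probability $\geq 1/2$ per phase, the probability that edge $(u,v)$ is separated in phase $j$ decays geometrically like $(2/3)^{j-1}$ times $\sum_{S : |S\cap\{u,v\}|=1} z_S \leq 2x_{uv} + y_u + y_v \leq 2x_{uv} + 2\beta$, and summing the geometric series and over edges gives $O(b + \beta|E|)$. Your greedy variant ("assign $v$ a piece if it still has $z$-mass $\geq 1/2$ on pieces avoiding already-chosen vertices, else throw it into $X'$") has no analysis controlling how many vertices fall into the else-branch, and that is precisely where such schemes break: the remaining $z$-mass on a vertex can be eroded adversarially by earlier greedy choices, and dumping those vertices into $X'$ is not obviously charged to anything of size $O(|X|/\beta)$. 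So the proposal reaches the right LP but is missing the phase-sampling idea and the geometric-decay charging argument that make the bounds go through.
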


This implies the following proper decomposition (where the $C'_i$ are components of $G[V-X]$).
\begin{corollary}
\label{cor:bis} There is a subset $X' \subset V$ of size $O(|X|/\beta + b + \beta |E|)$ such that the components in $G[V-X']$ have size at most $s$.
\end{corollary}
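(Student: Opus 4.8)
\textbf{Proof proposal for Corollary~\ref{cor:bis}.} The plan is to derive the corollary from Theorem~\ref{lem:noisy-LT} by a single inexpensive cleanup step. Theorem~\ref{lem:noisy-LT} does not quite hand us an honest vertex separator: it outputs a \emph{near}-partition $X', C'_1,\ldots,C'_{k'}$ in which each $C'_i$ is small ($|C'_i|\le s$) but a small number of edges may still run between distinct parts. The idea is simply to absorb one endpoint of each such crossing edge into the separator, and then to observe that this extra set of vertices is exactly of the order we are already paying.

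Concretely, first apply Theorem~\ref{lem:noisy-LT} with the given parameter $\beta \le 1$ to obtain, in time $n^{O(s)}$, a vertex partition $X', C'_1, \ldots, C'_{k'}$ of $V$ with $|X'| = O(|X|/\beta)$, with $|C'_i| \le s$ for every $i$, and with at most $O(b + \beta|E|)$ edges whose endpoints lie in two different parts $C'_i$. Let $E^{\times}$ denote this set of crossing edges, so $|E^{\times}| = O(b+\beta|E|)$. For each edge of $E^{\times}$ pick (arbitrarily) one of its two endpoints, and let $Z$ be the set of all chosen endpoints; then $|Z| \le |E^{\times}| = O(b+\beta|E|)$. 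Set $X'' := X' \cup Z$, so that
\[
|X''| \le |X'| + |Z| = O(|X|/\beta) + O(b+\beta|E|) = O\!\left(|X|/\beta + b + \beta|E|\right).
\]

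It remains to check that every connected component of $G[V - X'']$ has at most $s$ vertices. Consider any edge $(u,v)$ of $G$ with both $u,v \in V - X''$. Since $u,v \notin X''$ and $X' \subseteq X''$, both endpoints lie in some parts $C'_i$; and since $(u,v)$ has no endpoint in $Z \subseteq X''$, it is not a crossing edge, i.e.\ $u$ and $v$ lie in the \emph{same} part $C'_i$. Hence every edge surviving in $G[V - X'']$ is contained in a single $C'_i$, so each connected component of $G[V - X'']$ is contained in one $C'_i$ and therefore has at most $|C'_i| \le s$ vertices. Taking $X'$ in the statement of the corollary to be this set $X''$ (which we may assume is a proper subset of $V$, as otherwise the claim is trivial) completes the proof; the running time is the $n^{O(s)}$ of Theorem~\ref{lem:noisy-LT} plus a linear-time postprocessing pass.

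There is really no hard step here: the only conceptual point is that Theorem~\ref{lem:noisy-LT} is stated as a near-partition with few inter-cluster edges rather than as a genuine separator, and the gap between the two is bridged by charging one new separator vertex to each inter-cluster edge — a charge that is already dominated by the $O(b+\beta|E|)$ term in the bound we are proving. (One could alternatively have folded this cleanup into the statement of Theorem~\ref{lem:noisy-LT} itself; keeping it separate is what makes the near-partition formulation the convenient object to round a configuration LP against.)
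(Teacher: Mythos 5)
Your proposal is correct and matches the paper's argument: apply Theorem~\ref{lem:noisy-LT}, then absorb one endpoint of each of the $O(b+\beta|E|)$ crossing edges into the separator, which isolates each $C'_i$ and is already dominated by the claimed bound. This is exactly the paper's one-line cleanup, just spelled out in more detail.
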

\begin{proof}
For each edge $(u,v)$ with $u \in C'_i$ and $v \in C'_j$ for $i\neq j\in [k']$, 
put an endpoint (say $u$) in $X'$ and remove $u$ from $C'_i$. 
As a result, $X'$ has size  $ O(|X|/\beta + b + \beta |E|)$ and there are no edges between different $C'_i$ and $C'_j$.
\end{proof}

Before we prove Theorem \ref{lem:noisy-LT}, let us first see how it implies Theorem \ref{thm:MIS}.

\smallskip

\begin{proofof}[Theorem \ref{thm:MIS}]
Given an $\epsilon > 0$, set $\gamma = O(\epsilon^2)$.
Applying Lemma \ref{lemma:rough-decompo} with $\alpha=\gamma$, $G$ has a noisy decomposition with $|X| = O(\gamma n)$ and pieces $C_i$ of size $s=1/\gamma^2$ and at most $b=\delta n$  edges between these $C_i$'s. Moreover $G$ has at most $(3 + \delta )n \leq 4n$ edges.
Applying Corollary \ref{cor:bis} with $\beta = \gamma^{1/2}$, gives an $X'$ of size $O(|X|/\gamma^{1/2} + \delta n + 
\gamma^{1/2} n) = O( (\epsilon + \delta)n)$,
such that the components of $G[V-X']$ have at most $1/\gamma^2 = 1/\epsilon^4$ vertices.

By exhaustive search, find an MIS in each $C_i$ separately and return their union $I$. Clearly, $I$ is a valid independent set. By Theorem \ref{lem:noisy-LT}, the algorithm has overall running time $n^{O(1/\gamma^2)} = n^{O(1/\epsilon^4)}$.
As $G_0$ is planar and hence $4$-colorable, and $G\setminus G_0$ has at most $\delta n$ noisy edges, $\alpha(G) \geq \alpha(G_0) - \delta n \geq n/4 - \delta n.$
Moreover, as $|I| \geq \alpha(G) - |X'|$, this gives $|I| \geq (1- O(\epsilon +\delta)) \alpha(G)$.      
\end{proofof}

\noindent
{\em Remark.} Theorem \ref{thm:MIS} extends directly to the minor-free case using the separator theorem for minor-free graphs \cite{AST}, and the fact that these graphs have bounded average degree and thus $\alpha(G) = \Omega(n)$.

 \subsection{Proof of Theorem \ref{lem:noisy-LT}}
\label{sec:noisy-LT}

\paragraph{LP formulation.}	
Given $0 < \beta \leq 1$ and $G$ as input, we first write an integer program to find $X$ and the $C_i$s.
Let $a=|X|$ (we can assume that $a$ is known as the algorithm can try every value).
For each vertex $v$, the variable $y_v$ indicates if $v \in X$.
For each subset $S\subset V$ with $|S|\leq s$, the variable $z_S$ indicates if $S$ is one of the pieces $C_i$. Let $\mathcal{S}$ be the collection of all such subsets $S$.
For each $(u,v)\in E$, the variable $x_{uv}$ indicates whether the edge $(u,v)$ is such that $u \in C_i$ and $v \in C_j$ for some $i \neq j$.

Consider the integer program IP \eqref{lp:partition} in Figure \ref{fig:lppart}.
\begin{figure*}
  \centering
\begin{equation}
    \boxed{
      \label{lp:partition}
      \begin{aligned}
        \mbox{min} \quad
        & \sum_{(u,v) \in E} x_{uv}\\
        \mbox{s.t.}\quad
        &  \sum_v y_v \leq a \quad \\
        & \sum_{S : v \in S } z_S = 1 - y_v &\quad \forall   v \in V\\
        & \sum_{S : u \in S \wedge v \notin S} z_S \leq x_{uv} + y_v &\quad \forall (u,v) \in E\\
        & x_{uv},y_v,z_S \in \{0,1\} &\quad \forall (u,v) \in E, v \in
        V, S \in \mathcal{S}
      \end{aligned}
    }
  \end{equation}
  \caption{IP for the Bounded Size Interdiction Problem.}
  \label{fig:lppart}
\end{figure*}
                                    This is easily seen to be a valid formulation for the problem.
The first set of constraints ensure that $X$ contains at most $a$ vertices. The second set of constraints ensure that each vertex lies  in either $X$ or some $C_i$.
The third set of constraints are more involved and force $x_{uv}$ to be $1$ if some edge has endpoints in distinct $C_i$ and $C_j$. In particular, it says that if $u$ lies in some $C_i$ and $v$ does not lie in that $C_i$, then either $v$ lies in $X$ (i.e.~$y_v=1$) or $x_{uv}=1$. 
Note that the third set of constraints are asymmetric in $u$ and $v$, and we will put two such constraints (with $u$ and $v$ swapped) for each edge $(u,v)$.

As the objective function exactly measures the number of edges with endpoints in
two different $C_i$'s, it follows that the IP above has a feasible solution with value at most $\delta n$.

Consider the LP relaxation of this program. It has $O(n^{s})$ variables, and $O(n^2)$ non-trivial constraints. 
So in time $n^{O(s)}$, we can find some basic feasible solution with support size at most $O(n^2)$.
We reuse $x_{uv}$, $y_v$ and $z_S$ to denote some fixed optimum solution to the LP.

\paragraph{The Rounding Algorithm.}
The algorithm will construct the required $X'$ and the collection $\mathcal{C}$ of sets $C'_i$ from the LP solution by the following preprocessing and sampling procedure.

\begin{enumerate}
\item Initialization. We set $X',\mathcal{C} = \emptyset$ and $U =V$, where $U$ denotes the set of vertices not covered by $\mathcal{C}$.
 \item Preprocessing. Add every vertex $v$ with $y_v \geq \beta$ to $X'$. 
Set $U = U \setminus X'$
\item Sampling to create $\mathcal{C}$. 
Arbitrarily order the sets $S_1,\ldots,S_k$ in the support of the LP solution.
Repeat the following (phase) until $U$ is empty:

{\em Phase.} For $i=1,\ldots,k$, sample the set $S_i$ randomly with probability $z_{S_i}$. If $S_i$ is picked,
add $C'=S_i \cap U$ to the collection $\mathcal{C}$, and update  $U = U \setminus S_i$.
\end{enumerate}

\paragraph{Analysis.}
Clearly the sets $C'$ produced by the algorithm have size at most $s$, and they are disjoint.

\begin{lemma}
\label{small:x}
$|X'| \leq a/\beta$.
\end{lemma}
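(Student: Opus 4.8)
The plan is to bound $|X'|$ directly from the first constraint of LP \eqref{lp:partition}. Recall that $X'$ is populated only in the Preprocessing step, which adds a vertex $v$ to $X'$ precisely when $y_v \geq \beta$. So if I let $X' = \{v : y_v \geq \beta\}$, then summing the $y$-values over just these vertices gives $\sum_{v \in X'} y_v \geq \beta \cdot |X'|$.

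On the other hand, since the LP solution is feasible, the first constraint $\sum_v y_v \leq a$ holds, and all $y_v \geq 0$, so in particular $\sum_{v \in X'} y_v \leq \sum_{v \in V} y_v \leq a$. Combining the two inequalities yields $\beta |X'| \leq a$, i.e.\ $|X'| \leq a/\beta$, as claimed. Recalling that $a = |X|$ was chosen to equal the size of the given partition's separator (and the algorithm tries all values of $a$, so in the relevant run $a = |X|$), this is exactly the bound $|X'| = O(|X|/\beta)$ asserted in Theorem \ref{lem:noisy-LT}(i).

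I do not anticipate any real obstacle here: this is a one-line Markov-type argument on the separator variables. The only thing to be careful about is that $X'$ receives contributions \emph{only} from the Preprocessing step — the Sampling step modifies $\mathcal{C}$ and $U$ but never adds to $X'$ — so the characterization $X' = \{v : y_v \geq \beta\}$ is exact, and the counting is clean. (Any further vertices that must be moved into the separator to handle edges between different $C'_i$s are accounted for separately, in Corollary \ref{cor:bis}, not in this lemma.) The genuinely harder parts of proving Theorem \ref{lem:noisy-LT} — bounding the number of cross-edges by $O(b + \beta|E|)$ and controlling the number of sampling phases — are deferred to the subsequent lemmas and are not needed for this statement.
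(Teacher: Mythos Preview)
Your proposal is correct and matches the paper's own proof essentially verbatim: both observe that $X'=\{v:y_v\geq\beta\}$ and apply the Markov-type bound using the LP constraint $\sum_v y_v\leq a$. Your additional remarks (that the Sampling step never touches $X'$, and that cross-edge vertices are handled separately in Corollary~\ref{cor:bis}) are accurate contextual observations but not needed for the lemma itself.
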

\begin{proof}
As $X'$ is the set of vertices $v$ with $y_v \geq \beta$, there can be at most $a/\beta$ such vertices by the LP constraint $\sum_v y_v \leq a$.
\end{proof}

Henceforth, we also assume that $\beta \leq 1/2$, otherwise choosing $X' = \emptyset$  and partitioning $V$ arbitrarily into sets $C_1,\ldots,C_k$ of size at most $s$  trivially suffices for Theorem \ref{lem:noisy-LT}.

We now show that the algorithm runs in expected polynomial time and does not generate a vertex partition with too many edges between distinct $C'_i$.
\begin{lemma}
\label{small:f}
Let $F$ be the set of edges with endpoints in two distinct $C'_i$s. Then  $\E[|F|] \leq O(b + \beta|E|) $.
Moreover, the algorithm terminates after at most $O(n^2 \log n)$ sampling steps with high probability.
\end{lemma}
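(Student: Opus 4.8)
\textbf{Proof plan for Lemma \ref{small:f}.}
The plan is to analyze a single phase of the sampling procedure and show that (a) each phase removes a constant fraction of the still-uncovered vertices in expectation, which bounds the number of phases, and (b) the expected number of edges that end up ``cut'' (endpoints in distinct $C'_i$) is $O(b + \beta|E|)$. For the running time, fix a vertex $v \in U$ at the start of a phase. Since $\sum_{S : v \in S} z_S = 1 - y_v \geq 1 - \beta \geq 1/2$ (using $y_v < \beta$ because $v$ survived preprocessing, and $\beta \leq 1/2$), the probability that \emph{some} set containing $v$ is sampled during the phase is, by a standard union-bound/independence argument on the independent coin flips, at least a constant (e.g.\ $1 - e^{-1/2}$ or $\Omega(1)$ after accounting for $v$ possibly being removed by an earlier sampled set in the same phase — in either case $v$ leaves $U$). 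Hence $\E[|U_{\text{new}}|] \leq c' |U_{\text{old}}|$ for a constant $c' < 1$, so after $O(\log n)$ phases $U$ is empty with high probability; each phase touches at most $k = O(n^2)$ sets, giving $O(n^2 \log n)$ sampling steps total.

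For the edge bound, fix an edge $(u,v) \in E$ and bound the probability it is cut, i.e.\ that $u$ and $v$ land in different $C'_i$. There are two ways this happens: $u$ and $v$ are assigned in the same phase but to different sampled sets, or they are assigned in different phases. I would charge the first type using the LP constraint $\sum_{S : u \in S \wedge v \notin S} z_S \leq x_{uv} + y_v$ (and its symmetric counterpart): conditioned on reaching the point in a phase where $u$ (say) gets covered by a sampled set $S$, the chance that $v \notin S$ while $v$ is still uncovered is controlled by this sum, so this contributes $O(x_{uv} + y_v)$ to the cut probability. The second type — $u$ and $v$ covered in different phases — requires that one of them, say $u$, is covered in an earlier phase by a set $S$ with $v \notin S$; this is again charged to $\sum_{S : u \in S \wedge v \notin S} z_S \leq x_{uv} + y_v$ across the (constantly many, in expectation) phases, or more carefully, to the event that at the phase where $u$ is covered we have $v \notin S_u$, which happens with probability $O(x_{uv}+y_v)$ relative to the event that $u$ is covered at all. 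Summing over all edges,
\[
\E[|F|] \leq \sum_{(u,v) \in E} O(x_{uv} + y_u + y_v) = O\Bigl(\sum_{(u,v)\in E} x_{uv}\Bigr) + O\Bigl(\sum_{v} \deg(v) y_v\Bigr).
\]
The first term is $O(b)$ since the LP optimum is at most $b = \delta n$ (the integral solution coming from the true partition is feasible with objective $\leq b$). For the second term, I would like to say it is $O(\beta |E|)$, using that the vertices surviving preprocessing have $y_v < \beta$; but $\sum_v \deg(v) y_v$ could a priori be as large as $\sum_v \deg(v)\beta = 2\beta|E|$ only if the $y_v$ were all close to $\beta$, which is exactly the bound we want — so $\sum_v \deg(v) y_v \leq \beta \sum_v \deg(v) = 2\beta|E|$, giving $O(\beta|E|)$.

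The main obstacle I anticipate is making the conditional-probability argument for the edge-cut events fully rigorous: the events ``$u$ covered in phase $t$ by set $S$'' are not independent across $u$, $v$, and phases, and the set $U$ shrinks adaptively within a phase, so one must be careful to condition on the right history (the state of $U$ just before the sampled set that covers $u$ or $v$ is drawn) and to argue that the LP constraint still applies to that conditional distribution. A clean way to handle this is to union-bound over the ``responsible'' sampled set: for each phase and each position $i$ in the ordering $S_1,\dots,S_k$, the event that $S_i$ is the set that first covers $u$ in that phase, $v$ is still uncovered, and $v \notin S_i$ occurs with probability at most $z_{S_i}\cdot \mathbf{1}[u \in S_i, v\notin S_i]$ times the probability of the conditioning event; summing $z_{S_i}\mathbf{1}[u\in S_i,v\notin S_i]$ over all $S_i$ and recalling it is at most $x_{uv}+y_v$ per ``round'' in which $u$ could be covered, and that $u$ is covered in $O(1)$ expected rounds, yields the $O(x_{uv}+y_v)$ bound. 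The rest is the routine summation above.
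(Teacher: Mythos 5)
Your proposal is correct and, once the dust settles, takes essentially the same route as the paper: decompose the cut event by phase, exploit that the sampling in phase $j$ is independent of the history determining whether $u,v$ survive to phase $j$, union-bound over sampled sets using $\sum_{S:|S\cap\{u,v\}|=1} z_S \leq 2x_{uv}+y_u+y_v$, and sum the geometric series $\sum_j (2/3)^{j-1}=O(1)$. The paper states this cleanly as $\Pr[(u,v)\text{ cut in phase }j] \leq (2/3)^{j-1}\sum_{S:|S\cap\{u,v\}|=1}z_S$, which sidesteps entirely the same-phase/different-phase case split you start with (that split is unnecessary and is exactly what creates the conditioning headaches you flag); your final ``union-bound over the responsible sampled set'' paragraph is the fix that brings you back to the paper's direct argument.
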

\begin{proof}
We claim that $U$ is empty after $O(\log n)$ phases, with high probability.
After the preprocessing step, each uncovered vertex in $U$ has $y_v < \beta \leq 1/2$.
Thus, by the second LP constraint, $p_v:=\sum_{ S\ni v} z_S  = 1- y_v \geq 1/2$.
So the probability that a vertex $v$ is not covered after $j$ phases is  
\[ \left(\prod_{S \ni v} (1-z_S)\right)^j  \leq \exp\left(-j p_v \right) \leq \exp \left(-j/2\right) \leq (2/3)^j \] 
The claim now follows from a union bound over the $n$ vertices.

We now bound the size of $F$. 
    Let us focus on an edge $e=(u,v)$ and bound the probability that it is \emph{cut}, that is, added to $F$ during the Sampling step. Let $U_j$ denote the vertices in $U$ at the end of phase $j$.
The edge is cut in phase $j$ if and only if both $u$ and $v$ remain in $U$ at the end of phase $j-1$ (i.e. $u,v \in U_{j-1}$) and a set $S$ with $|S \cap \{u,v\}| = 1$ is chosen in phase $j$. 
As $\Pr[u,v \in U_{j-1}] \leq \Pr[v \in U_{j-1}] \leq (2/3)^{j-1}$, this implies that 
\begin{equation}
\label{eq:phasecut}
\Pr[(u,v) \mbox{ cut in phase $j$}] 
\leq (2/3)^{j-1}  \cdot \sum_{S : |S \cap \{u,v\}| = 1}z_S.
\end{equation}
Moreover, by the third set of constraints in LP \eqref{lp:partition}
  \begin{equation}
    \sum_{S : |S \cap \{u,v\}| = 1}z_S = \sum_{S : u \in S \wedge v
      \notin S}z_S + \sum_{S : v \in S \wedge u \notin S}z_S 
    \leq
    2x_{uv} + y_u + y_v.\label{eq:cut}
  \end{equation}
Summing \eqref{eq:phasecut} over all the phases and using \eqref{eq:cut}, we get 
\[ \Pr[(u,v) \mbox{ cut}] \leq 3(2x_{uv} + y_u + y_v)  \leq 6 x_{uv} + 6\beta, \]
where the second inequality follows as both $u$ and $v$ were not chosen in $X$ during the preprocessing step and hence $y_u,y_v \leq \beta$. 
By linearity of expectation, this implies that
\[ \E[|F|] = \sum_{(u,v) \in E} (6 x_{uv} + 6 \beta) = O(b + \beta|E|).\]
\end{proof}

\section*{Acknowledgements}
We would like to thank Marcin Pilipczuk and Fedor Fomin for many useful discussions and pointers, and anonymous reviewers for bringing recent local search-based techniques to our attention. Part of this work was done when all the authors were visiting the Simons Institute for the Theory of Computing.

\bibliographystyle{plain}
\bibliography{noisy_MIS}

\begin{thebibliography}{10}

\bibitem{AST}
Noga Alon, Paul Seymour, and Robin Thomas.
\newblock A separator theorem for nonplanar graphs.
\newblock {\em Journal of the American Mathematical Society}, 3(4):801--808,
  1990.

\bibitem{Amir}
Eyal Amir.
\newblock Approximation algorithms for treewidth.
\newblock {\em Algorithmica}, 56(4):448--479, 2010.

\bibitem{AnSS14}
Hyung{-}Chan An, Mohit Singh, and Ola Svensson.
\newblock Lp-based algorithms for capacitated facility location.
\newblock In {\em Foundations of Computer Science, {FOCS}}, pages 256--265,
  2014.

\bibitem{ARV}
Sanjeev Arora, Satish Rao, and Umesh~V. Vazirani.
\newblock Expander flows, geometric embeddings and graph partitioning.
\newblock {\em J. {ACM}}, 56(2), 2009.

\bibitem{baker}
Brenda~S. Baker.
\newblock Approximation algorithms for {NP}-complete problems on planar graphs.
\newblock {\em Journal of the ACM (JACM)}, 41(1):153--180, 1994.

\bibitem{Bod}
Hans~L. Bodlaender.
\newblock Dynamic programming on graphs with bounded treewidth.
\newblock In {\em ICALP}, pages 105--118, 1988.

\bibitem{BGHK95}
Hans~L. Bodlaender, John~R. Gilbert, Hj{\'{a}}lmtyr Hafsteinsson, and Ton
  Kloks.
\newblock Approximating treewidth, pathwidth, frontsize, and shortest
  elimination tree.
\newblock {\em J. Algorithms}, 18(2):238--255, 1995.

\bibitem{ChanHP}
Timothy~M. Chan and Sariel Har{-}Peled.
\newblock Approximation algorithms for maximum independent set of pseudo-disks.
\newblock {\em Discrete {\&} Computational Geometry}, 48(2):373--392, 2012.

\bibitem{ChekuriS13}
Chandra Chekuri and Anastasios Sidiropoulos.
\newblock Approximation algorithms for euler genus and related problems.
\newblock In {\em Foundations of Computer Science, {FOCS}}, pages 167--176,
  2013.

\bibitem{CMS}
Julia Chuzhoy, Yury Makarychev, and Anastasios Sidiropoulos.
\newblock On graph crossing number and edge planarization.
\newblock In {\em SODA 2011}, pages 1050--1069, 2011.

\bibitem{Courcelle}
Bruno Courcelle and Joost Engelfriet.
\newblock {\em Graph Structure and Monadic Second-Order Logic - {A}
  Language-Theoretic Approach}, volume 138 of {\em Encyclopedia of mathematics
  and its applications}.
\newblock Cambridge University Press, 2012.

\bibitem{Fomin-book}
Marek Cygan, Fedor~V. Fomin, Lukasz Kowalik, Daniel Lokshtanov, D{\'{a}}niel
  Marx, Marcin Pilipczuk, Michal Pilipczuk, and Saket Saurabh.
\newblock {\em Parameterized Algorithms}.
\newblock Springer, 2015.

\bibitem{Demaine2008}
D.~Erik Demaine and MohammadTaghi Hajiaghayi.
\newblock {\em Encyclopedia of Algorithms}, chapter Approximation Schemes for
  Planar Graph Problems, pages 59--62.
\newblock Springer US, 2008.

\bibitem{DH08}
Erik~D. Demaine and MohammadTaghi Hajiaghayi.
\newblock The bidimensionality theory and its algorithmic applications.
\newblock {\em Comput. J.}, 51(3):292--302, 2008.

\bibitem{DHK}
Erik~D. Demaine, MohammadTaghi Hajiaghayi, and Ken-ichi Kawarabayashi.
\newblock Algorithmic graph minor theory: Decomposition, approximation, and
  coloring.
\newblock In {\em FOCS 2005}, pages 637--646, 2005.

\bibitem{devos}
Matt DeVos, Guoli Ding, Bogdan Oporowski, Daniel~P. Sanders, Bruce Reed, Paul
  Seymour, and Dirk Vertigan.
\newblock Excluding any graph as a minor allows a low tree-width 2-coloring.
\newblock {\em Journal of Combinatorial Theory, Series B}, 91(1):25--41, 2004.

\bibitem{eppstein2000diameter}
David Eppstein.
\newblock Diameter and treewidth in minor-closed graph families.
\newblock {\em Algorithmica}, 27(3-4):275--291, 2000.

\bibitem{eppstein}
David Eppstein.
\newblock Graph families.
\newblock {\em Algorithmica}, 27(3):275--291, 2000.

\bibitem{ENRS}
Guy Even, Joseph~(Seffi) Naor, Satish Rao, and Baruch Schieber.
\newblock Divide-and-conquer approximation algorithms via spreading metrics.
\newblock {\em Journal of the ACM (JACM)}, 47(4):585--616, 2000.

\bibitem{FHL}
Uriel Feige, MohammadTaghi Hajiaghayi, and James~R. Lee.
\newblock Improved approximation algorithms for minimum weight vertex
  separators.
\newblock {\em {SIAM} J. Comput.}, 38(2):629--657, 2008.

\bibitem{FK98}
Uriel Feige and Joe Kilian.
\newblock Heuristics for finding large independent sets, with applications to
  coloring semi-random graphs.
\newblock In {\em Foundations of Computer Science, {FOCS}}, pages 674--683,
  1998.

\bibitem{FLMS12}
Fedor~V. Fomin, Daniel Lokshtanov, Neeldhara Misra, and Saket Saurabh.
\newblock Planar f-deletion: Approximation, kernelization and optimal {FPT}
  algorithms.
\newblock In {\em Foundations of Computer Science, {FOCS}}, pages 470--479,
  2012.

\bibitem{fomin2011bidimensionality}
Fedor~V Fomin, Daniel Lokshtanov, Venkatesh Raman, and Saket Saurabh.
\newblock Bidimensionality and eptas.
\newblock In {\em Proceedings of the twenty-second annual ACM-SIAM symposium on
  Discrete Algorithms}, pages 748--759. SIAM, 2011.

\bibitem{Frederickson87}
Greg~N. Frederickson.
\newblock Fast algorithms for shortest paths in planar graphs, with
  applications.
\newblock {\em {SIAM} J. Comput.}, 16(6):1004--1022, 1987.

\bibitem{freuder1990complexity}
Eugene~C Freuder.
\newblock Complexity of k-tree structured constraint satisfaction problems.
\newblock In {\em Proceedings of the eighth National conference on Artificial
  intelligence-Volume 1}, pages 4--9. AAAI Press, 1990.

\bibitem{HKNO}
Avinatan Hassidim, Jonathan~A. Kelner, Huy~N. Nguyen, and Krzysztof Onak.
\newblock Local graph partitions for approximation and testing.
\newblock In {\em Foundations of Computer Science, {FOCS}}, pages 22--31, 2009.

\bibitem{hastad}
Johan H{\aa}stad.
\newblock Clique is hard to approximate within $n^{1-\epsilon}$.
\newblock In {\em Foundations of Computer Science, {FOCS}}, pages 627--636,
  1996.

\bibitem{haastad2001some}
Johan H{\aa}stad.
\newblock Some optimal inapproximability results.
\newblock {\em Journal of the ACM (JACM)}, 48(4):798--859, 2001.

\bibitem{MotwaniKhanna}
Sanjeev Khanna and Rajeev Motwani.
\newblock Towards a syntactic characterization of ptas.
\newblock In {\em Symposium on Theory of Computing}, STOC '96, pages 329--337,
  1996.

\bibitem{lee17small}
Euiwoong Lee.
\newblock Partitioning a graph into small pieces with applications to path
  transversal.
\newblock In {\em SODA 2017}, pages 1546--1558, 2017.

\bibitem{LLS}
Retsef Levi, Andrea Lodi, and Maxim Sviridenko.
\newblock Approximation algorithms for the capacitated multi-item lot-sizing
  problem via flow-cover inequalities.
\newblock {\em Math. Oper. Res.}, 33(2):461--474, 2008.

\bibitem{LT}
Richard~J. Lipton and Robert~E. Tarjan.
\newblock A separator theorem for planar graphs.
\newblock {\em SIAM Journal on Applied Mathematics}, 36(2):177--189, 1979.

\bibitem{MM}
Avner Magen and Mohammad Moharrami.
\newblock Robust algorithms for max independent set on minor-free graphs based
  on the sherali-adams hierarchy.
\newblock In {\em APPROX}, pages 258--271, 2009.

\bibitem{MMV12}
Konstantin Makarychev, Yury Makarychev, and Aravindan Vijayaraghavan.
\newblock Approximation algorithms for semi-random partitioning problems.
\newblock In {\em Symposium on Theory of Computing, {STOC}}, pages 367--384,
  2012.

\bibitem{marx2007can}
D{\'a}niel Marx.
\newblock Can you beat treewidth?
\newblock In {\em Foundations of Computer Science, 2007. FOCS'07. 48th Annual
  IEEE Symposium on}, pages 169--179. IEEE, 2007.

\bibitem{Reed97}
Bruce~A Reed.
\newblock Tree width and tangles: A new connectivity measure and some
  applications.
\newblock In R.~A. Bailey, editor, {\em Surveys in Combinatorics, 1997}, pages
  87--162. Cambridge University Press, 1997.
\newblock Cambridge Books Online.

\bibitem{RobertsonS95b}
Neil Robertson and Paul~D. Seymour.
\newblock Graph minors .xiii. the disjoint paths problem.
\newblock {\em J. Comb. Theory, Ser. {B}}, 63(1):65--110, 1995.

\bibitem{Seymour}
Paul~D. Seymour.
\newblock Packing directed circuits fractionally.
\newblock {\em Combinatorica}, 15(2):281--288, 1995.

\bibitem{SpielmanTeng}
Daniel~A. Spielman and Shang{-}Hua Teng.
\newblock Smoothed analysis of algorithms: Why the simplex algorithm usually
  takes polynomial time.
\newblock {\em J. {ACM}}, 51(3):385--463, 2004.

\bibitem{WAPL}
Yu~Wu, Per Austrin, Toniann Pitassi, and David Liu.
\newblock Inapproximability of treewidth and related problems.
\newblock {\em J. Artif. Intell. Res. {(JAIR)}}, 49:569--600, 2014.

\end{thebibliography}

\end{document}